\crefname{enumi}{item}{items}
\crefname{figure}{Figure}{Figures}
\DeclareMathOperator{\al}{alph}
\DeclareMathOperator{\ar}{ar}
\DeclareMathOperator{\dist}{dist}
\DeclareMathOperator{\firstPosArch}{firstInArch}
\DeclareMathOperator{\isSubseq}{isSubseq}
\DeclareMathOperator{\lastpos}{last}
\DeclareMathOperator{\lastPosArch}{lastInArch}
\DeclareMathOperator{\Leq}{Leq}
\DeclareMathOperator{\Lex}{lexSmall}
\DeclareMathOperator{\mas}{MAS}
\DeclareMathOperator{\minArch}{minArch}
\DeclareMathOperator{\nextArray}{sameNext}
\DeclareMathOperator{\prevArray}{samePrev}
\DeclareMathOperator{\nextpos}{next}
\DeclareMathOperator{\rest}{r}
\DeclareMathOperator{\sas}{SAS}
\DeclareMathOperator{\sasRange}{sasRange}
\DeclareMathOperator{\sortedLast}{sortedLast}
\DeclareMathOperator{\masExtend}{masExt}
\DeclareMathOperator{\depth}{depth}
\DeclareMathOperator{\LA}{LA}
\DeclareMathOperator{\levelAncestor}{levelAncestor}
\DeclareMathOperator{\startSAS}{startSAS}
\DeclareMathOperator{\RMQ}{RMQ}
\DeclareMathOperator{\argmax}{argmax}
\DeclareMathOperator{\llo}{llo}
  \def\nth#1{#1$^{\text{th}}$}
  \newcommand{\len}[1]{|#1|}
\begin{document}

\title{Absent Subsequences in Words}

\author{Maria Kosche,  Tore Koß\thanks{Address for correspondence: Institute for Computer Science,
	                      Georg-August University G\"ottingen, Germany},  Florin Manea, Stefan Siemer
    \\
	Institute for Computer Science\\
	Georg-August University G\"ottingen, Germany\\
    \{maria.kosche,tore.koss,florin.manea,stefan.siemer\}@cs.uni-goettingen.de
    }
	
\maketitle

\runninghead{M. Kosche et al.}{Absent Subsequences in Words}

\begin{abstract}
An absent factor of a string $w$ is a string $u$ which does not occur as a contiguous substring (a.k.a. factor) inside $w$. We extend this well-studied notion and define absent subsequences: a string $u$ is an absent subsequence of a string $w$ if $u$ does not occur as subsequence (a.k.a. scattered factor) inside $w$. Of particular interest to us are minimal absent subsequences, i.e., absent subsequences whose every subsequence is not absent, and shortest absent subsequences, i.e., absent subsequences of minimal length. We show a series of combinatorial and algorithmic results regarding these two notions. For instance: we give combinatorial characterisations of the sets of minimal and, respectively, shortest absent subsequences in a word, as well as compact representations of these sets; we show how we can test efficiently if a string is a shortest or minimal absent subsequence in a word, and we give efficient algorithms computing the lexicographically smallest absent subsequence of each kind; also, we show how a data structure for answering shortest absent subsequence-queries for the factors of a given string can be efficiently computed.
\end{abstract}

\begin{keywords}
Absent subsequence \and Arch-factorization \and Stringology \and 	Subsequence \and Subsequence- Universality.
\end{keywords}

\section{Introduction}
A word $u$ is a subsequence (also called scattered factor or subword) of a string $w$ if there exist (possibly empty) strings $v_1, \ldots, v_{\ell+1}$ and $u_1, \ldots, u_\ell$ such that $u = u_1 \ldots u_\ell$ and $w = v_1 u_1 \ldots v_\ell u_\ell v_{\ell+1}$. In other words, $u$ can be obtained from $w$ by removing some of its letters.

The study of the relationship between words and their subsequences has been a central topic in combinatorics on words and string algorithms, as well as in language and automata theory (see, e.g.,
the chapter {\em Subwords} in \cite[Chapter 6]{Loth97} for an overview of the fundamental aspects of this topic). Subsequences appear in many areas of theoretical computer science, such as logic of automata theory~\cite{HalfonSZ17,KarandikarKS15,CSLKarandikarS,journals/lmcs/KarandikarS19,Kuske20,KuskeZ19,simonPhD,Simon72,Zetzsche16}, combinatorics on words~\cite{FreydenbergerGK15,Rigo19,LeroyRS17a,Mat04,RigoS15,Salomaa05,Seki12}, as well as algorithms \cite{DBLP:journals/tcs/Baeza-Yates91,DBLP:conf/fsttcs/BringmannC18,BringmannK18,Maier:1978,Wagner:1974}. From a practical point of view, subsequences are generally used to model corrupted or lossy representations of an original string, and appear, for instance, in applications related to formal verification, see \cite{HalfonSZ17,Zetzsche16} and the references therein, or in bioinformatics-related problems, see~\cite{sankoff}. \looseness=-1

In most investigations related to subsequences, comparing the sets of subsequences of two different strings is usually a central task. In particular, Imre Simon defined and studied (see \cite{Loth97,simonPhD,Simon72}) the relation $\sim_k$ (now called the Simon's Congruence) between strings having exactly the same set of subsequences of length at most $k$ (see, e.g., \cite{journals/lmcs/KarandikarS19} as well as the surveys \cite{Pin2004,Pin2019} and the references therein for the theory developed around $\sim_k$ and its applications). In particular, $\sim_k$ is a well-studied relation in the area of string algorithms. The problems of deciding whether two given strings are $\sim_k$-equivalent, for a given $k$, and to find the largest $k$ such that two given strings are $\sim_k$-equivalent (and their applications) were heavily investigated in the literature, see, e.g., \cite{DBLP:journals/jda/CrochemoreMT03,KufMFCS,Hebrard1991,garelCPM,SimonWords,DBLP:conf/wia/Tronicek02} and the references therein. Last year, optimal solutions were given for both these problems \cite{Barker2020,mfcs2020}. Two concepts seemed to play an important role in all these investigations: on the one hand, the notion of distinguishing word, i.e., the shortest subsequence present in one string and {\em absent} from the other. On the other hand, the notion of universality index of a string \cite{Barker2020,DayFKKMS21}, i.e., the largest $k$ such that the string contains as subsequences all possible strings of length at most $k$; that is, the length of the shortest subsequence {\em absent} from that string, minus $1$. \looseness=-1

Motivated by these two concepts and the role they play, we study in this paper the set of {\em absent subsequences} of a string $w$, i.e., the set of strings which {\em are not} subsequences of $w$. As such, our investigation is also strongly related to the study of {\em missing factors} (or missing words, MAWs) in strings, where the focus is on the set of strings which are not substrings (or factors) of $w$. The literature on the respective topic ranges from many very practical applications of this concept \cite{Barton2014,Chairungsee2012,Charalampopoulos2018,Crochemore2000,Pratas2020,Silva2015}
to deep theoretical results of combinatorial \cite{SolonCPM1,Crochemore1998,Fici2019,Fici2006,Fici2019a,Nakashima2020,Mignosi2002} or algorithmic nature \cite{Ayad2019,SolonCPM2,Barton2014,Barton2016,Charalampopoulos2018a,Crochemore2020,Fujishige2016}.
Absent subsequences are also related to the well-studied notion of patterns avoided by permutations, see for instance \cite{Kitaev11}, with the main difference being that a permutation is essentially a word whose letters are pairwise distinct.
\looseness=-1

Moreover, absent subsequences of a string (denoted by $w$ in the following) seem to naturally occur in many practical scenarios, potentially relevant in the context of reachability and avoidability problems. \looseness=-1

On the one hand, assume that $w$ is some string (or stream) we observe, which may represent e.g. the trace of some computation or, in a totally different framework, the DNA-sequence describing some gene. In this framework, absent subsequences correspond to sequences of letters avoided by the string $w$. As such, they can be an avoided sequence of events in the observed trace or an avoided scattered sequence of nucleotides in the given gene. Understanding the set of absent subsequences of the respective string, and in particular its extremal elements with respect to some ordering, as well as being able to quickly retrieve its elements and process them efficiently seems useful to us. \looseness=-1

\eject
On the other hand, when considering problems whose input is a set of strings, one could be interested in the case when the respective input can be compactly represented as the set of absent subsequences (potentially with some additional combinatorial properties, which make this set finite) of a given string. Clearly, one would then be interested in processing the given string and representing its set of absent subsequences by some compact data structure which further would allow querying it efficiently.\looseness=-1

In this context, our paper is focused on two particular classes of absent subsequences: {\em minimal absent subsequences} ($\mas$ for short), i.e., absent subsequences whose every subsequence is not absent, and {\em shortest absent subsequences} ($\sas$ for short), i.e., absent subsequences of minimal length. In \cref{combinatorics}, we show a series of novel combinatorial results: we give precise characterizations of the set of  minimal absent subsequences and shortest absent subsequences occurring in a word, as well as examples of words $w$ having an exponential number (w.r.t. the length of $w$) of  minimal absent subsequences and shortest absent subsequences, respectively. We also identify, for a given number $k$, a class of words having a maximal number of $\sas$, among all words whose $\sas$ have length $k$. \looseness=-1

We continue with a series of algorithmic results in \cref{sec:algorithms}. We first show a series of simple algorithms, useful to test efficiently if a string is a shortest or minimal absent subsequence in a word. Motivated by the existence of  words with exponentially large sets of  minimal absent subsequences and shortest absent subsequences, our main contributions show, in \cref{sasRepresentation,masRepresentation}, how to construct compact representations of these sets. These representations are fundamental to obtaining efficient algorithms querying the set of $\sas$ and $\mas$ of a word, and searching for such absent subsequences with certain properties or efficiently enumerating them. These results are based on the combinatorial characterizations of the respective sets combined with an involved machinery of data structures, which we introduce gradually for the sake of readability.
In \cref{sasRepresentation}, we show another main result of our paper,
where, for a given word $w$,
we construct in linear time a data structure for efficiently answering queries asking for the shortest absent subsequences in the factors of $w$ (note that the same problem was recently approached in the case of missing factors \cite{SolonCPM2}). \looseness=-1

The techniques used to obtain these results are a combination of combinatorics on words results with efficient data structures and algorithmic techniques. %For space reasons, the detailed proofs for our results are given in the Appendix.
%For the same reasons, some further combinatorial results are left for the extended version of this paper, which will be soon made available on Arxiv \cite{ArxivVersion}. \looseness=-1

\section{Basic definitions}

Let $\mathbb{N}$ be the set of natural numbers, including $0$.
For $m, n \in \mathbb{N}$,
we let $[m:n] = \{m, m+1, \ldots, n\}$.
An alphabet $\Sigma$ is a nonempty finite set of symbols called {\em letters}.
A {\em string (also called word)} is a finite sequence of letters from $\Sigma$,
thus an element of the free monoid $\Sigma^\ast$.
For the rest of the paper,
we assume that the strings we work with are over an alphabet $\Sigma=\{1,2,\ldots,\sigma\}$. \looseness=-1

Let $\Sigma^+ = \Sigma^\ast \setminus \{\varepsilon\}$,
where $\varepsilon$ is the empty string.
The {\em length} of a string $w \in \Sigma^\ast$ is denoted by $\len w$.
%Let $\Sigma^k$ be the set of all strings from $\Sigma^\ast$ of length exactly $k$.
The \nth{$i$} letter of $w \in \Sigma^\ast$ is denoted by  $w[i]$,
for $i \in [1:\len w]$.
For $m, n \in \mathbb{N}$,
we let $w[m:n] = w[m] w[m+1] \ldots w[n]$,
$\len w_a = |\{i \in [1:\len w] \mid w[i] = a \}|$. A string $u=w[m:n]$ is a {\em factor} of $w$, and we have $w = xuy$ for some $x,y \in \Sigma^\ast$.
If $x = \varepsilon$ (resp. $y = \varepsilon$),
$u$ is called a  {\em prefix} (resp. {\em suffix}) of $w$.
Let $\al(w) = \{x \in \Sigma \mid \len w_x > 0 \}$ be the smallest subset $S \subset \Sigma$ such that $w \in S^\ast$.
We can now introduce the notion of subsequence.

\begin{definition}
	We call $u$ a subsequence of length $k$ of $w$,
	where $\len w = n$,
	if there exist positions $1 \leq i_1 < i_2 < \ldots < i_k \leq n$,
	such that $u = w[i_1] w[i_2] \cdots w[i_k]$.
%	Let $\Subseq(w)$ be the set of all subsequences of $w$
%	and define $\Subseq_k(w) = \Subseq(w) \cap \Sigma^k$,
%	the set of subsequences of $w$ of length $k$.
\end{definition}

%Equivalently, $u = u_1 \ldots u_\ell $ is a subsequence of $w$
%if there exist
%$v_1, \ldots,  v_{\ell + 1} \in \Sigma^\ast$
%such that $w = v_1 u_1 \ldots v_\ell u_\ell v_{\ell+1}$.
%For $k\in\mathbb{N}$, $\Subseq_{\leq k}(1,w)$ is called the full $k$-spectrum of $w$.

We recall the notion of $k$-universality of a string as presented in \cite{Barker2020}.

\begin{definition}
	\label{def:k-universal}
%	\begin{enumerate}
%		\item
		We call a word $w$ $k$-universal if any string $v$ of length $\len v \le k$  over $\al(w)$ appears as a subsequence of $w$.
%		\item
For a word $w $,
		we define its universality index $\iota(w)$ to be the largest integer $k$
		such that $w$ is $k$-universal.
\end{definition}

If $\iota(w) = k$,
then $w$ is $\ell$-universal for all $\ell \leq k$.
Note that the universality index of a word $w$ is always defined w.r.t. the alphabet of the word $w$. For instance, $w = 01210$ is $1$-universal (as it contains all words of length $1$ over $\{0,1,2\}$
but would not be $1$-universal if we consider an extended alphabet $\{0,1,2,3\}$. The fact that the universality index is computed w.r.t. the alphabet of $w$ also means that every word is at least $1$-universal. Note that in our results we either investigate the properties of a given word $w$ or we show algorithms working on some input word $w$. In this context, the universality of the factors of $w$ and other words we construct is defined w.r.t. $\al(w)$. See \cite{Barker2020,DayFKKMS21} for a detailed discussion on this.

We recall the arch factorisation, introduced by Hebrard \cite{Hebrard1991}.

\begin{definition}\label{archfact}
	For $w \in \Sigma^\ast$, with $\Sigma=\al(w)$,
	the {\em arch factorisation} of $w$ is defined as $w = \ar_w(1) \cdots \ar_w(\iota(w)) \rest(w)$
	where for all $i \in [1:\iota(w)]$ the last letter of $\ar_{w}(i)$ occurs exactly once in $\ar_w(i)$,
	each arch $\ar_w(i)$ is $1$-universal,
	and $alph(\rest(w)) \subsetneq \Sigma$.
	The words $\ar_w(i)$ are called {\em arches} of $w$,
	$\rest(w)$ is called the {\em rest}. \looseness=-1
	
	Let $m(w) =\ar_w(1)[|\ar_w(1)|] \cdots \ar_w(k)[|\ar_w(k)|]$ be the word containing the unique last letters of each arch.	
\end{definition}

Note that every word has a unique arch factorisation and by definition each arch $\ar_w(i)$ from a word $w$ is $1$-universal.
By an abuse of notation,
we can write $i \in \ar_w(\ell)$ if $i$ is a natural number
such that $|\ar_w(1) \cdots \ar_w{(\ell-1)}| < i \leq |\ar_w(1) \cdots \ar_w{(\ell)}|$,
i.e., $i$ is a position of $w$ contained in the \nth{$\ell$} arch of $w$.

The main concepts discussed in this paper are the following.
\begin{definition} A word $v$ is an absent subsequence of $w$ if $v$ is not a subsequence of $w$.
An absent subsequence $v$ of $w$ is a minimal absent subsequence (for short, $\mas$) of $w$ if every proper subsequence of $v$ is a subsequence of $w$.
We will denote the set of all $\mas$ of $w$ by $\mas(w)$.
An absent subsequence $v$ of $w$ is a shortest absent subsequence (for short, $\sas$) of $w$ if $|v|\le |v'|$ for any other absent subsequence $v'$ of $w$.
We will denote the set of all $\sas$ of $w$ by $\sas(w)$.
\end{definition}

Note that any $\sas$ of $w$ has length $\iota(w)+1$ and $v$ is an $\mas$ of $w$ if and only if $v$ is absent and every subsequence of $v$ of length $|v|-1$ is a subsequence of $w$.

\section{Combinatorial properties of SAS and MAS}\label{combinatorics}

We begin with a presentation of several combinatorial properties of the $\mas$ and $\sas$.
Let us first take a closer look at $\mas$.

If $v=v[1]\cdots v[m+1]$ is an $\mas$ of $w$ then $v[1]\cdots v[m]$ is a subsequence of $w$. Hence, we can go left-to-right through $w$ and greedily choose positions $1 \le i_1 < \ldots < i_m\le n = |w|$ such that $v[1]\cdots v[m] = w[i_1]\cdots w[i_m]$ and $i_\ell$ is the leftmost occurrence of $w[i_\ell]$ in $w[i_{\ell-1}+1:n]$ (as described in \cref{alg:isSubseq} in \cref{sec:algorithms}).
Because $v$ itself is absent, $v[m+1]$ cannot occur in the suffix of $w$ starting at $i_{m}+1$. Furthermore, we know that $v[1]\cdots v[m-1]v[m+1]$ is a subsequence of $w$. Hence, $v[m+1]$ occurs in the suffix of $w$ starting at $i_{m-1}+1$. We deduce $v[m+1]\in \al(w[i_{m-1}+1:i_{m}])\setminus \al(w[i_{m}+1:n])$.
This argument is illustrated in \cref{fig:mas-intervals}
and can be applied inductively to deduce $v[k]\in \al(w[i_{k-2}+1:i_{k-1}])\setminus \al(w[i_{k-1}+1:i_{k}-1])$ for all $k\neq 1$. The choice of $v[1]\in \al(w)$ is arbitrary. More details are given in the proof to the following theorem.
For notational reasons we introduce $i_0=0$ and $i_{m+1}=n+1$.

\begin{figure}
	\centering
	\includegraphics{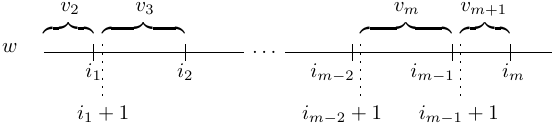}
	\caption{Illustration of positions and intervals inside word $w$}
	\label{fig:mas-intervals}\vspace*{-2mm}
\end{figure}

\begin{theorem}\label[theorem]{thm:mas}
Let $v,w\in \Sigma^\ast,~|v|=m+1$ and $|w|=n$, then $v$ is an $\mas$ of $w$ if and only if there are positions $0=i_0<i_1<\ldots <i_m <i_{m+1}= n+1$ such that all of the following conditions are satisfied.
\begin{enumerate}[label=(\roman*)]
\itemsep=0.9pt
	\item $v=w[i_1]\cdots w[i_m]v[m+1]$\label{thm:mas:abs}
	\item $v[1]\notin \al(w[1:i_1-1])$\label{thm:mas:v1}
	\item $v[k]\notin\al(w[i_{k-1}+1:i_k-1])$ for all $k\in[2: m+1]$\label{thm:mas:notin}	
	\item $v[k]\in \al(w[i_{k-2}+1:i_{k-1}])$ for all $k\in[2: m+1]$\label{thm:mas:in}
\end{enumerate}
\end{theorem}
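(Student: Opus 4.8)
The plan is to prove both implications by analysing the \emph{greedy} left-to-right embedding of the prefix $v[1:m]$ into $w$, exactly the embedding set up in the discussion preceding the statement (match $v[\ell]$ at the leftmost position of $w$ strictly after the match of $v[\ell-1]$). Throughout I will use the reformulation of minimality recorded after the definition of $\mas$: namely, $v$ is an $\mas$ of $w$ iff $v$ is absent and every word obtained by deleting a single letter of $v$ is a subsequence of $w$. I will also rely on the standard property of the greedy embedding: it succeeds iff the word is a subsequence, and it minimises every matched position; concretely, if $v[1:j]$ embeds greedily ending at position $p$, then $v[1:j]x$ is a subsequence iff the letter $x$ occurs in $w[p+1:n]$. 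With $i_0=0$ and $i_{m+1}=n+1$ fixed, the whole statement is really a translation of ``greedy embedding of the prefix'' (conditions~\ref{thm:mas:v1} and~\ref{thm:mas:notin} for $k\le m$), ``absence'' (condition~\ref{thm:mas:notin} for $k=m+1$), and ``minimality'' (condition~\ref{thm:mas:in}).

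For the forward direction, assume $v$ is an $\mas$. Deleting $v[m+1]$ shows that $v[1:m]$ is a subsequence, so its greedy embedding yields positions $i_1<\dots<i_m$; condition~\ref{thm:mas:abs} is then immediate, and conditions~\ref{thm:mas:v1} and~\ref{thm:mas:notin} (for $k\le m$) are precisely the statement that each $i_k$ is the leftmost occurrence of $v[k]$ after $i_{k-1}$. Since $v$ is absent while $v[1:m]$ embeds greedily ending at $i_m$, the letter $v[m+1]$ cannot occur in $w[i_m+1:n]$, giving condition~\ref{thm:mas:notin} for $k=m+1$. The delicate point is condition~\ref{thm:mas:in}, which I obtain by contradiction: suppose $v[k]\notin\al(w[i_{k-2}+1:i_{k-1}])$. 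Together with condition~\ref{thm:mas:notin} for $k$, the letter $v[k]$ then occurs nowhere in $w[i_{k-2}+1:i_k-1]$. Now the greedy embedding of $v[1]\cdots v[k-2]v[k]\cdots v[m+1]$ (which succeeds, being a subsequence by minimality after deleting $v[k-1]$) matches its unchanged prefix at $i_1,\dots,i_{k-2}$ and is then forced to place $v[k]$ at a position $\ge i_k$; propagating through the leftmost positions $i_{k+1},\dots,i_m$ pushes the match of each $v[\ell]$ to a position $\ge i_\ell$, so $v[m+1]$ is matched in $w[i_m+1:n]$, contradicting condition~\ref{thm:mas:notin} for $k=m+1$. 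Hence condition~\ref{thm:mas:in} holds.

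For the converse, assume positions satisfying \ref{thm:mas:abs}--\ref{thm:mas:in} are given. Conditions~\ref{thm:mas:v1} and~\ref{thm:mas:notin} (for $k\le m$) say that $i_1<\dots<i_m$ is exactly the greedy embedding of $v[1:m]$, so $v[1:m]$ is a subsequence ending greedily at $i_m$; condition~\ref{thm:mas:notin} for $k=m+1$ then gives $v[m+1]\notin\al(w[i_m+1:n])$, so $v$ is absent by the greedy property. For minimality I check that each single-letter deletion is present. Deleting $v[m+1]$ leaves $v[1:m]$, already embedded. Deleting $v[k]$ with $k\le m$, I keep the matches $i_1,\dots,i_{k-1}$ of the untouched prefix and embed each remaining letter $v[\ell]$, for $\ell=k+1,\dots,m+1$, one window earlier: by condition~\ref{thm:mas:in}, $v[\ell]\in\al(w[i_{\ell-2}+1:i_{\ell-1}])$, so $v[\ell]$ can be placed inside $[i_{\ell-2}+1:i_{\ell-1}]$. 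These windows are consecutive, pairwise disjoint, and all lie after $i_{k-1}$, so the chosen positions are strictly increasing; thus the shortened word is a subsequence. This establishes minimality and completes the proof.

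I expect the main obstacle to be the forward implication for condition~\ref{thm:mas:in}: it is the only place where minimality, the leftmost structure of the greedy embedding, condition~\ref{thm:mas:notin}, and the absence of $v$ must all be combined at once, and the contradiction hinges on carefully tracking that once $v[k]$ is forced to a position $\ge i_k$ the subsequent greedy matches are pushed to $\ge i_{k+1},\dots,\ge i_m$. All remaining steps are routine consequences of the leftmost-embedding characterisation of subsequences.
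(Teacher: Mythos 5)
Your proof is correct and takes essentially the same route as the paper's: both directions hinge on the greedy leftmost embedding of $v[1:m]$, with conditions (i)--(iii) read off from that embedding, absence obtained from (iii) at $k=m+1$, and minimality in the converse established via the same disjoint windows $w[i_{\ell-2}+1:i_{\ell-1}]$ supplied by (iv). The only cosmetic difference is your forward proof of (iv): you argue by contradiction, propagating the greedy matches of $v[1]\cdots v[k-2]v[k]\cdots v[m+1]$ past $i_k,\ldots,i_m$ until absence is violated, whereas the paper folds the same content into two direct observations (that $v[\ell:m+1]$ is absent in $w[i_{\ell-1}+1:n]$ while $v[\ell+1:m+1]$ is present there) — mathematically the identical argument.
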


\begin{proof}
{\crefname{enumi}{condition}{conditions}
Let $v$ be an $\mas$ of $w$ of length $|v|=m+1$.
By definition,
$v[1:m]$ is a subsequence of $w$,
hence we can choose positions $i_1,\ldots i_m$
such that $v[\ell]=w[i_\ell]$, for all $\ell \in [1:m]$.
We do this greedily,
that is,
we choose $i_1$ to be the leftmost occurrence of $v[1]$ in $w$,
$i_2$ be the leftmost occurrence of $v[2]$ in $w[i_1+1:n]$ and so on.
In the end,
$v[m+1]$ cannot occur in $w[i_m+1:n]$,
because $v$ is no subsequence of $w$.
Now, the positions $i_1, \ldots, i_m$ and $i_{m+1}=n+1$ clearly satisfy \cref{thm:mas:abs,thm:mas:v1,thm:mas:notin}.

\medskip
We show first that $k=m+1$ satisfies \cref{thm:mas:in}:
$v[1:m-1]v[m+1]$ is a subsequence of $w$,
hence $v[m+1]$ occurs in $w[i_{m-1}+1:n]$.
By \cref{thm:mas:notin} $v[m+1]$ does not occur in $w[i_m+1:n]$ hence
\cref{thm:mas:in} is true for $k=m+1$.
Now let $2 \leq \ell < m+1$, we make two observations:
\begin{enumerate}
\itemsep=0.9pt
	\item $v[\ell:m+1]$ is absent in $w[i_{\ell-1}+1:n]$
	because $v$ is absent in $w$,
	\label{one}
%	\item $v[\ell:m+1]$ is a subsequence of $w[i_{\ell-2}+1:n]$
%	because $v[1:\ell-2]v[\ell:m+1]$ is a subsequence of $w$.
%	\label{two}
	\item $v[\ell+1:m+1]$ is a subsequence of $w[i_{\ell-1}+1:n]$
	because $v$ is a $\mas$ of $w$.
	\label{thr}
\end{enumerate}}
Combining both observations yields that $v[\ell:m+1]$ is a subsequence of $w[i_{\ell-2}+1:n]$ if and only if $v[\ell]$ occurs in $w[i_{\ell-2}+1:i_{\ell-1}]$. Since $v$ is an $\mas$ of $w,~v[1:\ell-2]v[\ell:m+1]$ is a subsequence of $w$, hence $v[\ell:m+1]$ is a subsequence of $w[i_{\ell-2}+1:n]$ and so $v[\ell]\in\al(w[i_{\ell-2}+1:i_{\ell-1}])$.

%Using \cref{two}, we conclude that $v[\ell]\in \al(w[i_{\ell-2}+1:i_{\ell-1}])$.
\medskip
{\crefname{enumi}{condition}{conditions}
Now let $i_0=0,~i_{m+1}=n+1$ and $i_1,\ldots, i_{m}$ be positions of $w$ satisfying \crefrange{thm:mas:abs}{thm:mas:in}.
From \cref{thm:mas:abs},
we have that $v = w[i_1]\cdots w[i_m]v[m+1]$.
We claim that $v$ is an $\mas$ of $w$.
Firstly, we divide $w$ into $w=w_1\cdots w_m \rest(w)$ where $w_k = w[i_{k-1}+1:i_k]$ and $\rest(w) = w[i_m+1:n]$.
By \cref{thm:mas:notin},
$v[m+1]$ doesn't occur in $\rest(w)$,
hence $v$ is absent in $w$.

\medskip
For the minimality,
we notice that by \cref{thm:mas:in},
$v[k]$ occurs in $w_{k-1}$.
Therefore, if we choose an arbitrary $\ell\le m+1$ and delete $v[\ell]$,
we have $v[1:\ell-1]=w[i_1]\cdots w[i_{\ell-1}]$ is a subsequence of $w_1\cdots w_{\ell-1}$,
and $v[\ell+1:m+1]$ is a subsequence of $w_\ell\cdots w_m$,
hence $v[1:\ell-1]v[\ell+1:m+1]$ is a subsequence of $w$.
The conclusion follows.}
\end{proof}

{\Crefname{enumi}{Property}{Properties}
\Cref{thm:mas:abs,thm:mas:v1,thm:mas:notin} (the latter for $k\le m$ only) are satisfied if we choose the positions $i_1,\ldots,i_m$ greedily, as described in the beginning of this section.}

\medskip
By \cref{thm:mas}, we have no restriction on the first letter of an $\mas$ and indeed we can find an $\mas$ starting with an arbitrary letter.

\begin{remark}\label[remark]{xton}
For every $x\in\al(w),~x^{|w|_x+1}$ is an $\mas$ of $w$, hence, for every choice of $x\in\al(w)$, we can find an $\mas$ $v$ starting with $x$.
\end{remark}

Using \cref{thm:mas} we can now determine the whole set of $\mas$ of a word $w$. This will be formalized later in \cref{thm:masRep2}. For now, we just give an example.

\begin{example}\label[example]{ex:mas}
	Let $w = 0011 \in \{0,1\}^\ast$ and we want to construct $v$, an $\mas$ of $w$.
	We start by choosing $v[1] = 0$.
	Then $i_1 = 1$ and by \cref{thm:mas:in} $v[2] \in \al(w[1:1]) = \{0\}$, so $i_2 = 2$ (by \cref{thm:mas:notin}).
	Again by \cref{thm:mas:in},
	we have $v[3] \in \al(w[2:2]) = \{0\}$.
	The letter $v[3]$ does not occur in $w[i_2+1:n]$,
	hence, $i_3 = n+1$, and $v = 0^3$ is an $\mas$ of $w$.
	If we let $v[1] = 1$,
	we have $i_1=3$.
	By \cref{thm:mas:in}, we have $v[2] \in \al(w[1:3]) = \{0,1\}$.
	If we choose $v[2] = 1$, we obtain $v = 1^3$ with an argument analogous to the first case.
	So let us choose $v[2] = 0$.
	Then $i_2 = n+1$, and $v=10$ is an $\mas$ of $w$.
	\Cref{thm:mas} claims $\mas(w)=\{0^3,10,1^3\}$
	and indeed $10$ is the only absent sequence of length $2$,
	and every word of length $\ge 3$ is either not absent ($001$, $011$ and $0011$)
	or contains $10,~0^3$ or $1^3$ as a subsequence.
\end{example}

From this example also follows that not every $\mas$ is an $\sas$. The converse is necessarily true. So, for any $\sas~v$ of $w$ we have $|v|=\iota(w)+1$, and we can find positions $1\le i_1<i_2<\ldots <i_{\iota(w)}\le n$ satisfying \cref{thm:mas}. The following theorem claims that every arch of $w$ (see \cref{archfact}) contains exactly one of these positions.

\begin{theorem}\label[theorem]{lem:sas}
Let $w=\ar_w(1)\cdots \ar_w(\iota(w))\rest(w)$ as in \cref{archfact}. Then, $v$ is an $\sas$ of $w$ if and only if there are positions $i_0=0,$ $i_\ell\in\ar_w(\ell)$ for all $1\le\ell\le\iota(w)$, and $i_{{\iota(w)}+1} = n+1$ satisfying \cref{thm:mas}.
\end{theorem}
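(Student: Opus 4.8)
The plan is to prove both implications, dealing with the easy ``if'' direction first and reserving the combinatorial heart for the ``only if'' direction. For the ``if'' direction I would assume positions $i_0=0$, $i_\ell\in\ar_w(\ell)$ for $1\le\ell\le\iota(w)$, and $i_{\iota(w)+1}=n+1$ satisfying \cref{thm:mas}; then by the backward implication of that theorem $v$ is an $\mas$ of $w$, so $v$ is absent, and $|v|=\iota(w)+1$. Since $w$ is $\iota(w)$-universal, every word over $\al(w)$ of length at most $\iota(w)$ is a subsequence of $w$, so every absent subsequence has length at least $\iota(w)+1$. An absent subsequence of length exactly $\iota(w)+1$ is therefore an $\sas$, finishing this direction.

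For the ``only if'' direction I would take an $\sas$ $v$, so $|v|=\iota(w)+1=:m+1$, and note $v$ is in particular an $\mas$ (each proper subsequence has length at most $\iota(w)$, hence is present). Applying \cref{thm:mas} to the greedy (leftmost) matching of $v[1:m]$ produces positions $i_1<\dots<i_m$ satisfying all four conditions; it then remains only to show $i_\ell\in\ar_w(\ell)$ for every $\ell$. Writing $A_\ell=|\ar_w(1)\cdots\ar_w(\ell)|$ for the last position of the $\ell$-th arch, the goal becomes $A_{\ell-1}<i_\ell\le A_\ell$.

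I would first get the upper bounds $i_\ell\le A_\ell$ by a forward induction exploiting $1$-universality of arches: once $i_{\ell-1}\le A_{\ell-1}$, the letter $v[\ell]$ occurs somewhere inside $\ar_w(\ell)=w[A_{\ell-1}+1:A_\ell]$, which lies entirely to the right of $i_{\ell-1}$, so the leftmost occurrence of $v[\ell]$ after $i_{\ell-1}$ is at most $A_\ell$. The lower bounds $i_\ell>A_{\ell-1}$ I would establish by a downward induction from $\ell=m$. In the base case, if $i_m\le A_{m-1}$ then the whole $1$-universal arch $\ar_w(m)$ sits inside $w[i_m+1:n]$, forcing $\al(w[i_m+1:n])=\Sigma$ and contradicting the absence condition $v[m+1]\notin\al(w[i_m+1:n])$ (\cref{thm:mas:notin}). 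In the step, assuming $i_{\ell+1}\in\ar_w(\ell+1)$, if $i_\ell\le A_{\ell-1}$ then $\ar_w(\ell)$ again lies entirely to the right of $i_\ell$; since $v[\ell+1]$ occurs in this $1$-universal arch, its leftmost occurrence after $i_\ell$ would be at most $A_\ell$, yielding $i_{\ell+1}\le A_\ell$ and contradicting $i_{\ell+1}\in\ar_w(\ell+1)$.

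The routine parts are the ``if'' direction and the upper-bound induction. The main obstacle is the downward induction that rules out the greedy matching ``skipping'' an arch: the crucial observation is that a skipped arch, being $1$-universal and lying entirely past the current matched position, would let the greedy leftmost rule match the following letter already inside that arch — and, in the base case, a skipped final arch would make every letter of $\Sigma$ available in the remaining suffix, contradicting the absence of $v[m+1]$.
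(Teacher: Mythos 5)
Your proposal is correct and takes essentially the same route as the paper: both directions reduce to \cref{thm:mas} (every $\sas$ is an $\mas$, and an absent word of length $\iota(w)+1$ is automatically an $\sas$), and the localization $i_\ell\in\ar_w(\ell)$ rests on the same key fact that the greedy/leftmost matching combined with $1$-universality of the arches cannot skip an arch. The only difference is presentational: where the paper argues by counting (as many positions as arches, so it suffices that no arch is empty) followed by a terse contradiction, you establish the sandwich $A_{\ell-1}<i_\ell\le A_\ell$ explicitly via a forward and a downward induction, spelling out the same mechanism in more detail.
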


\begin{proof}
Let $v$ be an $\sas$ of $w$. Every $\sas$ is an $\mas$, so we can choose positions $i_1,\ldots i_{\iota(w)}$ as in \cref{thm:mas}. We claim that every arch of $w$ contains exactly one of these positions. Because the number of these positions equals the number of arches, it suffices to show that every arch contains at least one $i_\ell$. Assume there is an arch $\ar_w(k)$ not containing any of the $i$s. Because,{\crefname{enumi}{condition}{conditions}
 by \cref{thm:mas:v1,thm:mas:notin}} of \cref{thm:mas}, we choose positions greedily and every arch is $1$-universal this is only possible if $v$ is a subsequence of $\ar_w(1)\cdots \ar_w(k-1)$. This is a contradiction because $v$ is absent from $w$.

\medskip
For the converse implication, we have a word $v=w[i_1]\cdots w[i_{\iota(w)}]v[\iota(w)+1]$ which satisfies \cref{thm:mas}. Hence $v$ is an $\mas$ of length $|v|=\iota(w)+1$ and thus an $\sas$.
\end{proof}

A way to efficiently enumerate all $\sas$ in a word will be given later in \cref{thm:sasRep}. Here, we only give an example based on a less efficient, but more intuitive, strategy of identifying the $\sas$ of a word.

\begin{example}
Let $w=012121012$ %(formerly known as $bananaban$).
with $\iota(w)=2$, and the arch factorisation of $w$ is $w=012\cdot 1210\cdot 12$.
We construct $v$, an $\sas$ of $w$. By \cref{lem:sas}, we have $|v|=\iota(w)+1=3$ and by {\crefname{enumi}{condition}{conditions}\cref{thm:mas:in}} of \cref{thm:mas} the letter $v[3]$ does not occur in $\al(w[i_2:n])\supset \al(\rest(w))$, so $v[3]$ is not contained in $\al(r(w))$. Hence, $v[3]=0$ and its rightmost position in $\ar_w(2)$ is on position $7$. Therefore, $v[2]$ should not appear before position $7$ in $\ar_w(2)$ (as $v[1]$ appears in $\ar_w(1)$ for sure). So, $v[2] \notin\al(w[4:6])=\{1,2\}$ and $v[2]=0$. Ultimately, the rightmost occurrence of $v[2]$ in $\ar_w(1)$ is on position $1$, and we can  arbitrarily choose $v[1]\in\al(\ar_w(1))=\{0,1,2\}$.
We conclude that $\sas(w) = \{000, 100, 200\}$.
\end{example}

To better understand the properties of $\sas$ and $\mas$, we analyse some particular words. For the rest of this section, assume that $\sigma\geq 4$ is a large even natural constant.

\medskip
Firstly we let $v_1=1\cdot 2\cdots\sigma$ and $v_2=\sigma\cdot\sigma -1\cdots 1$. Then, for $k\in \mathbb N$, we define the words $A_{2k}= (v_1\cdot v_2)^{k}$, $A_{2k+1}= A_{2k}\cdot v_1$, and $B_k=v_1^k$. Each of these words has universality index $\iota(A_k)=\iota(B_k)=k$ and every arch has minimal length $|\ar_{A_k}(i)|=|\ar_{B_k}(i)|=|\Sigma|$ for all $k\in\mathbb N$ and $i\in [1:k]$. Furthermore the $\sas$ of $B_k$ are exactly the monotonically decreasing sequences of length $k+1$, whereas no $\sas$ of $A_k$ has a strictly monotonically increasing or decreasing factor of length $3$. We will devote the rest of this chapter to an analysis of the sets $\sas(w)$ and $\mas(w)$ for $w$ being either $A_k$ or $B_k$. It turns out that $B_k$ has a small (that is polynomial in $\len{B_k}$) amount of $\sas$ but the amount of $\mas$ is exponential (in $\len{B_k}$)  whereas every $\mas$ of $A_k$ is an $\sas$ and $A_k$ has the maximal amount of $\sas$ among all words $w$ with universality index $\iota(w)=k$. We start with the following result.\looseness=-1

\begin{proposition}\label[proposition]{prop:decreasing}
The word $B_k$ has a polynomial number of $\sas$ and exponentially (in the length of the word) more $\mas$ than $\sas$.
\end{proposition}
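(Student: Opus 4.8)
The plan is to reduce both halves to a single combinatorial invariant. For $v\in\Sigma^\ast$ let $d(v)=|\{\ell\in[1:|v|-1] : v[\ell]\ge v[\ell+1]\}|$ denote its number of weak descents. First I would prove the characterization: \emph{$v$ is a subsequence of $B_k$ if and only if $d(v)\le k-1$}. This follows by analysing the greedy left-to-right embedding used in \cref{thm:mas}. Since every arch of $B_k$ is the increasing word $1\cdot 2\cdots\sigma$, the letter $a$ sits at relative position $a$ inside each arch; hence after matching $v[\ell]$ the cursor stands at relative position $v[\ell]+1$, and matching $v[\ell+1]$ stays inside the current arch exactly when $v[\ell]<v[\ell+1]$ and moves to the next arch exactly when $v[\ell]\ge v[\ell+1]$. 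By induction the greedy embedding of $v$ needs precisely $1+d(v)$ arches, so $v$ is a subsequence of $B_k$ iff $1+d(v)\le k$.

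From this the $\sas$ count is immediate. An $\sas$ has length $\iota(B_k)+1=k+1$, hence exactly $k$ adjacent pairs; being absent it satisfies $d(v)\ge k$, which forces every adjacent pair to be a weak descent, i.e. $v$ is non-increasing. Conversely any non-increasing word of length $k+1$ has $d(v)=k$ and is absent. Thus $\sas(B_k)$ is exactly the set of non-increasing words of length $k+1$ over $\Sigma$, and its cardinality equals the number of size-$(k+1)$ multisets over $\sigma$ letters, namely $\binom{k+\sigma}{\sigma-1}=\Theta(k^{\sigma-1})$, which is polynomial in $\len{B_k}=k\sigma$.

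For the $\mas$ I would exhibit an exponential family. Combining the descent characterization with the reformulation recalled after the definition ($v$ is an $\mas$ iff $v$ is absent and every one-letter deletion of $v$ is present), we get that $v$ is an $\mas$ of $B_k$ iff $d(v)\ge k$ and $d(v\setminus\ell)\le k-1$ for every position $\ell$, where $v\setminus\ell$ is $v$ with its $\ell$-th letter removed. The key observation is that \emph{if every maximal block of equal letters in $v$ has length at least $2$, then deleting any single letter decreases $d$ by exactly $1$}: the deleted letter always has a block-neighbour equal to it, so the internal weak descent it forms is destroyed, while the relation across the block to the opposite side is preserved; a short case check over interior letters and the two endpoints confirms $\delta=-1$ in every case. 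Consequently every such $v$ with $d(v)=k$ is an $\mas$. I would then take $p=\lceil(k+1)/4\rceil$ and consider all words $1^{a_1}2^{b_1}\cdots 1^{a_p}2^{b_p}$ with all $a_i,b_i\in\{2,3\}$ and exactly $j=k-3p+1$ of these $2p$ exponents equal to $3$. A direct computation gives $d(v)=(\sum(a_i+b_i))-p-1=3p+j-1=k$, all blocks have length $\ge 2$, and distinct choices of the $3$-positions give distinct words; hence these are $\binom{2p}{j}=2^{\Omega(k)}$ pairwise distinct $\mas$. Since all of them use only the letters $1$ and $2$, the argument is valid for every $\sigma\ge 2$, and as $\len{B_k}=k\sigma$ this number is exponential in $\len{B_k}$. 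Comparing with the polynomial $\sas$ count yields exponentially more $\mas$ than $\sas$.

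The hard part will be the $\mas$ lower bound: one must produce words that are absent only \emph{barely} (exactly $k$ weak descents) yet in which every single letter is essential for absence. Isolating the clean sufficient condition ``all maximal blocks have length $\ge 2$'' and verifying that it forces every deletion to strictly decrease $d$ is the crux of the proof; once this is in place, the descent characterization, the multiset count for $\sas$, and the binomial count for the $\mas$ family are all routine.
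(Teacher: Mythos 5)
Your proof is correct, but it takes a genuinely different route from the paper's. The paper derives the $\sas$ half from \cref{lem:sas} (an $\sas$ picks one position per arch, so the $\sas$ of $B_k$ are exactly the non-increasing words of length $k+1$) and then bounds the count $S=\sum_{t=1}^\sigma\binom{\sigma}{t}\binom{k}{t-1}$ from above via Chebyshev's sum inequality and from below by $(k/\sigma)^{\sigma-1}$; for the $\mas$ half it exhibits, for $\sigma$ a large even constant and $k=(2\sigma+2)m$, the family $\bigl(\Pi_{i\in[1:m]}(\sigma\cdot u_i\cdot 1\cdot\sigma)\bigr)\sigma$ with $u_i$ decreasing, asserting minimality as ``immediate'' and counting $S'\ge 2^{\sigma m}$. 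You instead establish a single invariant --- $v$ embeds in $B_k$ iff its weak-descent count satisfies $d(v)\le k-1$, since the greedy embedding consumes a new arch exactly at each weak descent --- from which both halves fall out: the $\sas$ are the non-increasing words of length $k+1$, counted exactly as $\binom{k+\sigma}{\sigma-1}$ (which by Vandermonde equals the paper's sum $S$, a nice consistency check), and minimality of your $\mas$ family follows from the clean criterion that in a word all of whose maximal blocks of equal letters have length $\ge 2$, any single deletion leaves the block structure and hence the inter-block descent pattern intact and kills exactly one internal descent, so $d$ drops by exactly $1$. Your arithmetic checks out: the words $1^{a_1}2^{b_1}\cdots 1^{a_p}2^{b_p}$ with exponents in $\{2,3\}$ and $j$ threes satisfy $d(v)=3p+j-1=k$ for $j=k-3p+1$, giving $\binom{2p}{j}=2^{\Omega(k)}$ pairwise distinct $\mas$ of length $k+p+1>k+1$. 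Your route buys an exact $\sas$ count, an explicitly verified (rather than asserted) minimality argument, and validity for every $\sigma\ge 2$ over the subalphabet $\{1,2\}$, where the paper needs $\sigma$ large and even with $(2\sigma+2)\mid k$; the paper's route buys explicit two-sided polynomial bounds on $S$ as a function of $\sigma$. One small point to state explicitly: with $p=\lceil(k+1)/4\rceil$ you need $0\le j\le 2p$, and $j\ge 0$ fails for a handful of small $k$ (roughly $k\le 7$); since the claim is asymptotic this is harmless --- the paper's own proof imposes comparable parameter restrictions --- but the restriction should be mentioned.
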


\begin{proof}
Recall our assumption that $\sigma\geq 4$ is a large even constant, and let $k=(2\sigma+2)m$ where $m$ is a natural number.
Let $v$ be an $\sas$ of $B_k$.
From \cref{lem:sas},
it follows that $v[i]\geq v[i+1]$,
for all $i\in [1:|v|]$.
So, $v$ is a decreasing sequence of numbers from $\{1,\ldots,\sigma\}$, of length $k+1$.
Let us count these sequences.
We first count the number of sequences whose elements are contained in a fixed set $T\subset \{1,\ldots,\sigma\}$ of size $\len T=t$.
For each $T$ there are exactly $\binom{k}{t-1}$ such sequences
(as we essentially need to choose,
for each value except the smallest one, of the $t$ values,
the rightmost element of the sequence which takes that certain value and $k+1$ is necessarily reserved for the smallest value).
Moreover, we can choose the $t$ values taken by the numbers in the sequence in $\binom{\sigma}{t}$ ways.
So, the number of decreasing sequences of numbers from $\{1,\ldots,\sigma\}$, of length $k+1$,
is $S_{k+1}=\sum_{t=1}^\sigma \left( \binom{\sigma}{t} \cdot \binom{k}{t-1} \right).$
\noindent
Since $t-1 < \sigma\le \frac{k}{2}$ we have $\binom{k}{t-1}\le \binom{k}{t}$ and therefore $$S_{k+1}\leq \sum_{t=1}^\sigma \left( \binom{\sigma}{t} \cdot \binom{k}{t} \right) \leq \sum_{t=1}^\sigma \left ( \left (\frac{e \sigma }{t}\right)^t \cdot \left (\frac{ek}{t}\right)^t \right)=\sum_{t=1}^\sigma \frac{(e^2 \sigma k)^t}{t^{2t}} .$$
\noindent
Applying Chebyshev's sum inequality we obtain that the following holds: $$S_{k+1}\leq \frac{1}{\sigma}\cdot \left (\sum_{t=1}^\sigma (e^2 \sigma k)^t\right) \cdot \left (\sum_{t=1}^\sigma \frac{1}{t^{2t}}\right) .$$
\noindent
Thus, \vspace*{-1mm}
$$S_{k+1}\leq \frac{1}{\sigma} \cdot \frac{(e^2 \sigma k)^{\sigma+1}-1}{(e^2 \sigma k)-1} \cdot \frac{\pi^2}{6} \leq 4\cdot \frac{(e^2\sigma k)^\sigma}{\sigma} .$$
\noindent
As $\sigma$ is assumed to be constant, it follows that $S_{k+1}$ is bounded by a polynomial from above. We can also prove a lower bound for $S_{k+1}$. We have that:
\begin{align*}
S_{k+1}&\geq \sum_{t=1}^\sigma \left ( \left (\frac{ \sigma }{t}\right)^t \cdot \left (\frac{k}{t-1}\right)^{t-1} \right)\geq \sum_{t=1}^\sigma \left ( \left (\frac{ \sigma }{t}\right)^{t-1} \cdot \left (\frac{k}{t}\right)^{t-1}\right)\\ &=\sum_{t=1}^\sigma \left(\frac{(\sigma k)}{t^{2}}\right)^{t-1} \geq \sum_{t=1}^\sigma \left(\frac{(\sigma k)}{\sigma^{2}}\right)^{t-1}.
\end{align*}
Thus, \vspace*{-1mm}
$$S_{k+1}  \geq \frac{( k/\sigma)^{\sigma} -1}{(k/\sigma) -1}\geq (k/\sigma)^{\sigma-1}.$$
\noindent
A family of words from $\mas(B_k)$, which are not in $\sas(B_k)$, are the words $$v=\left( \Pi_{r\in [1:m]} (\sigma \cdot u_r \cdot 1\cdot \sigma)\right) \sigma,$$ where $\Pi_{r\in [1:m]}$ is used to denote concatenation of the terms and, for all $r$, $u_r$ is a decreasing sequence of length $2\sigma$ of numbers from $\{1,\ldots,\sigma\}$ not ending in $\sigma$, that is $u_i$ is not the constant sequence consisting in $2\sigma$ occurrences of the letter $\sigma$.

\medskip
We will show that $v$ is a $\mas$ of $B_k=v_1^k$ by showing that it satisfies \cref{thm:mas}:
since the $u_r$ are decreasing and $v_1$ is strictly increasing, the only factor of length $2$ of $v$ which is a subsequence of $v_1$ is $1\cdot\sigma$. Hence when searching for positions $i_1,i_2,\ldots i_{\len v -1}$ satisfying \cref{thm:mas:abs} positions $i_j$ and $i_{j+1}$ can be found inside the same arch of $B_k$ if and only if $v[j:j+1] = 1\cdot\sigma$. So, for $1\leq\ell\leq m$ the prefix $\Pi_{r\in[1:\ell]}(\sigma\cdot u_r\cdot 1\cdot\sigma)$ of $v$ is a subsequence of the prefix $v_1^{(\len{u_r}+2)\ell}$ of $B_k$ but absent from any shorter prefix of $B_k$. Since $B_k = v_1^k =v_1^{(2\sigma+2)m}= v_1^{(\len{u_r}+2)m}$, $v$ is absent from $B_k$ and its prefix of length $\len v-1$ is a subsequence of $B_k$. Furthermore, as every letter of $\Sigma$ occurs exactly once in $v_1$, there is only one possibility to choose indices $i_1,i_2,\ldots, i_{\len v-1}$ satisfying \cref{thm:mas:abs} of \cref{thm:mas}, which is, $i_1=\sigma$ and consecutive positions $i_j,i_{j+1}$ are uniquely chosen from consecutive arches of $B_k$ unless $v[j]=1$ and $v[j+1]=\sigma$ in which case $i_j$ and $i_{j+1}$ are uniquely chosen from the same arch. We will show that these indices satisfy \cref{thm:mas:v1,thm:mas:in,thm:mas:notin} of \cref{thm:mas} as well. Firstly $v[1]=\sigma$ does not occur in $B_k[1:i_1-1]=1\cdot 2\cdots\sigma-1$. Secondly recall that $v[j]>v[j-1]$ if and only if $v[j-1]=1$ and $v[j]=\sigma$, thus
$$v[j]\notin\al(B_k[i_{j-1}+1:i_j-1])=
\begin{cases}
\Sigma\setminus [v[j]:v[{j-1}]],\quad
&\text{if }v[j]\leq v[j-1]\\
\{2,3,\ldots, \sigma-1\},
&\text{if }v[j]> v[j-1].
\end{cases}$$
For the last item of \cref{thm:mas} note that if $v[j-1]\leq v[j-2]$ then either $v[j]\leq v[j-1]$ or $v[j]=\sigma >u_i[\len{u_i}]=v[j-2]$. Otherwise, if $v[j-1]> v[j-2]$, then $v[j]=\sigma$. Thus
$$v[j]\in\al(B_k[i_{j-2}+1:i_{j-1}])=
\begin{cases}
\Sigma\setminus [v[{j-1}]+1:v[{j-2}]],\quad
&\text{if }v[j-1]\leq v[j-2]\\
\{2,3,\ldots, \sigma\},
&\text{if }v[j-1]> v[j-2].
\end{cases}$$
%It is immediate that the words of this family are in $\mas(B_k)$: if we delete any symbol, we get a subsequence of $B_k$, while they are not subsequences of $B_k$.

Based on the results shown above about $\sas$ and the assumption that $\sigma \geq 4$, the size $S'$ of the family of words $v$ as defined above fulfils:
$$S'\geq (S_{2\sigma}-1)^m \geq \left(\left(\frac{2\sigma-1}{\sigma}\right)^{\sigma-1}-1\right)^m\geq 4^m.$$

Given that $m$ is not a constant, but a parameter which grows proportional to $|B_k|$, the statement of the theorem follows.
\end{proof}

Based on \cref{thm:mas,lem:sas}, it follows that the $\sas$ of $B_k$ correspond to decreasing sequences of length $k+1$ of numbers from $\{1,\ldots,\sigma\}$. A family of words included in $\mas(B_k)$, which is disjoint from $\sas(B_k)$, consists of the words $v=\left( \Pi_{t=1,m} (\sigma \cdot u_i \cdot 1\cdot \sigma)\right) \sigma$, where $u_i$ is a decreasing sequence of length $2\sigma$ of numbers from $\{1,\ldots,\sigma\}$, for all $i$. By counting arguments, we obtain the result stated above. However, there are words whose sets of $\mas$ and $\sas$ coincide.\looseness=-1
\begin{proposition}\label[proposition]{propAlternateone}
$\mas(A_k)=\sas(A_k)$.
\end{proposition}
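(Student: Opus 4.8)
The plan is to establish the only nontrivial inclusion, $\mas(A_k)\subseteq\sas(A_k)$; the reverse inclusion holds for every word, since every $\sas$ is an $\mas$. Because every absent subsequence of $A_k$ has length at least $\iota(A_k)+1=k+1$, it suffices to prove that every $v\in\mas(A_k)$ has length exactly $k+1$: such a $v$ is then an absent subsequence of minimal length, i.e.\ an $\sas$. So I would fix $v\in\mas(A_k)$ and let $0=i_0<i_1<\dots<i_m<i_{m+1}=|A_k|+1$ be the greedy (leftmost) positions witnessing it via \cref{thm:mas}. Recalling that $A_k$ consists of exactly $k$ arches and empty rest, and invoking \cref{lem:sas}, the whole statement reduces to the single claim that these positions land one per arch, i.e.\ $i_\ell\in\ar_{A_k}(\ell)$ for every $\ell\in[1:m]$; this forces $m=k$ and hence $|v|=k+1$.

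First I would record two structural facts about $A_k$. Its arches alternate between the increasing block $1\,2\cdots\sigma$ (odd arches) and the decreasing block $\sigma\cdots2\,1$ (even arches), and each letter occurs exactly once per arch. Consequently: (a) since every arch is $1$-universal, the greedy match of $v$ advances by at most one arch at each step, so $i_\ell\in\ar_{A_k}(a)$ with $a\le\ell$; and (b) at step $\ell$ the match stays inside the current arch precisely when $v[\ell]$ lies on the not-yet-scanned side of $v[\ell-1]$, that is, $v[\ell]>v[\ell-1]$ in an increasing arch and $v[\ell]<v[\ell-1]$ in a decreasing arch. By (a) the content of the claim is exactly that the match never \emph{stays}.

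The core is an induction on $\ell$ proving $i_\ell\in\ar_{A_k}(\ell)$. The base cases are $\ell=1$ (clear) and $\ell=2$, where \cref{thm:mas:in} reads $v[2]\in\al(A_k[1:i_1])=\{1,\dots,v[1]\}$, i.e.\ $v[2]\le v[1]$, which by (b) forbids staying in the increasing first arch. For the step $\ell\ge3$ I would assume $i_{\ell-2}\in\ar_{A_k}(\ell-2)$ and $i_{\ell-1}\in\ar_{A_k}(\ell-1)$ and suppose, for contradiction, that $i_\ell$ stays in arch $\ell-1$. Take the case where arch $\ell-1$ is increasing (so arch $\ell-2$ is decreasing; the opposite parity is symmetric). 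Reading off the window alphabet in \cref{thm:mas:in} gives $\al(A_k[i_{\ell-2}+1:i_{\ell-1}])=\{1,\dots,\max(v[\ell-2]-1,\,v[\ell-1])\}$, while staying forces $v[\ell]>v[\ell-1]$ by (b); hence $\max(v[\ell-2]-1,v[\ell-1])\ge v[\ell]>v[\ell-1]$, which yields $v[\ell-2]>v[\ell-1]$. On the other hand, the inductive hypothesis says the match \emph{moved out} of the decreasing arch $\ell-2$ at step $\ell-1$, which by (b) requires $v[\ell-1]\ge v[\ell-2]$ — contradicting $v[\ell-2]>v[\ell-1]$. The induction then gives $i_\ell\in\ar_{A_k}(\ell)$ for all matched $\ell$, so after $k$ letters all arches are exhausted; as the rest is empty there is no room for a further matched position, forcing $m\le k$, hence $m=k$ and $v\in\sas(A_k)$ by \cref{lem:sas}.

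I expect the main obstacle to be exactly this inductive step. The subtlety is that \cref{thm:mas:in} at index $\ell$ by itself does \emph{not} forbid the match from staying within an arch — when $v[\ell-2]$ is extreme this is a priori possible — and the contradiction only materialises once one couples it with the move that necessarily occurred at the previous step, crucially exploiting the up/down alternation of consecutive arches. The one routine-but-delicate point is getting the boundary bookkeeping for the window alphabets right in both arch parities.
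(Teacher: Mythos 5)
Your proof is correct and takes essentially the same route as the paper's: the paper likewise follows the greedy embedding, locates the first arch containing two consecutive matched positions, and uses the up/down alternation of consecutive arches to derive contradictory monotonicity constraints, concluding that an $\mas$ of $A_k$ occupies exactly one position per arch and hence has length $k+1$. The only presentational difference is that you access minimality through the window condition \cref{thm:mas:in} of \cref{thm:mas} inside an induction, whereas the paper argues directly that deleting the offending letter $v[\ell]$ leaves an absent word, contradicting minimality --- these yield the same pair of inequalities, shifted by one index.
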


\begin{proof} Assume that there exist an $\mas$ $v$ of $A_k$ which is not an $\sas$ (i.e., $|v|=m$ is strictly greater than $k+1$). We go left-to-right through $A_k$ and greedily choose positions $ i_1 < \ldots < i_m\le k\sigma $ such that $v[1]\cdots v[m] = A_k[i_1]\cdots A_k[i_m]$ and $i_j$ is the leftmost occurrence of $A_k[i_j]$ in $A_k[i_{j-1}+1:k\sigma]$. Let $a_j=A_k[i_j]$ for $j\in [1:m]$. As $m-1>k$, there exists an arch $\ar_{A_k}(\ell)$ of $A_k$ which contains at least two positions $i_{\ell}$ and $i_{\ell+1}$, and all archs  $\ar_{A_k}(\ell')$ with $\ell'<\ell$ contain exactly one such position. We can assume without loss of generality that $\ell$ is odd (the even case is identical). If $\ell>1$ then it is easy to see that $a_{\ell-1}\leq a_\ell$ (otherwise $a_{\ell}$ would be in the same arch as $a_{\ell-1}$) and $a_\ell<a_{\ell+1}$ (as $a_{\ell+1}$ is to the right of $a_{\ell}$ in an odd arch). So, $a_{\ell-1}<a_{\ell+1}$. It follows that when removing $v[\ell]$ from $v$, we obtain a word which is not a subsequence of $A_k$. If $\ell=1$ we get the same conclusion, trivially.
This is a contradiction to the fact that $v$ is an $\mas$. So our assumption was wrong, and $v$ must be an $\sas$.
\end{proof}

%%%%%%\input{new_proposition.tex}
% !TeX spellcheck = en_GB

We can actually show a slightly stronger property: any word which has no proper $\sas$ is a permutation of $A_k$ for a suitable $k$. For the rest of this chapter $\pi$ always denotes a morphism of $\Sigma^\ast$ that acts as a permutation on the letters of $\Sigma$, that is, $\pi$ is a bijection of $\Sigma$ into itself and satisfies $\pi(uv)=\pi(u)\pi(v)$ for all $u,v\in\Sigma^\ast$. For simplicity we call $\pi$ a permutation (of $\Sigma$). In contrast we denote any permutation of integers, e.g. positions of a word, by $\theta$. For illustration let $\pi$ and $\theta$ both denote the permutation that maps $1$ to $2$ and vice versa, then $\pi(1322) = 2311$ and $\theta(1322) = 3122$.

\vspace{1.8mm}%\smallskip
The first observation follows directly.

\begin{lemma}\label[lemma]{perm}
If $\pi$ is a permutation of $\Sigma$ and $w\in\Sigma^*$ then $\pi(v)$ is an $\sas$ of $\pi(w)$ if and only if $v$ is an $\sas$ of $w$.
Hence $|\sas(w)|$ = $|\sas(\pi(w))|$.
\end{lemma}

\begin{proposition}
Let $w\in\Sigma^*$, then $|\sas(w)| = |\mas(w)|$ if and only if there is a permutation $\pi$ such that $\pi(w) = A_{\iota(w)}$.
\end{proposition}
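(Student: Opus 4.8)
The plan is to prove both directions of the equivalence, using \Cref{propAlternateone} to handle the ``if'' direction and a structural analysis of the word to handle the harder ``only if'' direction. For the easy direction, suppose there is a permutation $\pi$ with $\pi(w) = A_{\iota(w)}$. Since applying a permutation of the alphabet to $w$ induces a bijection between $\sas(w)$ and $\sas(\pi(w))$ and likewise between $\mas(w)$ and $\mas(\pi(w))$ (because $v$ is a subsequence of $w$ if and only if $\pi(v)$ is a subsequence of $\pi(w)$, and minimality and length are preserved under this bijection), we get $|\sas(w)| = |\sas(A_{\iota(w)})|$ and $|\mas(w)| = |\mas(A_{\iota(w)})|$. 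By \Cref{propAlternateone} these two cardinalities are equal, so $|\sas(w)| = |\mas(w)|$.

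For the ``only if'' direction, I would work directly with the arch factorisation $w = \ar_w(1)\cdots\ar_w(\iota(w))\rest(w)$. The hypothesis $|\sas(w)| = |\mas(w)|$ together with the inclusion $\sas(w)\subseteq\mas(w)$ forces $\mas(w) = \sas(w)$, i.e.\ $w$ has no proper $\mas$ (no $\mas$ of length greater than $\iota(w)+1$). The first step is to show that every arch must have minimal length $\sigma = |\al(w)|$ and that the rest is empty. If some arch $\ar_w(\ell)$ had length greater than $\sigma$, then some letter $a$ would occur at least twice in that arch; using the greedy positions of \Cref{thm:mas} I would construct a $\mas$ that uses two positions inside $\ar_w(\ell)$, contradicting $\mas(w) = \sas(w)$ via \Cref{lem:sas} (which says every $\sas$ uses exactly one position per arch). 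Similarly, a nonempty rest would let me lengthen an $\sas$ into a proper $\mas$. Hence each arch is a permutation of $\Sigma$ and $\rest(w) = \varepsilon$, so $w$ is a concatenation of $\iota(w)$ permutations of $\Sigma$.

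The core of the argument is then to show that consecutive arches must be reverses of each other, i.e.\ that $w$ has exactly the alternating structure of $A_{\iota(w)}$ up to a single global relabelling $\pi$. The idea, mirroring the contradiction in \Cref{propAlternateone}, is that if within two consecutive arches the relative order of letters is not fully reversed, then there exist three positions (one in an arch, two straddling the boundary into the next) whose letters are monotone in a way that lets me delete a middle letter of a candidate $\mas$ and still keep it absent — producing a proper $\mas$ and again contradicting $\mas(w)=\sas(w)$. Formally I would fix the permutation $\pi$ so that $\pi$ maps the first arch $\ar_w(1)$ to the identity word $1\cdot 2\cdots\sigma$; then I would show by induction on $\ell$ that $\pi(\ar_w(\ell))$ equals $1\cdot 2\cdots\sigma$ for odd $\ell$ and $\sigma\cdot(\sigma-1)\cdots 1$ for even $\ell$, which is exactly the defining pattern of $A_{\iota(w)}$. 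This yields $\pi(w) = A_{\iota(w)}$.

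The main obstacle I expect is the induction step forcing the exact alternating pattern: showing that ``no proper $\mas$'' is strong enough to pin down each arch to be precisely the reverse of its predecessor, rather than merely some permutation whose order is ``sufficiently decreasing''. The delicate point is to carefully track, via the greedy positions of \Cref{thm:mas}, which monotone triples across an arch boundary certify the existence of a proper $\mas$; one must rule out every permutation of an arch except the full reversal, and argue that any deviation creates a monotone pattern of length $3$ that can be exploited exactly as in the proof of \Cref{propAlternateone}. Handling the parity of the arch index (odd versus even) and the boundary cases (the first arch, where $v[1]$ is free, and the interaction with the now-empty rest) will require some care, but these follow the same template as the case analysis already carried out in \Cref{propAlternateone}.
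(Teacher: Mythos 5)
Your overall skeleton tracks the paper's proof closely: the if-direction via \cref{propAlternateone} and permutation-invariance is exactly the paper's argument, and the first structural step is recoverable, though the paper does it more simply than you propose — it counts occurrences rather than arguing about positions inside arches: if $|w|_a>\iota(w)$ for some letter $a$, then $a^{|w|_a+1}$ (\cref{xton}) is an $\mas$ of length at least $\iota(w)+2$, hence a proper $\mas$; since a letter repeated inside an arch or occurring in $\rest(w)$ forces $|w|_a\geq\iota(w)+1$, every letter occurs exactly once per arch and $\rest(w)=\varepsilon$. You do not need \cref{lem:sas} or greedy positions here, and your version as stated ("construct a $\mas$ that uses two positions inside $\ar_w(\ell)$") leaves the witness unspecified.

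The genuine gap is in the induction step, and the mechanism you describe is logically inverted. In \cref{propAlternateone}, the monotone triple shows that deleting the middle letter of a \emph{hypothetical} proper $\mas$ leaves the word absent — which refutes minimality and so \emph{destroys} candidate $\mas$'s; that argument can never \emph{produce} one. Your phrase "delete a middle letter of a candidate $\mas$ and still keep it absent --- producing a proper $\mas$" certifies exactly the opposite of what you need: to contradict $\mas(w)=\sas(w)$ you must exhibit an absent word of length greater than $\iota(w)+1$ \emph{all} of whose one-letter deletions are subsequences of $w$. The paper supplies an explicit two-letter witness: with $k$ the minimal (w.l.o.g.\ even) index where $\pi(\ar_w(k))\neq v_2$, pick positions $i<j$ in $\ar_w(k)$ with $\pi(w[i])<\pi(w[j])$ and set $u=w[i]^k\,w[j]^{\iota(w)-k+2}$, of length $\iota(w)+2$. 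Absence follows because each letter occurs exactly once per arch, so the greedy match of $w[i]^k$ ends at position $i$ and only $\iota(w)-k+1$ copies of $w[j]$ remain to its right; minimality reduces, since $u$ has only two distinct letters, to checking just $w[i]^{k-1}w[j]^{\iota(w)-k+2}$ and $w[i]^k w[j]^{\iota(w)-k+1}$, where the first embeds precisely because minimality of $k$ gives $\pi(\ar_w(k-1))=v_1$, placing $w[i]$ before $w[j]$ inside arch $k-1$. Your sketch never identifies such a witness — you correctly flag this step as your "main obstacle", but it is exactly where the proof lives, and the propAlternateone-style triple argument does not fill it.
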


\begin{proof}
The if-part is immediate: $|\sas(w)|=|\sas(\pi(w))|=|\mas(\pi(w))|=|\mas(w)|$ by \cref{propAlternateone}.

\medskip
For the converse let $w\in\Sigma^*$ satisfying $|\sas(w)| = |\mas(w)|$. Firstly assume there is $a\in\Sigma$ such that $|w|_a > \iota(w)$. Then $a^{|w|_a+1}$ is an $\mas$ of $w$ but not an $\sas$ of $w$, a contradiction. So every letter occurs exactly $\iota(w)$ times in $w$ and therefore exactly once in each arch $\ar_w(\ell)$.
Hence~there is a permutation $\pi$ such that $\pi(\ar_w(1)) = v_1$.
Now assume there is an integer $\ell\leq\iota(w)$ such that $\pi(\ar_w(1)\ar_w(2)\cdots\ar_w(\ell)) \neq A_{\ell}$. Let $k$ be minimal with this property.
By definition of $\pi$ we have $k>1$ and w.l.o.g we can assume $k$ to be even.
Then $\pi(\ar_w(k))\neq v_2 = \sigma\cdots 21$, that is we can find two positions $i<j\in\ar_w(k)$ such that $\pi(w[i])<\pi(w[j])$.
We claim that $u =w[i]^kw[j]^{\iota(w)-k+2}$ is an $\mas$ but not an $\sas$ of $w$.
We start by checking if $u_1=w[i]^{k-1}w[j]^{\iota(w)-k+2}$ and $u_2=w[i]^kw[j]^{\iota(w)-k+1}$ are subsequences of $w$.
We have $\pi(\ar_w(k-1))=v_1=12\cdots\sigma$ and $\pi(w[i])<\pi(w[j])$ hence $w[i]w[j]$ occurs in $\ar_w(k-1)$.
Then $w[i]^{k-1}w[j]$ is a subsequence of $\ar_w(1)\cdots\ar_w(k-1)$ and $w[j]^{\iota(w)-k+1}$ is a subsequence of $\ar_w(k)\cdots\ar_w(\iota(w))$.
Similarly $w[i]^kw[j]$ is a subsequence of $\ar_w(1)\cdots\ar_w(k)$ and $w[j]^{\iota(w)-k}$ is a subsequence of $\ar_w(k+1)\cdots\ar_w(\iota(w))$.
So $u_1,u_2$ are subsequences of $w$ and $u$ is an $\mas$ of $w$.
We have $\len u = \iota(w)+2$ and so $u$ is not an $\sas$ of $w$, a contradiction.
Hence $\pi(\ar_w(1)\!\cdots\ar_w(\ell))\!=\!A_\ell$ holds for all $\ell\!\leq\iota(w)$ and therefore $\pi(w)\!=\!A_{\iota(w)}$.
\end{proof}

In particular, one can show that the number of $\sas$ in the words $A_k$ is exponential in the length of the word. The main idea is to observe that an $\sas$ in $A_k$ is a sequence $i_1,\ldots,i_{k+1}$ of numbers from $\{1,\ldots,\sigma\}$, such that $i_{2\ell-1}\geq i_{2\ell}$ for all $\ell\in [\, 1:\lceil k/2 \rceil \, ]$ and $i_{2\ell} \leq i_{2\ell+1}$ for all $\ell\in [\, 1:\lfloor k/2 \rfloor \, ]$. We then can estimate the number of such sequences of numbers and obtain the following result.
\begin{proposition}\label[proposition]{prop:alternate2}
The word $A_k$ has an exponential (in $|A_k|$) number of $\sas$.
\end{proposition}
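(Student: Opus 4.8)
The plan is to make precise the correspondence announced just before the statement and then to count. Recall that $A_k$ consists of $k$ arches, each of length exactly $\sigma$, where the odd-indexed arches equal $1\,2\cdots\sigma$ (increasing) and the even-indexed arches equal $\sigma\,(\sigma-1)\cdots 1$ (decreasing); in particular $\iota(A_k)=k$, the rest $\rest(A_k)=\varepsilon$, and every $\sas$ has length $k+1$ by \cref{lem:sas}. I claim that the $\sas$ of $A_k$ are \emph{exactly} the words $v=v[1]\cdots v[k+1]$ over $\{1,\ldots,\sigma\}$ satisfying $v[2\ell-1]\geq v[2\ell]$ for $\ell\in[1:\lceil k/2\rceil]$ and $v[2\ell]\leq v[2\ell+1]$ for $\ell\in[1:\lfloor k/2\rfloor]$, i.e. the ``zigzag'' sequences $v[1]\geq v[2]\leq v[3]\geq\cdots$.

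For the forward direction I would start from \cref{lem:sas}: greedily embedding $v[1:k]$ yields positions $i_\ell\in\ar_{A_k}(\ell)$ with $v[\ell]=A_k[i_\ell]$, exactly one per arch, and since every letter occurs once per arch the value $v[\ell]$ determines $i_\ell$. The crux is a local argument about how the value must change between consecutive arches so that the leftmost occurrence of $v[\ell+1]$ after $i_\ell$ lands in arch $\ell+1$ rather than staying inside arch $\ell$. In an increasing (odd) arch the letters occurring after the occurrence of $a=v[\ell]$ are exactly $a+1,\ldots,\sigma$, so $i_{\ell+1}\in\ar_{A_k}(\ell+1)$ forces $v[\ell+1]\leq v[\ell]$; symmetrically, in a decreasing (even) arch the letters after $a$ are $a-1,\ldots,1$, forcing $v[\ell+1]\geq v[\ell]$. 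The same case distinction applied to the last letter, which by \cref{thm:mas:notin} must not occur after $i_k$ in the final arch, yields the remaining inequality. For the converse, given a zigzag sequence I set $i_\ell$ to be the occurrence of $v[\ell]$ in arch $\ell$ and verify \crefrange{thm:mas:abs}{thm:mas:in} of \cref{thm:mas}: \cref{thm:mas:v1,thm:mas:notin} hold by the leftmost choice and the last-letter inequality, while \cref{thm:mas:in}, namely $v[\ell]\in\al(w[i_{\ell-2}+1:i_{\ell-1}])$, follows because the zigzag inequality between arches $\ell-1$ and $\ell$ forces $v[\ell]$ to occur at or before position $i_{\ell-1}$ within arch $\ell-1$.

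With the correspondence in hand the count is the easy part, and only a lower bound is required. I would single out the subfamily of zigzag sequences obtained by fixing $v[2\ell]=1$ at every even position; then every constraint $v[2\ell-1]\geq 1$ and $1\leq v[2\ell+1]$ is automatic, so the $\lceil(k+1)/2\rceil$ entries at odd positions may be chosen freely in $\{1,\ldots,\sigma\}$. This gives $\sigma^{\lceil(k+1)/2\rceil}$ pairwise distinct $\sas$ (distinct letter-sequences are distinct words, so no further injectivity argument is needed). Since $\sigma\geq 2$ is constant and $|A_k|=k\sigma$, this is at least $2^{\lceil(k+1)/2\rceil}$, which is exponential in $k$ and hence in $|A_k|$, proving the statement.

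The main obstacle is the correspondence rather than the counting: establishing the zigzag characterisation rigorously needs a careful case analysis on the parity (increasing versus decreasing) of each arch, and for the converse direction it requires checking \cref{thm:mas:in} of \cref{thm:mas} together with the boundary situations at the first arch, at the last letter, and for the empty rest. The counting step, by contrast, collapses to the one-line choice of the convenient subfamily above.
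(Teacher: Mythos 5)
Your proposal is correct, but it reaches the lower bound by a genuinely different route than the paper. The paper takes the zigzag characterisation for granted (it is only asserted in the sentence preceding the proposition) and invests all its effort in the counting: it sets up $S_{t,i}$, the number of admissible prefixes of length $t$ ending in letter $i$, derives the alternating recurrences $S_{t,i}=\sum_{j\ge i}S_{t-1,j}$ (even $t$) and $S_{t,i}=\sum_{j\le i}S_{t-1,j}$ (odd $t$), and proves $\sum_i S_{t,i}\ge(\sigma/2)^t$ by induction, under the standing assumption that $\sigma$ is a large even constant. You invert this allocation of effort: you prove the characterisation carefully (your parity case analysis on increasing versus decreasing arches, and your verification of \crefrange{thm:mas:abs}{thm:mas:in} of \cref{thm:mas} for the converse, are exactly what is needed, including the boundary checks at the first arch, the last letter, and $\rest(A_k)=\varepsilon$ --- note that only the converse direction, zigzag implies $\sas$, is actually needed for a lower bound), and then you replace the induction entirely by exhibiting the subfamily with all even positions pinned to $1$, which trivially yields $\sigma^{\lceil(k+1)/2\rceil}$ distinct $\sas$. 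What each approach buys: the paper's recurrence gives the quantitatively stronger bound $(\sigma/2)^k$ for large $\sigma$ (roughly $\sigma^k$ versus your $\sigma^{k/2}$, so it wins whenever $\sigma>4$), and the recurrence machinery is reusable for sharper estimates; your argument is far shorter, needs no induction, and is valid for every $\sigma\ge 2$, whereas the paper's bound $(\sigma/2)^k$ is vacuous at $\sigma=2$ and its proof explicitly assumes $\sigma$ large and even. Since the proposition only claims exponential growth and $|A_k|=k\sigma$ with $\sigma$ constant, your $2^{\lceil(k+1)/2\rceil}$ fully suffices, and your write-up has the additional merit of supplying the missing rigorous proof of the correspondence on which both arguments rest.
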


\begin{proof}
Again, recall our assumption that $\sigma$ is a large even constant. An $\sas$ in $A_k$ is a sequence $i_1,\ldots,i_{k+1}$ of numbers from $\{1,\ldots,\sigma\}$, such that $i_{2\ell-1}\geq i_{2\ell}$ for all $\ell\in [\, 1:\lceil k/2 \rceil \, ]$ and $i_{2\ell} \leq i_{2\ell+1}$ for all $\ell\in [\, 1:\lfloor k/2 \rfloor \, ]$. Let $S_{t,i}$ be the number of strings of length $t\leq k$ ending with letter $i\in \Sigma$, which can be a prefix of such a sequence (i.e., of an $\sas$ of $A_k$).
\eject

%\medskip
The first observations are the following:
\begin{itemize}
\itemsep=0.9pt
\item For even $t\geq 2$ we have that $S_{t,i}=\sum_{j\in [i:\sigma]} S_{t-1,j}$, and, therefore, $S_{t,1}=\sum_{i\in [1:\sigma]} S_{t-1,i}$. The sequence $S_{t,\sigma}, S_{t,\sigma-1} , S_{t,\sigma-2},\ldots, S_{t,1}$ is increasing.
\item For odd $t\geq 3$ we have that $S_{t,i}=\sum_{j\in [1:i]} S_{t-1,j}$, so $S_{t,\sigma}=\sum_{i\in [1:\sigma]} S_{t-1,i}$. The sequence $S_{t,1}, S_{t,2} , \ldots, S_{t,\sigma-1}, S_{t,\sigma}$ is increasing.
\end{itemize}

Clearly, $S_{1,i}=1$ for all $i\in \Sigma$.

We claim that for all $t$ we have that $\sum_{i\in [1:\sigma]} S_{t,i} \geq \left ( \frac{\sigma}{2}\right )^t.$ We will prove this by induction.

This clearly holds for $t=1$. Assume that it holds for $t\leq  q$ and we show that it holds for $t=q+1$.

First, assume that $q+1$ is odd. By the induction hypothesis, we have that  $S_{q+1,\sigma}=\sum_{i\in [1:\sigma]} S_{q,i}\geq \left ( \frac{\sigma}{2} \right )^{q}.$ Moreover, $S_{q+1,1}=\sum_{i\in [1:1]} S_{q,i}=S_{q,1}= \sum_{i\in [1:\sigma]} S_{q-1,i}\geq \left ( \frac{\sigma}{2} \right )^{q-1}.$

\medskip
Now, we have that:
$$\sum_{i\in [1:\sigma]} S_{q+1,i} = \sum_{j\in[1:\sigma/2]} \left(S_{q+1,j}+S_{q+1,\sigma+1-j}\right) = \sum_{j\in [1:\sigma/2]} \left ( \sum_{g\in [1:j]} S_{q,g} + \sum_{g\in [1:\sigma-j+1]} S_{q,g} \right ), $$
and, therefore,
$$\sum_{i\in [1:\sigma]} S_{q+1,i} = \sum_{j\in [1:\sigma/2]} \left ( S_{q,1} + \sum_{g\in [2:j]} S_{q,g} + \sum_{g\in [1:\sigma-j+1]} S_{q,g} \right ). $$

Now, as $S_{q,\sigma}, S_{q,\sigma-1} , S_{q,\sigma-2},\ldots, S_{q,1}$ is increasing, we get that:
$$\sum_{i\in [1:\sigma]} S_{q+1,i} \geq \sum_{j\in [1:\sigma/2]} \left ( S_{q,1} + \sum_{g\in [\sigma-j+2:\sigma]} S_{q,g} + \sum_{g\in [1:\sigma-j+1]} S_{q,g} \right ). $$
Therefore:
$$\sum_{i\in [1:\sigma]} S_{q+1,i} \geq \sum_{j\in [1:\sigma/2]} \left ( S_{q,1} + \sum_{g\in [1:\sigma]} S_{q,g} \right )=\frac{\sigma}{2} \left(S_{q,1} + \sum_{g\in [1:\sigma]} S_{q,g} \right ). $$

We can now apply the induction hypothesis, and obtain that
$$\sum_{i\in [1:\sigma]} S_{q+1,i} \geq \frac{\sigma}{2} \left(\left(\frac{\sigma}{2}\right )^{q-1} + \left(\frac{\sigma}{2}\right )^{q} \right )> \left(\frac{\sigma}{2}\right )^{q+1}.$$

An analogous argument works for the case when $q+1$ is even.

\noindent This concludes our induction proof.

Therefore, the number of strings of length $k$ which can be a prefix of a $\sas$ of $A_k$ is at least  $\left(\frac{\sigma}{2}\right )^{k}$. We note that this is not a tight bound, and a more careful analysis should definitely improve it.

So $|\sas(A_k)|\geq \left(\frac{\sigma}{2}\right )^{k}$, which, given that $\sigma$ is a constant, proves our statement: $|\sas(A_k)|\geq \left(\frac{\sigma}{2}\right )^{|A_k|/\sigma}.$
\end{proof}

The following few results formalise an additional insightful observation about the word $A_k$ and its set of $\sas$.
We will need the following initial remarks and lemmas.
\eject

\begin{lemma}\label[lemma]{lem:cat}
Let $w,w_1,w_2\in\Sigma^*$ such that $w=w_1w_2$, $\iota(w_2)\ge 1$, and $r(w_1)=\varepsilon$.
%$x\notin\al(r(w_1))$,  where $x$ denotes the last letter of the first arch of $w_2$.

Then $v[1]\cdots v[\iota(w)+1]$ is an $\sas$ of $w$ if and only if $v[1]\cdots v[\iota(w_1)+1]$ is an $\sas$ of $w_1$
and $v[\iota(w_1)+1]\cdots v[\iota(w)+1]$ is an $\sas$ of $w_2$.
\end{lemma}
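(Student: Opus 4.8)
The plan is to use \cref{lem:sas} to pass back and forth between an $\sas$ and a sequence of positions landing one per arch, exploiting that $\rest(w_1)=\varepsilon$ makes the arch factorisation of $w$ split exactly at the boundary between $w_1$ and $w_2$. Since $\iota$, $\ar$, $\rest$ are all taken with respect to $\al(w)$, the hypothesis $\rest(w_1)=\varepsilon$ says $w_1=\ar_{w_1}(1)\cdots\ar_{w_1}(p)$ ends precisely at an arch boundary of $w$; hence $\ar_w(\ell)=\ar_{w_1}(\ell)$ for $\ell\in[1:p]$, $\ar_w(p+\ell)=\ar_{w_2}(\ell)$ for $\ell\in[1:q]$, and $\rest(w)=\rest(w_2)$, where I write $p=\iota(w_1)$ and $q=\iota(w_2)$, so $\iota(w)=p+q$. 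The decomposition point is position $p+1$, which is shared: it is the last (absent) letter of the candidate $\sas$ of $w_1$ and simultaneously the first letter of the candidate $\sas$ of $w_2$.

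For the forward direction, I would assume $v\in\sas(w)$ and pick, via \cref{lem:sas}, positions $i_\ell\in\ar_w(\ell)$ satisfying \cref{thm:mas}. I keep $i_1,\dots,i_p$ as positions inside $w_1$ and set $j_\ell=i_{p+\ell}-|w_1|\in\ar_{w_2}(\ell)$ inside $w_2$, and claim that $i_1,\dots,i_p$ witness $v[1]\cdots v[p+1]\in\sas(w_1)$ while $j_1,\dots,j_q$ witness $v[p+1]\cdots v[p+q+1]\in\sas(w_2)$. Almost every requirement of \cref{thm:mas} for the two shorter words is a verbatim restriction of the corresponding requirement for $v$ in $w$, because the relevant intervals lie entirely inside $w_1$ or entirely inside $w_2$; in particular \cref{thm:mas:notin} for $v$ at $k=p+1$ simultaneously yields the last-letter condition for $w_1$ (that $v[p+1]$ avoids the suffix $w_1[i_p+1:|w_1|]$) and condition \cref{thm:mas:v1} for $w_2$ (that $v[p+1]$ avoids $w_2[1:j_1-1]$).

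The one step that does not inherit directly, and which I expect to be the main obstacle, is \cref{thm:mas:in} for the $w_2$-part at $k=2$, i.e.\ $v[p+2]\in\al(w_2[1:j_1])$. The corresponding condition for $v$ in $w$ only gives $v[p+2]\in\al\!\left(w_1[i_p+1:|w_1|]\,w_2[1:j_1]\right)$, an interval straddling the boundary, so $v[p+2]$ could a priori be supplied by the suffix of $w_1$ alone. I would rule this out as follows: if $v[p+2]$ did not occur in $w_2[1:j_1]$, then by \cref{thm:mas:notin} for $v$ at $k=p+2$ it also does not occur in $w_2[j_1+1:j_2-1]$, hence in particular not in the remainder of the first arch $\ar_{w_2}(1)$ after $j_1$; together these would say $v[p+2]$ is absent from the whole arch $\ar_{w_2}(1)$, contradicting its $1$-universality over $\al(w)$ (note $v[p+2]\in\al(w)$). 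This forces $v[p+2]\in\al(w_2[1:j_1])$, completing both sets of conditions, and \cref{lem:sas} then certifies both shorter words as $\sas$.

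For the converse I would run the argument in reverse. Given position sequences for $v[1]\cdots v[p+1]\in\sas(w_1)$ and $v[p+1]\cdots v[p+q+1]\in\sas(w_2)$ from \cref{lem:sas}, I concatenate them (shifting the $w_2$-positions by $|w_1|$) to obtain one position per arch of $w$. Conditions \cref{thm:mas:abs,thm:mas:v1,thm:mas:notin,thm:mas:in} for $v$ in $w$ then hold: those with intervals inside $w_1$ or inside $w_2$ are inherited, the boundary instance of \cref{thm:mas:notin} at $k=p+1$ is the conjunction of the $w_1$ last-letter condition and the $w_2$ condition \cref{thm:mas:v1}, and, crucially, the previously troublesome \cref{thm:mas:in} at $k=p+2$ is now handed to us directly as \cref{thm:mas:in} at $k=2$ for $w_2$, namely $v[p+2]\in\al(w_2[1:j_1])$, which is contained in $\al(w[i_p+1:i_{p+1}])$. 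Since $|v|=p+q+1=\iota(w)+1$ and the positions lie one per arch, \cref{lem:sas} yields $v\in\sas(w)$.
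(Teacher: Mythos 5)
Your proof is correct and takes essentially the same route as the paper's: both pass through the position characterisations of \cref{lem:sas} and \cref{thm:mas}, use $\rest(w_1)=\varepsilon$ to align the arch factorisations at the boundary, and settle the single non-inherited condition---\cref{thm:mas:in} for $v[\iota(w_1)+2]$ at the junction---by combining the $1$-universality of $\ar_{w_2}(1)$ with \cref{thm:mas:notin}, which you phrase contrapositively where the paper argues directly. The only difference is cosmetic: you spell out the converse direction that the paper dismisses as trivial.
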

\begin{proof}
%Let $x$ denote the last letter of the first arch of $w_2$. Firstly $x$ occurs exactly once, on the last position, in $\ar_{w_2}(1)$ hence $r(w_1)\ar_{w_2}(1)$ forms a complete arch in $w$ and the other
%
Note first that the arches of $w$ coincide with the arches of $w_1$ and $w_2$.
If $v\in\sas(w)$ and we greedily choose positions for $v[1]\cdots v[\iota(w_1)]$ in $w_1$, we find the same positions as if we greedily choose them in $w$.
As $r(w_1)=\varepsilon$, $v[1]\cdots v[\iota(w_1)+1]$ is an $\sas$ of $w_1$. Furthermore, let $i$ be the position of $v[\iota(w_1)+1]$ in $\ar_{w_2}(1)$. Now $v[\iota(w_1)+2]$ occurs in $\ar_{w_2}(1)[1:i]$ because it occurs in $\ar_{w_2}(1)$ and if it occurs to the right of $i$ then, by \cref{lem:sas}, $v$ is not an $\sas$ of $w$. Hence we can find positions for $v[\iota(w_1)+1]$ and $v[\iota(w_1)+2]$ in $w_2$ satisfying \cref{thm:mas}. For the remaining letters of $v$ we can find positions in $w_2$ satisfying \cref{thm:mas} because $w_2$ is a suffix of $w$ and we can find these positions in $w$. Hence $v[\iota(w_1)+1:\iota(w)+1]$ is an $\sas$ of $w_2$.

\medskip
The other implication is trivial.
\end{proof}

Generally, we apply \cref{lem:cat} for $w_1=\ar_w(1)\cdots\ar_w(\ell)$, $w_2=\ar_w(\ell+1)\cdots\ar_w(\iota(w))r(w)$, for some $\ell\le\iota(w)$.

We continue with recalling the so-called rearrangement inequality.
\begin{lemma}\label[lemma]{riem}
If $a_1,\ldots,a_n$ and $b_1,\ldots,b_n$ are both increasing (respectively, decreasing) sequences of natural numbers and $\theta$ is a permutation of $\{1,2,\ldots n\}$,
then $\sum_{i=1}^na_ib_{n+1-i}\le\sum_{i=1}^na_ib_{\theta(i)}\le \sum_{i=1}^na_ib_i$.
\end{lemma}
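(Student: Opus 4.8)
The plan is to prove both inequalities simultaneously by a single exchange (swapping) argument, treating the upper bound (maximality of the identity pairing) in detail and obtaining the lower bound (minimality of the reversal) from the symmetric version of the same computation. I will carry out the argument assuming $a$ and $b$ are both increasing; the decreasing case follows verbatim after reversing each inequality between $a$- or $b$-values, since the crucial sign computation below is unaffected, and the restriction to natural numbers plays no role beyond ensuring the sums are well defined.

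The heart of the proof is the following swap identity. Fix a permutation $\theta$ and suppose there are indices $i<j$ with $\theta(i)>\theta(j)$, i.e.\ an \emph{inversion}. Let $\theta'$ be obtained from $\theta$ by exchanging the values at $i$ and $j$, so that $\theta'(i)=\theta(j)$, $\theta'(j)=\theta(i)$, and $\theta'=\theta$ elsewhere. All terms of $\sum_k a_k b_{\theta(k)}$ and $\sum_k a_k b_{\theta'(k)}$ then agree except at positions $i$ and $j$, and a direct computation gives
$$\sum_k a_k b_{\theta'(k)} - \sum_k a_k b_{\theta(k)} = (a_i - a_j)(b_{\theta(i)} - b_{\theta(j)}).$$
Since $i<j$ and $a$ is increasing, $a_i-a_j\le 0$; since $\theta(i)>\theta(j)$ and $b$ is increasing, $b_{\theta(i)}-b_{\theta(j)}\ge 0$. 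Hence the right-hand side is $\le 0$, so removing an inversion does not decrease the sum.

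For the upper bound I would iterate: starting from an arbitrary $\theta$, repeatedly pick an inversion and perform the swap above. Each swap strictly decreases the number of inversions while not decreasing the sum, so after finitely many steps the process reaches the unique inversion-free permutation, the identity, whose value is $\sum_i a_i b_i$; this yields $\sum_i a_i b_{\theta(i)}\le\sum_i a_i b_i$. For the lower bound I would run the symmetric process: whenever there is a pair $i<j$ with $\theta(i)<\theta(j)$, swapping the two values \emph{creates} an inversion, and the same factorisation $(a_i-a_j)(b_{\theta(i)}-b_{\theta(j)})$ is now a product of two quantities of equal sign, hence $\ge 0$, so this swap does not increase the sum. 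Iterating drives $\theta$ to the reversal $\theta(i)=n+1-i$, giving $\sum_i a_i b_{n+1-i}\le\sum_i a_i b_{\theta(i)}$.

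The only point requiring care — and the nearest thing to an obstacle — is the termination of these iterations. This is handled cleanly by inducting on the number of inversions of $\theta$, which strictly decreases at each step of the upper-bound process and strictly increases toward its maximal value $\binom{n}{2}$ in the lower-bound process; finiteness is then immediate and no particular ordering of the swaps need be specified. I expect the argument to be entirely elementary, with the decreasing case demanding only the mirror-image bookkeeping of signs.
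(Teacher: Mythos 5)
The paper itself offers no proof of this lemma: it is simply recalled as the classical rearrangement inequality, so there is no internal argument to match, and your exchange argument with induction on the number of inversions is the standard textbook proof — the right strategy, with termination correctly handled by the inversion count. However, your central displayed identity carries a sign error that makes the written deduction internally inconsistent. The correct computation of the swap is
\[
\sum_k a_k b_{\theta'(k)} - \sum_k a_k b_{\theta(k)} \;=\; (a_i - a_j)\bigl(b_{\theta(j)} - b_{\theta(i)}\bigr),
\]
with the second factor reversed relative to what you wrote. As you have it, you derive that the right-hand side is $\le 0$, which would say the swap does not \emph{increase} the sum — yet you conclude that removing an inversion does not \emph{decrease} it, and your iteration toward the identity needs the latter. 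With the corrected sign, both factors are $\le 0$ when $i<j$, $\theta(i)>\theta(j)$ and the sequences are increasing, so the difference is $\ge 0$ and your stated conclusion is restored; the mirror-image slip occurs in your lower-bound paragraph and is repaired the same way.

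Everything else — the strict decrease (respectively increase) of the inversion count under each swap, the bound $\binom{n}{2}$, the identity and the reversal as the unique endpoints of the two processes, and the reduction of the decreasing case to the increasing one — is correct as stated. So this is a local algebra slip rather than a structural gap, but as literally written the sign analysis contradicts the very conclusion it is meant to support, and a referee would insist on the correction.
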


The following lemma is immediate.
\begin{lemma}\label[lemma]{sub}
If $w'$ is a subsequence of $w$ and $\iota(w)=\iota(w')$, then $\sas(w)\subset\sas(w')$.
\end{lemma}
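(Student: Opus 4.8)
The plan is to reduce the claim to two elementary facts: that every $\sas$ of a word $u$ has length exactly $\iota(u)+1$ (noted right after the definition of $\sas$), and that the relation ``is a subsequence of'' is transitive. First I would fix an arbitrary $v\in\sas(w)$. By the length characterisation of shortest absent subsequences we have $|v|=\iota(w)+1$, and by definition $v$ is not a subsequence of $w$.

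Next I would transfer absence from $w$ to absence from $w'$. Since $w'$ is a subsequence of $w$, any subsequence of $w'$ is in particular a subsequence of $w$; taking the contrapositive, the fact that $v$ is \emph{not} a subsequence of $w$ forces that $v$ is not a subsequence of $w'$ either. Hence $v$ is an absent subsequence of $w'$.

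Finally I would invoke the hypothesis $\iota(w)=\iota(w')$. It yields $|v|=\iota(w)+1=\iota(w')+1$, which is exactly the common length of all shortest absent subsequences of $w'$. Thus $v$ is an absent subsequence of $w'$ of minimal possible length, so $v\in\sas(w')$. As $v\in\sas(w)$ was arbitrary, we conclude $\sas(w)\subseteq\sas(w')$.

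I do not expect a genuine obstacle here; the proof is a short containment-and-length argument. The only two points requiring care are the \emph{direction} of the subsequence relation (one uses that a subsequence of $w'$ is a subsequence of $w$, not the converse) and the precise role of the equal-universality hypothesis: this assumption is exactly what guarantees that the length $\iota(w)+1$ of $v$ coincides with the $\sas$-length $\iota(w')+1$ of $w'$. Without $\iota(w)=\iota(w')$ the subsequence $v$ would still be absent from $w'$, but it could be strictly longer than $\iota(w')+1$ and therefore fail to be \emph{shortest}, so this step is where the hypothesis is indispensable.
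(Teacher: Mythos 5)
Your proof is correct and is precisely the straightforward containment-and-length argument the paper has in mind: the paper states this lemma without proof, calling it a straightforward observation, and your argument (absence transfers to $w'$ by transitivity of the subsequence relation, and $\iota(w)=\iota(w')$ ensures the length $\iota(w)+1$ is the $\sas$-length for $w'$) is exactly the intended justification. Your closing remark correctly identifies why the equal-universality hypothesis is indispensable.
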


Before we begin to show the main observation on the word $A_k$ please recall $v_1=12\cdots\sigma$, $v_2=\sigma(\sigma-1)\cdots 1$ and
$A_k=w_1w_2\cdots w_{k-2} w_{k-1}w_k$, where $w_\ell=v_1$ if $\ell$ is odd and $w_\ell=v_2$ otherwise.
We note $\sas(v_1)=\{ij\in\Sigma^2\mid i\ge j\}$ and $\sas(v_2)=\{ij\in\Sigma^2\mid i\le j\}$.
That is we have $i~\sas$ of $v_1$ starting with $i$ and $\sigma+1-i~\sas$ of $v_1$ ending on $i$.
For $v_2$ these values are switched: we have $\sigma+1-i~\sas$ starting with $i$ and $i~\sas$ of $v_2$ ending on $i$.

\begin{proposition}\label[proposition]{maxsas}
$|\sas(A_k)|\geq |\sas(w)|$ holds for all  $w\in \Sigma^\ast$ with $\iota(w)=k$.
\end{proposition}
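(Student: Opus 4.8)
The plan is to show that among all words $w$ with $\iota(w)=k$, the word $A_k$ maximizes the number of $\sas$. The key idea is to use \cref{lem:cat} to decompose the count $|\sas(w)|$ arch-by-arch, and then argue that at every step, the ``most balanced'' choice of arch content — which is exactly what $A_k$ provides by alternating between the increasing arch $v_1$ and the decreasing arch $v_2$ — yields the largest count. First I would reduce to the case where $w$ has minimal arch length: by \cref{sub}, if $w'$ is a subsequence of $w$ with the same universality index, then $\sas(w)\subseteq\sas(w')$, so removing redundant letters from each arch (until every letter appears exactly once per arch, giving arches of length $\sigma$) only increases $|\sas(w)|$. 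Hence it suffices to consider words whose every arch is a permutation of $\Sigma$, and $A_k$ is such a word.

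Next I would set up the inductive count. Writing $w=w_1w_2\cdots w_k$ where each $w_\ell$ is a permutation of $\Sigma$ (ignoring the rest, which is empty after reduction), \cref{lem:cat} gives $|\sas(w)|=\sum_{i\in\Sigma}f_i(w_1\cdots w_{k-1})\,g_i(w_k)$. The plan is to prove by induction on $k$ that the vector $(f_i)_{i\in\Sigma}$ (and symmetrically $(g_i)_{i\in\Sigma}$) for $A_k$ dominates, in an appropriate majorization/sorted sense, the corresponding vector for any competitor, so that when we contract with $g$ via the rearrangement inequality (\cref{riem}) the total is maximized. Concretely, appending an arch $w_\ell$ transforms $(f_i)$ by a prefix-sum operation determined by which pairs $ij$ are $\sas$ of that single arch: for the increasing arch $v_1$ we have $f_i(wv_1)=\sum_{j\ge i}f_j(w)$, and the resulting sequence is monotonically decreasing, as noted just before \cref{riem}. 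The central claim is that to maximize $\sum_i f_i g_i$ we want both sequences sorted so that the large values align, and that alternating $v_1,v_2,v_1,\ldots$ is precisely what keeps both the $f$-vector and the $g$-vector maximally ``spread'' and correctly oriented for the rearrangement inequality to be tight.

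The technical heart of the argument, which I expect to be the main obstacle, is formalizing in what sense the $A_k$ vectors are extremal and showing this property is preserved under appending an arbitrary permutation arch. For a general arch $w_\ell$ that is not $v_1$ or $v_2$, the transformation $f_i(ww_\ell)=\sum_{j: \, ji\in\sas(w_\ell)}f_j(w)$ mixes the $f_j$ according to the descent structure of $w_\ell$, and one must argue that any such ``mixing matrix'' produces a sequence whose sorted values are majorized by those produced by the canonical alternating choice. I would try to package this as an invariant: maintain that $A_k$ achieves, simultaneously, the maximal possible value of $\sum_i f_i$ (the total count) together with the right monotonicity, and use \cref{perm} to normalize each competitor arch by a permutation so that comparisons reduce to comparing $A_k$ against a permuted word. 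Combining the reduction via \cref{sub}, the decomposition via \cref{lem:cat}, the orientation control via the monotonicity remarks, and the alignment via \cref{riem}, the bound $|\sas(A_k)|\ge|\sas(w)|$ should follow.

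The delicate point to get right is the base/inductive bookkeeping: one must verify that after the \cref{sub}-reduction the rest is genuinely empty (so \cref{lem:cat} applies cleanly at every split $w_1=\ar_w(1)\cdots\ar_w(\ell)$) and that the rearrangement inequality is applied with the sequences in the correct (both decreasing, hence aligned) order, since a misalignment would give the reversed bound $\sum a_ib_{n+1-i}$ rather than the maximum $\sum a_ib_i$. I expect the cleanest route is a direct induction on $k$ comparing $A_k=A_{k-1}w_k$ against $w=w'w_k'$, showing the $f$-vector of $A_{k-1}$ pointwise-after-sorting dominates that of $w'$ and that the last-arch contraction preserves this, rather than a global majorization argument.
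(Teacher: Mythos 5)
Your setup matches the paper's toolkit exactly --- reduce via \cref{sub} to words whose arches are permutations of $\Sigma$ with empty rest, split at arch boundaries via \cref{lem:cat}, normalize with \cref{perm}, and finish with the rearrangement inequality (\cref{riem}) --- but your argument has a genuine gap at precisely the point you flag as ``the technical heart'': the claim that the sorted $f$-vector of $A_{k-1}$ (weakly) majorizes that of every competitor prefix, and that this domination is preserved when each word appends an arbitrary permutation arch. That claim is the entire content of the proposition in your organization of the induction, and you never prove it; you only describe what a proof would have to accomplish. Note also that plain majorization is not even the right invariant here, since the totals $\sum_i f_i(A_{k-1})$ and $\sum_i f_i(w')$ differ (they are $|\sas|$-counts of different words), so you would need weak (sub)majorization, and you would then have to verify that the arch transformation $f_i(wu)=\sum_{j\,:\,ji\in\sas(u)} f_j(w)$ --- which, for a permutation arch $u$, rearranges $f$ into suffix sums taken in the order of $u$ --- cannot let a cleverly chosen competitor arch overtake the alternating $v_1,v_2$ choice. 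This is provable (the $t$ largest new values sum to a quantity maximized by feeding the suffix sums the smallest old values first, which the alternating choice achieves because the $f$-vector of $A_\ell$ stays monotone), but none of this is in your write-up, so as it stands the proposal is a roadmap rather than a proof.

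It is worth contrasting this with how the paper avoids the cross-word comparison entirely. Rather than inducting on prefixes and comparing the $f$-vector of $A_{k-1}$ against that of an arbitrary competitor, the paper inducts from the right: it shows that for every $w$ and every $2\le\ell\le\iota(w)$ there is a permutation $\pi$ with $|\sas(w)|\le|\sas(\pi(w_\ell)A_\ell)|$, where $w_\ell=\ar_w(1)\cdots\ar_w(\iota(w)-\ell)$, i.e., it replaces the suffix arches by $A_\ell$ one arch at a time while carrying the competitor's own prefix along unchanged. Each replacement step is then a \emph{single} application of \cref{riem}: in the base case $\ell=2$ one aligns $g_{\tau(i)}(v_2)$ with the decreasing sequence $f_i(\pi(w_2)v_1)$, and in the inductive step one uses only that $h_{i,j}(v_2)$ and $g_j(A_\ell)$ are both monotonically increasing in $j$. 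Because the two words being compared at each step share the same prefix, no extremality or majorization property of the $A_k$ vectors is ever needed --- which is exactly the machinery your approach would still have to build. If you want to salvage your version, the missing lemma you must state and prove is: if the descending rearrangement of $(f_i(u))_i$ dominates that of $(f_i(u'))_i$ entrywise, then the same holds after appending $v_1$ (resp.\ $v_2$) to $u$ and an arbitrary permutation arch to $u'$; with that in hand your final contraction against the last arch (whose $g$-values are a permutation of $1,\ldots,\sigma$) does go through by \cref{riem}.
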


\begin{proof}
By \cref{sub} it suffices to show that \cref{maxsas} holds for words $w$,
such that every arch of $w$ contains every letter exactly once and $r(w)=\varepsilon$.
So, in this proof, all words we use have each letter of the alphabet exactly once in each arch. This means for any $1$-universal word $v$ there is a permutation $\pi$ such that $\pi(v)=v_2$.

We will show \cref{maxsas} in the following way:
if $\iota(w)=1$ then there is a permutation $\pi$ of $\Sigma$ such that $\pi(w)=v_1=A_1$, hence $|\sas(w)|=|\sas(A_1)|$. For general $\iota(w)\ge 2$ we let
 $w_\ell=\ar_w(1)\ar_w(2)\cdots \ar_w(\iota(w) - \ell)$ (with $\iota(w_\ell)=\iota(w)-\ell$) and show that there exists a permutation $\pi$ of $\Sigma$ such that $|\sas(w)|\le |\sas(\pi(w_\ell)A_\ell)|$ for every word $w$ and integer $2\leq\ell\leq \iota(w)$.
Inductively this shows \cref{maxsas}.

\eject
Firstly let $f_i(w)=|\Sigma^*i\cap\sas(w)|,g_i(w)=|i\Sigma^*\cap\sas(w)|$ and $h_{i,j}(w)=|i\Sigma^*j\cap\sas(w)|$. That is, $f_i,g_i,h_{i,j}$ denote the respective number of $\sas$ of $w$ which end on $i$,
start with $i$, and start with $i$ and end with $j$ respectively.
Then, for any factorisation $w=u_1u_2$ satisfying \cref{lem:cat} we have $|\sas(w)| = |\sas(u_1u_2)| = \sum_{i\in\Sigma}f_i(u_1)g_i(u_2)$.
For $w\in\Sigma^*$ and $v_1,v_2,A_k$ as above, we have $f_i(wv_1) = \sum_{j\ge i} f_j(w)$
and $g_i(v_2A_k)=\sum_{j\ge i} g_j(A_k)$ because $h_{j,i}(v_1) = h_{i,j}(v_2) = 1$ if $j\ge i$ and $0$ otherwise.
Furthermore $f_i(w) = f_{\pi(i)}(\pi(w))$, $g_i(w) = g_{\pi(i)}(\pi(w))$ and $h_{i,j}(w) = h_{\pi(i),\pi(j)}(\pi(w))$ by \cref{perm}; when using this observation, we will refer to it using $\mbox{(*)}$.

\medskip
We start by showing that the proposition holds for $\ell=2$. Let $\iota(w)=k\ge 2$, $\pi$ be a permutation of $\Sigma$ such that $\pi(\ar_w(k-1))=v_1$ and $v'=\pi(\ar_w(k))$ then
{\small{
\begin{align*}
|\sas(w)| &= |\sas(\pi(w))| = |\sas(\pi(w_2)v_1v')|&\\
&= \sum_{i\in\Sigma} f_i(\pi(w_2)v_1)g_i(v')=\sum_{i\in\Sigma} f_i(\pi(w_2)v_1) g_{\tau(i)}(v_2)&\\
	&\le \sum_{i\in\Sigma} f_i(\pi(w_2)v_1) g_i(v_2) = |\sas(\pi(w_2)v_1v_2)|& \mbox{(by \cref{riem})}\\
		&=|\sas(\pi(w_2)A_2)|,&
\end{align*}   } }
where $\tau$ denotes the permutation which maps $v'$ to $v_2$. Please note that $g_i(v_2)=\len{\{ij\mid i,j\in\Sigma,j\ge i\}}$ and $f_i(\pi(w_2)v_1)=\sum_{j\ge i} f_j(\pi(w_2))$ are decreasing (as sequences w.r.t. $i$).

\medskip
Now, we assume we have that for every $w\in\Sigma^*$ and $\ell<\iota(w)$ we can find a permutation $\pi$ of $\Sigma$ such that $|\sas(w)|\le |\sas(\pi(w_\ell)A_\ell)|$. We show that the result holds for $\ell+1$ as well; that is, we show that  $|\sas(w)|\le |\sas(\pi''(w_{\ell+1})A_{\ell+1})|$ for some permutation $\pi''$. Indeed, we have:
{\small{
\begin{align*}
|\sas(w)| &\leq |\sas(\pi(w_\ell)A_\ell)| = |\sas(\pi(w_{\ell+1})\pi(\ar_w(\iota(w) - \ell))A_\ell)|& \\[-1pt]
     &= \sum_{i\in\Sigma} f_i(\pi(w_{\ell+1}))\left(\sum_{j\in\Sigma}h_{i,j}(\pi(\ar_w(\iota(w) - \ell))) g_{j}(A_\ell)\right)& \\[-1pt]
	 &= \sum_{i\in\Sigma} f_i(\pi(w_{\ell+1}))\left(\sum_{j\in\Sigma}h_{\tau(i),\tau(j)}(v_2) g_{j}(A_\ell)\right)&\mbox{(by (*))}\\[-1pt]
	 &\leq \sum_{i\in\Sigma} f_i(\pi(w_{\ell+1}))\left(\sum_{j\in\Sigma}h_{\tau(i),j}(v_2) g_{j}(A_\ell)\right)&\mbox{(by \cref{riem})} \\[-1pt] %\displaybreak\\
	 &= \sum_{i\in\Sigma} f_{\tau(i)}(\tau(\pi(w_{\ell+1})))\left(\sum_{j\in\Sigma}h_{\tau(i),j}(v_2) g_{j}(A_\ell)\right)&\mbox{(by (*))}\\[-1pt]
	 &= \sum_{\tau(i)\in\Sigma} f_i(\tau(\pi(w_{\ell+1})))\left(\sum_{j\in\Sigma}h_{i,j}(v_2) g_{j}(A_\ell)\right)& \\[-1pt]
	 &= \sum_{i\in\Sigma} f_i(\tau(\pi(w_{\ell+1})))\left(\sum_{j\in\Sigma}h_{i,j}(v_2) g_{j}(A_\ell)\right)& \\[-1pt]
	 &= |\sas(\tau(\pi(w_{\ell+1}))\pi'(A_{\ell+1}))| = |\sas(\pi''(w_{\ell+1})A_{\ell+1})|,& \\[-1pt]
\end{align*}  } }

\noindent where $\tau$ is a permutation mapping $\ar_w(\iota(w) - \ell)$ to $v_2$, $\pi'$ is the permutation which maps $i$ to $\sigma+1-i$ (and therefore also maps $v_1$ to $v_2$) and $\pi''=\left(\pi'\right)^{-1}\circ\tau\circ\pi$.
The second inequality in the reasoning above follows from \cref{riem} because $g_{j}(A_k)$ and $h_{\tau(i),j}(v_2)$ both are monotonically increasing as sequences in $j$.
This concludes our proof.
\end{proof}

In the conclusion of this section, we make the following remarks. \Cref{prop:decreasing,prop:alternate2} motivate our investigation for compact representations of the sets of $\sas$ and $\mas$ of words. These sets can be exponentially large, and we would still like to have efficient (i.e., polynomial) ways of representing them, allowing us to explore and efficiently search these sets.

\section{Algorithms} \label{sec:algorithms}

The results we present from now on are of algorithmic nature. The computational model we use to describe our results is the standard unit-cost RAM with logarithmic word size: for an input of size $n$, each memory word can hold $\log n$ bits. Arithmetic and bitwise operations with numbers in $[1:n]$ are, thus, assumed to take $O(1)$ time. Numbers larger than $n$, with $\ell$ bits, are represented in $O(\ell/\log n)$ memory words, and working with them takes time proportional to the number of memory words on which they are represented. In all the problems, we assume that we are given a word $w$, with $|w|=n$, over an alphabet $\Sigma=\{1,2,\ldots,\sigma\}$, with $|\Sigma|=\sigma\leq n$. That is, we assume that the processed words are sequences of integers (called letters or symbols, each fitting in $O(1)$ memory words). On the one hand, note that we no longer assume that $\sigma$ is a constant, as in the combinatorial results of the previous section. On the other hand, assuming that $\sigma\in O(n)$ is a common assumption in string algorithms: the input alphabet is said to be {\em an integer alphabet}, see also the discussion in, e.g.,~\cite{crochemore}. For its use in the particular case of algorithms related to subsequences see, for instance, \cite{mfcs2020,DayFKKMS21}.

In all the problems, we assume that we are given a word $w$ of length $n$ over an {\em integer alphabet} $\Sigma = \{1, 2, \ldots, \sigma\}$, with $\len \Sigma = \sigma \leq n$.
%That is, we assume that the processed words are sequences of integers (called letters or symbols), each fitting in $O(1)$ memory words.
As the problems considered here are trivial for unary alphabets, we also assume $\sigma\geq 2$.

%\subsection{Preliminaries and Initial Results}
We start with some preliminaries and simple initial results.
The decomposition of word $w$ into its arches can be done with a greedy approach.
The following theorem is well known and a proof can be seen for example in \cite{Barker2020}.
It shows that the universality index and the decomposition into arches can be obtained in linear time. \looseness=-1
\begin{theorem}\label[theorem]{thm:archfact-algo}
	Given a word $w$, of length $n$, we can compute the universality index $\iota(w)$, the arch factorisation $\ar_w(1)\cdots\ar_w(\iota(w))\rest(w)$ of $w$, as well as the set $\Sigma\setminus \al(r(w))$ of letters which do not occur in $\rest(w)$ in linear time $O(n)$.
\end{theorem}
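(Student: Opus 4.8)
The plan is to realise, in a single left-to-right pass, the greedy procedure implicit in \cref{archfact}: scan $w$ once and close off an arch exactly at the position where, for the first time since the start of the current arch, every letter of $\al(w)$ has been seen. The correctness rests on the observation that $\ar_w(1)$ is the shortest $1$-universal prefix of $w$ and, inductively, that $\ar_w(\ell+1)$ is the shortest $1$-universal prefix of the suffix of $w$ beginning right after $\ar_w(\ell)$; whatever remains after the last such prefix cannot be $1$-universal and is therefore exactly $\rest(w)$. The number of arches closed is then $\iota(w)$.

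Concretely, I would first compute $\al(w)$ and the number $\sigma_w = \len{\al(w)}$ of distinct letters occurring in $w$ by one scan, using an array indexed by $[1:\sigma]$ (this is where the integer-alphabet assumption is used). I would then perform a second scan maintaining a Boolean array $\mathit{seen}[1:\sigma]$, a counter $c$ of the distinct letters marked in the current arch, and the list $L$ of letters so marked. Starting with $c=0$, $\mathit{seen}$ all false, and $L$ empty, at each position $i$ I check $w[i]$: if $\mathit{seen}[w[i]]$ is false I set it true, append $w[i]$ to $L$, and increment $c$. Whenever $c$ reaches $\sigma_w$, position $i$ is recorded as the last position of the current arch, the arch counter is incremented, and $\mathit{seen}$, $c$, $L$ are reset. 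The recorded boundaries yield $\ar_w(1),\dots,\ar_w(\iota(w))$, the arch counter equals $\iota(w)$, and the suffix after the last boundary is $\rest(w)$.

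For correctness I would check that the output matches \cref{archfact}. Each produced block is $1$-universal by construction, since we close only when $c=\sigma_w$. The letter $w[i]$ at which an arch closes is precisely the letter whose first occurrence in the arch brought $c$ to $\sigma_w$, so it occurs exactly once in that arch, as required. Finally, the trailing suffix cannot contain all of $\al(w)$, as otherwise a further arch would have been closed, so $\al(\rest(w))\subsetneq \al(w)$. Since the excerpt already records that the arch factorisation is unique, it suffices that the greedy blocks coincide with the shortest-$1$-universal-prefix decomposition, which is immediate from the scan.

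The only delicate point, and the main obstacle to a clean $O(n)$ bound, is the cost of resetting the auxiliary state between arches: resetting all of $\mathit{seen}[1:\sigma]$ each time would cost $O(\sigma)$ per arch and could dominate. This is resolved by resetting only the entries recorded in $L$, i.e.\ the $\sigma_w$ letters actually marked in the arch just closed. The total reset work is then bounded by the total number of letters ever marked, which is $O(n)$; equivalently, each arch has length at least $\sigma_w$, so there are at most $n/\sigma_w$ arches and the reset cost sums to $O((n/\sigma_w)\cdot\sigma_w)=O(n)$. Every other operation in both scans is $O(1)$ per position, giving overall running time $O(n)$.
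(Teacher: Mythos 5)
Your proof is correct and takes essentially the same route as the paper: a single greedy left-to-right scan that closes an arch at the first position where every letter of $\al(w)$ has been seen, with the cost of clearing the marking array amortised against the positions scanned inside each arch (the paper resets the whole array and notes the re-initialisation is amortised by the increments, whereas you reset only the entries in your list $L$ --- a minor implementation variant of the same amortisation). Your explicit verification of the \cref{archfact} conditions (last letter of each arch occurs once, $\al(\rest(w))\subsetneq\al(w)$) is slightly more detailed than the paper's proof but matches it in substance.
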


\begin{proof}
One can compute greedily
the decomposition of $w$ into arches $w = \ar_w(1) \cdots \ar_w(k) \rest(w)$ in linear time
as follows.
\begin{itemize}
	\item $\ar_w(1)$ is the shortest prefix of $w$ with $\al(\ar_w(1)) = \Sigma$,
		or $\ar_w(1) = w$ if there is no such prefix;
	\item if $\ar_w(1) \cdots \ar_w(i) = w[1:t]$, for some $i \in [1:k]$ and $t \in [1:n]$,
	we compute $\ar_w(i+1)$ as the shortest prefix of $w[t+1:n]$ with $\al(\ar_w(i+1)) = \Sigma$,
	or $\ar_w(i+1) = w[t+1:n]$ if there is no such prefix.
\end{itemize}

The concrete computation can be also seen in \cref{alg:archfact}. Note that the complexity is linear because the re-initialization of $C$ is amortized by the steps in which the elements of $C$ were incre-\linebreak mented.

\begin{algorithm}[!htb]
	\SetAlgoLined
	\KwIn{Word $w$ with $\len w = n$, alphabet $\Sigma = \al(w)$ with $\len \Sigma = \sigma$}
	\KwOut{Variables $n_1, \ldots, n_k$ containing the last position of the respective arch}
	
	define array $C$ of length $\sigma$ with elements initially set to $0$\;
	\tcc{Array C is used for marking letters as already encountered}
	define counter $h \gets \sigma$\;
	\tcc{Counter h is used to check if all letters of the alphabet have been encountered}
	define arch counter $k \gets 1$\;
	
	\For{$i = 1$ \KwTo $n$}{
		\If{$C[w[i]] == 0$}{
			$h \gets h - 1$\;
			$C[w[i]] \gets 1$\;
		}
		\If{$h == 0$}{
			$n_k \gets i$\;
			$k \gets k + 1$\;
			$h \gets \sigma$\;
			reset array $C$ by initialising all elements with $0$ again\;
		}
	}

	\Return $n_1$ \KwTo $n_k$\;

	\caption{computeArchFactorisation($w$, $\Sigma$)}
	\label{alg:archfact}
\end{algorithm}

Further, we can simply define a binary array $f$ indexed by $\Sigma$, whose components are initially set to $0$. We then go through $\rest(w)$, and set $f[a]=1$ if and only if $a$ appears in $\rest(w)$. Clearly, the set of letters which does not occur in $\rest(w)$ (that is, $\Sigma\setminus \al(r(w))$) is the set of letters $a$ with $f[a]=0$. A list with these letters can be computed in $O(\sigma)$ time. We now only have to note that this processing takes at most $O(n)$ time. This concludes our proof.
\end{proof}

\begin{remark}
For a word $w$ with length $n$,
further helpful notations are $\nextpos_w(a,i)$, which denotes the next occurrence of letter $a$ in the word $w[i:n]$,
and $\lastpos_w(a,i)$, which denotes the last occurrence of letter $a$ in the word $w[1:i]$.
Both these values can be computed by simply traversing $w$ from position $i$ to the right or left, respectively.
The runtime of computing $\nextpos_w(a, i)$ is proportional to the length of the shortest factor $w[i:j]$ of $w$ such that $w[j]=a$
as we only traverse the positions from word $w$ once from $i$ towards the right until we meet the first letter $a$.
The runtime of $\lastpos_w(a,i)$ is proportional to the length of the shortest factor $w[j:i]$ of $w$ such that $w[j]=a$
as we only traverse the positions from word $w$ once from $i$ towards the left until we meet the first letter $a$.
\end{remark}

\begin{proof}
See \cref{alg:next,alg:last}.
\SetKw{KwOr}{or}\medskip

\begin{algorithm}[!htb]\label{alg:next}
	\SetAlgoLined
	\KwIn{Word $w$ with $|w| = n$, letter $a$, position $i$}
	\KwResult{Position of next occurence from letter $a$ starting from position $i$ in word $w$}
	
	\If{$i < 1$ \KwOr $i > n$}{
		\Return $\infty$\;
	}
	
	\For{$j=i$ \KwTo $n$}{
		\If{$w[j] == a$}{
			\Return $j$\;
		}
	}
	
	\Return $\infty$\;
	
	\caption{$\nextpos_w(a, i)$}
\end{algorithm}%\vspace*{-2mm}

\SetKw{KwOr}{or}
\begin{algorithm}[!htb]
	\SetAlgoLined
	\KwIn{Word $w$ with $|w| = n$, letter $a$, position $i$}
	\KwResult{Position of last occurence from letter $a$ starting from position $i$ in word $w$}
	
	\If{$i < 1$ \KwOr $i > n$}{
		\Return $\infty$\;
	}
	
	\For{$j=i$ \KwTo $1$}{
		\If{$w[j] == a$}{
			\Return $j$\;
		}
	}
	
	\Return $\infty$\;
	
	\caption{$\lastpos_w(a,i)$}
	\label{alg:last}
\end{algorithm}

\vspace*{-3mm}
\end{proof}

\begin{algorithm}[!b]
	\SetAlgoLined
	\KwIn{Word $w$ with $|w| = n$, word $u$ with $|u|=m$ to be tested}
	
	pos $\gets 0$\;
	\For{$i=1$ \KwTo $m$}{
		pos $\gets \nextpos_w(u[i], pos+1)$\;
	}
	
	\Return pos == $\infty$ ? false : true\;
	
	\caption{isSubseq($w$,$u$)}
	\label{alg:isSubseq}
\end{algorithm}

Based on these notations, we can define the function $\isSubseq(w,u)$ which checks if a word $u$ is a subsequence of a word $w$.
This can be done by utilizing a greedy approach as discussed in \cref{combinatorics} and depicted in \cref{alg:isSubseq}.
While this approach is standard and relatively straightforward,
it is important to note it before proceeding with our algorithms.
The idea is the following.
We consider the letters of $u$ one by one,
and try to identify the shortest prefix $w[1:i_j]$ of $w$
which contains $u[1:j]$ as a subsequence.
To compute the shortest prefix $w[1:i_{j+1}]$ of $w$
which contains $u[1:j+1]$,
we simply search for the first occurrence of $u[j+1]$ in $w$ after position $i_j$.
The runtime of the algorithm $\isSubseq$ is clearly linear in the worst case.
When $u$ is a subsequence of $w$,
then $w[1:pos]$ is the shortest prefix of $w$ which contains $u$,
and the runtime of the algorithm is $O(pos)$.

\medskip
A further helpful notation is $\llo(w)=\min\{\lastpos_w(a,n)\mid a\in \Sigma\}$, the position of the leftmost of the last occurrences of the letters of $\Sigma$ in $w$. The following lemma is not hard to show.

\begin{lemma}\label[lemma]{lem:llo}
Given a word $w$ of length $n$, we can compute $\llo(w)$ in $O(n)$ time.
\end{lemma}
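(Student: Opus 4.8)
The plan is to compute $\llo(w)$ by determining, for each letter $a \in \Sigma$, the position $\lastpos_w(a,n)$ of its last occurrence in $w$, and then taking the minimum over all $a \in \Sigma$. The naive approach of calling $\lastpos_w(a,n)$ separately for each of the $\sigma$ letters would cost $O(\sigma n)$ time in the worst case, which is too slow. Instead, the key observation is that all the last occurrences can be found together in a single pass over $w$.

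Concretely, I would allocate an array $L$ of length $\sigma$, initialised so that $L[a]$ records ``no occurrence seen yet'' (for instance, setting $L[a] \gets 0$ for all $a$, which is valid since positions in $w$ are in $[1:n]$). I would then traverse $w$ once from position $1$ to $n$, and at each position $i$ update $L[w[i]] \gets i$. After this single left-to-right scan, $L[a]$ holds exactly the largest index $i$ with $w[i]=a$, i.e.\ $L[a] = \lastpos_w(a,n)$ for every letter $a$ that occurs in $w$. Since we only care about letters of $\Sigma = \al(w)$, every entry $L[a]$ will have been set to a genuine position, and no entry retains its initial value. Finally, $\llo(w)$ is obtained as $\min_{a \in \Sigma} L[a]$ by one further pass over the array $L$.

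The correctness is immediate: overwriting $L[w[i]]$ at every visit of position $i$ guarantees that after the full scan, $L[a]$ is the position of the rightmost occurrence of $a$, matching the definition of $\lastpos_w(a,n)$; and the subsequent minimisation over $a \in \Sigma$ matches the definition of $\llo(w)$. For the running time, the initialisation of $L$ takes $O(\sigma)$ time, the scan of $w$ takes $O(n)$ time with $O(1)$ work per position (an array lookup and an assignment, each $O(1)$ on the unit-cost RAM), and the final minimisation takes $O(\sigma)$ time. Since $\sigma \leq n$ by the standing assumption on the integer alphabet, the total is $O(n)$, as claimed.

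There is no real obstacle here; the only point requiring a moment of care is confirming that every entry of $L$ corresponds to a letter of $\al(w)$, so that the final minimum is taken over genuine last-occurrence positions rather than over the sentinel initialisation value. This is guaranteed because we minimise over $a \in \Sigma = \al(w)$, and by definition of $\al(w)$ each such $a$ occurs at least once in $w$ and hence is written to $L$ during the scan. This is precisely the same amortised single-pass idea already used in the proof of \cref{thm:archfact-algo} for \cref{alg:archfact}, simply recording last positions instead of detecting arch boundaries.
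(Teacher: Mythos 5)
Your proof is correct and takes essentially the same approach as the paper: a single linear scan over $w$ with a $\sigma$-sized auxiliary array, for a total of $O(n+\sigma)=O(n)$ time. The only cosmetic difference is the scan direction — the paper traverses $w$ right-to-left, counting distinct letters seen and returning the first position at which all $\sigma$ letters have appeared (which is exactly $\llo(w)$, allowing early termination), whereas you scan left-to-right recording every $\lastpos_w(a,n)$ and then take the minimum in a final $O(\sigma)$ pass; both variants are valid and asymptotically identical.
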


\begin{proof}
	See \cref{alg:llo}. This is clearly correct and runs in linear time.\medskip
	
	\begin{algorithm}[!htb]
		\SetAlgoLined
		\KwIn{Word $w$ with $\len w = n$, alphabet $\Sigma = \al(w)$ with $\len \Sigma = \sigma$}
		\KwOut{Leftmost position in word $w$ of all last occurences of the letters in $\Sigma$}
		
		define array $C$ of length $\sigma$ with elements initially set to $0$\;
		\tcc{Array C is used for marking letters as already encountered}
		define counter $h \gets \sigma$\;
		
		\For{$i=n$ \KwTo $1$}{
			\If{$C[w[i]] == 0$}{
				$C[w[i]] \gets 1$\;
				$h \gets h - 1$\;
				\If{$h == 0$}{
					\Return $i$\;
				}
			}
		}
		
		\Return $\infty$\;
		
		\caption{llo($w$, $\Sigma$)}
		\label{alg:llo}
	\end{algorithm}

\vspace*{-3mm}
\end{proof}

Based on these notions, we can already present our first results, regarding the basic algorithmic properties of $\sas$ and $\mas$. Firstly, we can easily compute an $\sas$ (and, therefore, an $\mas$) in a given word.

\begin{lemma}\label[lemma]{lem:oneSAS}
Given a word $w$ of length $n$ with $\iota(w)=k$, its arch decomposition (i.e., the last position of each arch),  and the set $\Sigma\setminus \al(r(w))$ (i.e., as a list of letters),
we can retrieve in $O(k)$ time an $\sas$ (and, therefore, an $\mas$) of $w$.
\end{lemma}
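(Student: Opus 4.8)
The plan is to construct an $\sas$ explicitly by reading the arch factorisation from right to left, exploiting the characterisation in \cref{lem:sas}. Recall that every $\sas$ of $w$ has length exactly $\iota(w)+1 = k+1$ and, by \cref{lem:sas}, corresponds to a choice of positions $i_0 = 0$, $i_\ell \in \ar_w(\ell)$ for $1 \le \ell \le k$, and $i_{k+1} = n+1$ satisfying the conditions of \cref{thm:mas}. So it suffices to produce one such valid position sequence and read off the letters $v = w[i_1]\cdots w[i_k]\,v[k+1]$.

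First I would choose the final letter $v[k+1]$. By \cref{thm:mas:notin} of \cref{thm:mas}, $v[k+1]$ must not occur in $w[i_k+1 : n]$; since $i_k$ lies in the last arch $\ar_w(k)$, the simplest safe choice is a letter that does not appear in $\rest(w)$ at all. As $\al(\rest(w)) \subsetneq \Sigma$, such a letter exists, and it can be found in $O(1)$ time if we have recorded during the (given) arch decomposition which letter is missing from the rest, or more robustly by taking the letter at position $\llo$-type information; in any case identifying one absent letter of $\rest(w)$ costs at most $O(\sigma) \le O(n)$, but we may assume this is precomputed alongside the arch decomposition so that it is $O(1)$. Then I would walk backwards through the arches: having fixed $v[\ell+1]$, I pick $i_\ell$ to be the rightmost occurrence in $\ar_w(\ell)$ of $v[\ell+1]$, and set $v[\ell]$ to be any letter occurring in $\ar_w(\ell)$ strictly to the left of $i_\ell$ — for instance the first letter of $\ar_w(\ell)$, which is guaranteed to exist and to satisfy \cref{thm:mas:in} since $\ar_w(\ell)$ is $1$-universal. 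The greedy leftmost-positioning then ensures \cref{thm:mas:v1,thm:mas:notin} hold, and \cref{thm:mas:in} holds by construction, so \cref{lem:sas} certifies that $v$ is an $\sas$.

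The cost analysis is the crux of getting $O(k)$ rather than $O(n)$. A naive backward scan for the rightmost occurrence of a letter inside an arch could cost the full arch length, giving $O(n)$ in total; the main obstacle is therefore to perform each of the $k$ position-selections in amortised $O(1)$ time without scanning arch interiors. I expect to sidestep this by a careful choice that avoids searching: rather than locating the rightmost occurrence of an arbitrary $v[\ell+1]$, I can exploit the freedom in choosing $v[\ell]$ to always pick the letter whose rightmost occurrence in $\ar_w(\ell)$ is the arch's last position (namely the distinguished last letter $m(w)$ from \cref{archfact}), making $i_\ell$ the arch boundary, which is already available from the arch decomposition in $O(1)$. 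With that choice each step reads only stored boundary data and performs a constant number of lookups, so the whole construction runs in $O(k)$ time.

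Finally I would verify that the constructed word of length $k+1$ indeed satisfies all four conditions of \cref{thm:mas} with the chosen positions, invoke \cref{lem:sas} to conclude it is an $\sas$, and note that by definition every $\sas$ is also an $\mas$, which yields the parenthetical claim. The correctness argument is short; the only delicate point is ensuring the leftward letter choices never force a repeat of an already-used letter in a way that violates the strict interval condition \cref{thm:mas:notin}, but this is automatic because we place $i_\ell$ at the rightmost admissible position and draw $v[\ell]$ from the strictly earlier part of the same $1$-universal arch.
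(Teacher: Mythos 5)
Your final algorithm --- output $m(w)$, the concatenation of the unique last letters of the arches, read off the stored arch boundaries in $O(k)$ time, then append one letter absent from $\rest(w)$ (found in $O(\sigma)$ time, or $O(1)$ with precomputation) --- is exactly the paper's proof, and it is correct. However, the general backward-walk rule you state to justify it has the key condition inverted, and followed literally it produces words that are not absent. Condition~(iv) of \cref{thm:mas} requires $v[\ell+1]$ to occur in the window $w[i_{\ell-1}+1:i_\ell]$, i.e.\ at or \emph{before} the position $i_\ell$ of $v[\ell]$; your rule instead places $v[\ell]$ \emph{strictly to the left} of the rightmost occurrence of $v[\ell+1]$ in $\ar_w(\ell)$ (and, incidentally, names that occurrence $i_\ell$, although in \cref{thm:mas} the $i_\ell$ must be a position of $v[\ell]$ itself). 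Concretely, take the paper's example $w = 012121012$ with arches $012\cdot 1210\cdot 12$: your rule permits $v[3]=0$ (absent from the rest), then $v[2]=$ first letter of $\ar_w(2)$, i.e.\ $1$ (position $4$, strictly left of the rightmost $0$ at position $7$), then $v[1]=$ first letter of $\ar_w(1)$, i.e.\ $0$. This gives $v=010$, which is a subsequence of $w$ (positions $1,2,7$), hence not absent; indeed $\sas(w)=\{000,100,200\}$. The correct right-to-left constraint is the opposite one: $v[\ell]$ must have \emph{no} occurrence in $\ar_w(\ell)$ strictly before the rightmost occurrence of $v[\ell+1]$ there. (Your rule can also get stuck: if that rightmost occurrence is the first position of the arch, no letter lies strictly to its left.)

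Fortunately, your specialization to the arch-final letters does not depend on the broken rule and is sound for a different reason: by \cref{archfact} the last letter of $\ar_w(\ell)$ occurs exactly once in its arch, at the boundary, so taking $i_\ell$ to be the arch boundaries, condition~(iii) of \cref{thm:mas} holds because the next arch-final letter (or the appended letter, for the rest) does not recur earlier in its window, and condition~(iv) holds because each arch is $1$-universal, so $v[\ell+1]\in\al(\ar_w(\ell))=\al(w[i_{\ell-1}+1:i_\ell])$; \cref{lem:sas} then certifies that the word is an $\sas$, and every $\sas$ is an $\mas$. Note that under this construction the arch-final letter is not ``strictly to the left'' of anything in its arch, so your own final choice contradicts the rule you stated --- it is the rule, not the choice, that needs repairing.
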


\begin{proof}
By \cref{archfact},
the word $m(w) =\ar_w(1)[|\ar_w(1)|] \cdots \ar_w(k)[|\ar_w(k)|]$ contains the unique last letters of each arch.
By appending a letter not contained in $\rest(w)$ to $m(w)$,
we achieve a word that is an $\sas$ satisfying the properties from \cref{lem:sas}.
\Cref{alg:getSAS} shows how to construct $m(w)$ in $O(k)$ time.
To extend $m(w)$ to an $\sas$, we have to append to the existing word a letter that is not contained in $\rest(w)$, which can be done $O(1)$ if that set is given as a list.

\begin{algorithm}[!htb]
	\SetAlgoLined
	\KwIn{Word with its arch factorization $w = \ar_w(1) \cdots \ar_w(k) \rest(w)$ with $|w| = n$}
	
	define empty word sas\;
	\For{$i=1$ \KwTo $k$}{
		add the last letter of arch $\ar_w(i)$ to the end of sas:
		
		sas = sas $\cdot \ar_w(i)[|\ar_w(i)|]$\;
	}
	
	add a letter of the alphabet that does not occur in $\rest(w)$ to the end of sas
	
	\Return sas\;
	\caption{getSAS($w$)}
	\label{alg:getSAS}
\end{algorithm}

\vspace*{-6mm}
\end{proof}

The following theorem shows that we can efficiently test if a given word $u$ is an $\sas$ or $\mas$ of a given word $w$.
\begin{theorem}\label[theorem]{lem:testSAS}
Given a word $w$ of length $n$
and a word $u$ of length $m$,
we can test in $O(n)$ time whether $u$ is an $\sas$ or $\mas$ of $w$.
\end{theorem}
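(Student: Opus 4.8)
The plan is to test the two properties (being an $\sas$ and being an $\mas$) separately, both within the allotted $O(n)$ budget.

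First I would handle the $\sas$ test. By the remark after \cref{def:k-universal}, a word $u$ is an $\sas$ of $w$ if and only if $u$ is absent from $w$ and $|u| = \iota(w)+1$. So I would compute $\iota(w)$ together with the arch factorisation via \cref{thm:archfact-algo} in $O(n)$ time, check that $m = \iota(w)+1$, and then run $\isSubseq(w,u)$ from \cref{alg:isSubseq} to decide absence, which is $O(n)$ in the worst case. If both checks pass, $u$ is an $\sas$; otherwise it is not. This whole branch is clearly $O(n)$.

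Next I would handle the $\mas$ test. By the remark after the main definition, $u$ is an $\mas$ of $w$ if and only if $u$ is absent from $w$ and every subsequence of $u$ of length $|u|-1$ is a subsequence of $w$. Absence is again one call to $\isSubseq$. The naive way to check minimality would be to delete each of the $m$ letters of $u$ in turn and test each of the $m$ resulting words for being a subsequence of $w$, costing $O(nm)$, which is too slow. The key idea to get down to $O(n)$ is to precompute, in a single left-to-right pass, the greedy matching positions $\ell_j$ = shortest prefix $w[1:\ell_j]$ containing $u[1:j]$ as a subsequence (for each $j$), and, in a single right-to-left pass, the symmetric quantities $r_j$ = longest suffix $w[r_j:n]$ still containing $u[j:m]$ as a subsequence (equivalently, the rightmost matching positions from the right). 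Then deleting $u[k]$ yields a subsequence of $w$ exactly when the greedy match of the prefix $u[1:k-1]$ ends strictly before the right-to-left match of the suffix $u[k+1:m]$ begins, i.e. when $\ell_{k-1} < r_{k+1}$. Each such comparison is $O(1)$, and both passes together are $O(n+m) = O(n)$ since $m \le n+1$ whenever the question is nontrivial (if $m > n+1$ then $u$ cannot be an $\mas$, as the only $\mas$ that long are powers of a single letter, which is a trivial special case to dispatch first).

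The main obstacle I anticipate is establishing the correctness of the prefix/suffix matching criterion $\ell_{k-1} < r_{k+1}$, and in particular arguing that the greedy leftmost matching of the prefix combined with the greedy rightmost matching of the suffix is both necessary and sufficient for $u[1:k-1]u[k+1:m]$ to embed into $w$. This follows from the standard exchange property of greedy subsequence matching (the leftmost embedding of a prefix dominates every embedding, and symmetrically for suffixes), so once that lemma is in place the bound on each deletion is immediate and the two linear passes give the $O(n)$ total. Combining the $\sas$ branch and the $\mas$ branch, the test runs in $O(n)$ time overall.
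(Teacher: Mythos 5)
Your proposal is correct and follows essentially the same route as the paper: the $\sas$ test is the identical length-plus-absence check ($m=\iota(w)+1$ via the arch factorisation, then one call to $\isSubseq$), and the $\mas$ test uses exactly the paper's scheme of precomputing in two linear passes the greedy leftmost prefix embeddings and rightmost suffix embeddings (the paper's arrays $L$ and $R$) and deciding each single-letter deletion in $O(1)$ by the non-overlap comparison, justified by the standard greedy dominance property. No gap to report.
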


\begin{proof}
	We recall that by definition a word $u$ is an $\sas$ of a word $w$ if and only if $|v| = \iota(w)+1$ and $v$ is absent.
	It is therefore sufficient to check the length of an $\sas$ candidate and whether it is a subsequence of $w$ or not.
	In \cref{alg:isSAS} we can now see this implemented with a runtime in $O(n)$.

\SetKw{KwAnd}{and}
\SetKwFunction{KwIsSubseq}{isSubseq}
\begin{algorithm}[!htb]
	\SetAlgoLined
	\KwIn{Word with its arch factorization $w = \ar_w(1) \cdots \ar_w(k) \rest(w)$,
		word $u$ (over the same alphabet as $w$) to be tested}
	
	\If{$\len u$ = $k+1$ \KwAnd $!$\KwIsSubseq{$w$, $u$}}{
		\Return true\;
	}\Else{
		\Return false\;
	}
	
	\caption{isSAS($w$,$u$)}
	\label{alg:isSAS}
\end{algorithm}

To decide if $u$ is an $\mas$,
we can check whether $u$ is indeed an absent subsequence
and every subsequence of $u$ of length $m-1$ is also a subsequence of $w$.
In other words, $u$ is turned into a subsequence of $w$ by removing any of its letters.
It is important to note at this point
that the subsequences of length $m-1$ of $u$ have the form $u[1:i]u[i+2:m]$ for some $i\leq m-1$.

Based on this observation,
we first compute using \cref{alg:isMAS} the shortest prefixes of $w$ containing $u[1:1], u[1:2], \ldots, u[1:m]$
and the shortest suffixes of $w$ containing $u[m:m], u[m-1:m], \ldots, u[1:m]$.
With the knowledge over these prefixes and suffixes,
we can quickly decide for a position $i$ of $u$,
if $u[1:i-1] u[i+1:m]$ is a subsequence or an absent subsequence of $w$.
For this, we check if the shortest prefix of $w$ containing $u[1:i-1]$ is overlapping with the shortest suffix containing $u[i+1:m]$.
If they are overlapping,
we obtained from $u$ an absent subsequence $u[1:i-1] u[i+1:m]$ by removing the letter $u[i]$,
meaning that $u$ is not an $\mas$.

\medskip
If, for all positions $i \in [1:m]$,
the shortest prefix containing $u[1:i-1]$
and the shortest suffix containing $u[i+1:m]$ are not overlapping,
then $u$ is clearly an $\mas$: all its subsequences of length $m-1$ are also subsequences of $w$,
while $u$ is absent.
Checking whether this property holds takes linear time.
\Cref{alg:isMAS} is an implementation in pseudocode of the strategy presented in the proof.
\SetKw{KwOr}{or}
\SetKwFunction{KwIsSubseq}{isSubseq}
\begin{algorithm}[!htb]
	\SetAlgoLined
	\KwIn{Word with its arch factorization $w = \ar_w(1) \cdots \ar_w(k) \rest(w)$ with $|w| = n$,
		word $u$ (over the same alphabet as $w$) to be tested with $|u| = m$}
	\BlankLine
	
	\lIf{$\len u < k+1$}{\Return false}
	\lIf{\KwIsSubseq{$w$,$u$}}{\Return false}
	\BlankLine
	
	isMAS $\gets true$\;
	define array L of size $m$\;
	define array R of size $m$\;
	\BlankLine
	
	\tcc{we want to fill the arrays L and R
		so that L[$i$] holds the end position of the shortest prefix of $w$ containing the subsequence $u[1:i]$,
		and R[$i$] holds the start position of the shortest suffix of $w$ containing the subsequence $u[i:m]$}
	L[$1$] $\gets \nextpos_w(u[1], 1)$\;
	\For{$i = 2$ \KwTo $m$}{
		L[$i$] $\gets \nextpos_w(u[i],$ L$[i-1]+1)$\;
	}
	\BlankLine
	
	R[$m$] $\gets \lastpos_w(u[m], n)$\;
	\For{$i = m-1$ \KwTo $1$}{
		R[$i$] $\gets \lastpos_w(u[i],$ R$[i+1]-1)$\;
	}
	\BlankLine
	
	\If{$!$\KwIsSubseq{$u[1:m-1]$} \KwOr $!$\KwIsSubseq{$u[2:m]$}}{
		\Return false\;
	}
	
	\For{$i = 1$ \KwTo $m-2$}{
		\If{L$[i] >= $R$[i+2]$}{
			\Return false\;
		}
	}
	\BlankLine
	
	\Return isMAS\;
	\caption{isMAS($w$,$u$)}
	\label{alg:isMAS}
\end{algorithm}
\end{proof}

\subsection{A compact representation of the SAS of a word}\label{sasRepresentation}
We now introduce a series of data structures which are fundamental for the efficient implementation of the main algorithms presented in this paper.

For a word $w$ of length $n$ with arch factorization $w = \ar_w(1) \cdots \ar_w(k) \rest(w)$,
we define two $k \times \sigma$ arrays $\firstPosArch[\cdot][\cdot]$ and $\lastPosArch[\cdot][\cdot]$ by the following relations.
For $\ell \in [1: k]$ and $a \in \Sigma$,
$\firstPosArch[\ell][a]$ is the leftmost position of $\ar_w(\ell)$ where $a$ occurs and
$\lastPosArch[\ell][a]$ is the rightmost position of $\ar_w(\ell)$ where $a$ occurs.
These two arrays are very intuitive:
they simply store for each letter of the alphabet its first and last occurrence inside the arch. To avoid confusions, for all $\ell$ and $a$, the values stored in $\firstPosArch[\ell][a]$ and $\lastPosArch[\ell][a]$  are positions of $w$ (so, between $1$ and $|w|$), they are not defined as positions of $\ar_w(\ell)$.

\begin{example}
For $w = 12213.113312.21$, with arches $\ar_w(1) = 12213$ and $\ar_w(2) = 113312$ and rest $\rest(w) = 21$,
we have the following.

\vspace{0.8em}
\bgroup
\def\arraystretch{1.2}%
\begin{tabular}{c|c|c|c}
	& $\ \firstPosArch[\cdot][1]\ $\ & $\ \firstPosArch[\cdot][2]\ $ & $\ \firstPosArch[\cdot][3]\ $ \\
	\hline
	$\ \firstPosArch[1][\cdot]\ $ & 1 & 2 & 5 \\
	$\ \firstPosArch[2][\cdot]\ $ & 6 & 11 & 8 \vspace{1em} \\

	& $\ \lastPosArch[\cdot][1]\ $ & $\ \lastPosArch[\cdot][2]\ $ & $\ \lastPosArch[\cdot][3]\ $ \\
	\hline
	$\ \lastPosArch[1][\cdot]\ $ & 4 & 3 & 5 \\
	$\ \lastPosArch[2][\cdot]\ $ & 10 & 11 & 9
\end{tabular}
\egroup
\end{example}
\vspace{-7pt}
%The arrays $\firstPosArch[\cdot][\cdot]$ and $\lastPosArch[\cdot][\cdot]$ can be computed efficiently.
\begin{lemma}\label[lemma]{lem:firstInArch}
	For a word $w$ of length $n$,
	we can compute $\firstPosArch[\cdot][\cdot]$ and $\lastPosArch[\cdot][\cdot]$ in $O(n)$ time.
\end{lemma}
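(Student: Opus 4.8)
The plan is to compute both arrays in a single left-to-right pass over $w$, reusing the arch boundaries that are already available from the arch factorisation (\cref{thm:archfact-algo}). The key observation is that $\firstPosArch[\ell][a]$ and $\lastPosArch[\ell][a]$ depend only on the occurrences of $a$ within the single factor $\ar_w(\ell)$, so the whole computation splits into $k$ independent scans, one per arch, whose total length is at most $n$.

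First I would allocate the two $k\times\sigma$ arrays and initialise every entry to $\infty$ (or to a sentinel value indicating ``no occurrence yet''); note that for a letter $a$ and arch $\ell$ both entries are always defined and equal to a genuine position, since each arch is $1$-universal and hence contains every letter of $\Sigma$ at least once. I would then traverse $w$ from left to right, maintaining a counter $\ell$ for the index of the current arch (incremented whenever the running position passes an arch boundary $n_1,\ldots,n_k$, which \cref{thm:archfact-algo} supplies). For each position $i$ lying in $\ar_w(\ell)$, with $a = w[i]$, I set $\firstPosArch[\ell][a] \gets i$ only if no occurrence of $a$ has yet been recorded in arch $\ell$ (i.e.\ the entry is still the sentinel), and I unconditionally set $\lastPosArch[\ell][a] \gets i$, so that after the scan of arch $\ell$ the latter entry holds the rightmost occurrence. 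Positions in $\rest(w)$ are simply skipped. Correctness is immediate from these update rules: the guarded assignment captures the first occurrence, and the repeated overwrite captures the last.

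For the running time, the main loop performs $O(1)$ work per position of $w$, giving $O(n)$ for the scan itself. The only subtlety is the array initialisation: a naive reset of the full $k\times\sigma$ table would cost $\Theta(k\sigma)$, which need not be $O(n)$. I expect this to be the main (though minor) obstacle, and the plan is to avoid it by observing that $k\sigma \le n$ in fact holds here, since each arch is $1$-universal and therefore has length at least $\sigma$, so $k\sigma \le |\ar_w(1)\cdots\ar_w(k)| \le n$; thus even explicit initialisation of both arrays is $O(k\sigma) = O(n)$. (Alternatively, one could initialise lazily, touching an entry only when its arch is first scanned, which also keeps the total initialisation cost within $O(n)$.) Combining the $O(n)$ initialisation with the $O(n)$ scan yields the claimed $O(n)$ bound, completing the proof.
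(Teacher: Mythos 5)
Your proposal is correct and follows essentially the same route as the paper's proof, which also performs a single left-to-right scan over $w$, tracking the current arch via the boundaries from \cref{thm:archfact-algo}, recording the first occurrence of each letter in the current arch with a guarded assignment (the paper uses an auxiliary marker array $C$ of size $\sigma$ reset per arch, rather than sentinel values in the tables themselves) and overwriting the last-occurrence entry unconditionally. Your explicit justification that the $\Theta(k\sigma)$ initialisation/reset cost is $O(n)$ because every arch is $1$-universal and hence has length at least $\sigma$ is a nice touch that the paper handles only implicitly via the same amortisation remark made for \cref{alg:archfact}.
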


\begin{proof}
	As illustrated by \cref{alg:firstLastPosArch},
	we can traverse the word $w$ from left to right,
	and similar to the computation of the arch factorisation,
	we keep track of the first encounters of each letter in each arch.
	The entries for the $\lastPosArch[\cdot]$ array will then be updated continuously until the end of an arch is reached.
	
\SetKw{KwAnd}{and}
\begin{algorithm}[!htb]
	\SetAlgoLined
	\KwIn{Word with its arch factorization $w = \ar_w(1) \cdots \ar_w(k) \rest(w)$ with $|w| = n$,
	alphabet $\Sigma$ with $\len \Sigma = \sigma$}
	\KwResult{Arrays $\firstPosArch$ und $\lastPosArch$ for word $w$}
	
	define array $\firstPosArch[1:k][1:\sigma]$\;
	define array $\lastPosArch[1:k][1:\sigma]$\;
	define array $C$ of size $\sigma$\;
	current position $posW \gets 1$\;
	
	\For{$i=1$ \KwTo $k$}{
		reset all elements of $C$ to $0$\;
		\ForEach{letter $a$ in $\ar_w(i)$}{
			\If{$C[a]$ == $0$}{
				$C[a] \gets 1$\;
				$\firstPosArch[i][a] = posW$\;
				$\lastPosArch[i][a] = posW$\;
			} \Else {
				$\lastPosArch[i][a] = posW$\;
			}
			$posW \gets posW + 1$\;
		}
	}
	\BlankLine
	
	\Return $\firstPosArch$ \KwAnd $\lastPosArch$\;
	
	\caption{$\firstPosArch_w(a, i)$ and $\lastPosArch_w(a, i)$}
	\label{alg:firstLastPosArch}
\end{algorithm}

Note that \cref{alg:firstLastPosArch} can be easily combined with \cref{alg:archfact}
to compute the arch factorisation and the arrays $\firstPosArch[\cdot]$ and $\lastPosArch[\cdot]$ while traversing the word $w$ once from right to left.
\end{proof}

We continue with the array $\minArch[\cdot]$ with $n$-elements, where, for $i\in [1:n]$,
$\minArch[i]=j$ if and only if $j=\min\{g\mid \al(w[i:g])\!=\Sigma\}$. If $\{g\mid \al(w[i:g])=\Sigma\}\!=\emptyset$, then $\minArch[i]\!=\!\infty$. Intuitively,
$\minArch[i]$ is the end point of the shortest $1$-universal factor (i.e., arch) of $w$ starting~\mbox{at $i$.} %$w[i:j]$.

\begin{example}
Consider the word $w=12213.113312.21$ with the arches $\ar_w(1)=12213$ and $\ar_w(2)=113312$ and the rest $\rest(w)=21$.
We have $\minArch[j]=5$ for $j\in[1:3]$, $\minArch[j]=11$ for $j\in [4:9]$, and $\minArch[j]=\infty$ for $j\in [10:13]$.
\end{example}

The array $\minArch[\cdot]$ can be computed efficiently.
\begin{lemma}\label[lemma]{lem:minimalarches}
For a word $w$ of length $n$, we can compute $\minArch[\cdot]$ in $O(n)$ time.
\end{lemma}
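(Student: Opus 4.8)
The goal is to compute the array $\minArch[\cdot]$, where $\minArch[i]$ is the endpoint of the shortest $1$-universal factor starting at position $i$. The plan is to compute this array with a single right-to-left scan of $w$, exploiting the monotonicity of the function $i\mapsto\minArch[i]$.

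First I would observe the key structural fact: as $i$ increases, the value $\minArch[i]$ is (weakly) monotonically nondecreasing. Indeed, if $\al(w[i:g])=\Sigma$, then removing the leftmost letter can only require us to push the right endpoint further right (or leave it unchanged), so $\minArch[i]\le\minArch[i+1]$ whenever both are finite. This monotonicity is exactly what enables a linear-time two-pointer (sliding window) approach. I would maintain a window $[i:g]$ together with a counter array $C[\cdot]$ of size $\sigma$ recording, for the current window, how many times each letter occurs, and a counter $h$ tracking how many distinct letters are still \emph{missing} from the window (i.e., $h=|\Sigma\setminus\al(w[i:g])|$).

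The procedure I would carry out is as follows. Scan $i$ from $n$ down to $1$; for each $i$, first incorporate $w[i]$ into the window by incrementing $C[w[i]]$ and, if this is the first occurrence ($C[w[i]]$ became $1$), decrementing $h$. The right endpoint $g$ never moves left as $i$ decreases, so I keep $g$ as a persistent pointer carried across iterations. If after adding $w[i]$ the window is still not $1$-universal for the current $g$ (this only happens initially, before $g$ has reached a $1$-universal position), I set $\minArch[i]=\infty$. Otherwise the window $[i:g]$ is $1$-universal; I then shrink it greedily from the right \emph{as long as it stays $1$-universal}: while $w[g]$ occurs more than once in the window (i.e. $C[w[g]]>1$), decrement $C[w[g]]$ and decrease $g$ by one. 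After this shrinking, $[i:g]$ is the shortest $1$-universal factor starting at $i$, so I record $\minArch[i]=g$. (One must be careful to decrement $h$ back correctly if the last copy of a letter is ever removed, but the shrink loop only removes letters with multiplicity $\ge 2$, so $h$ stays $0$ throughout shrinking.)

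For correctness, I would verify that after processing index $i$ the pointer $g$ points to the minimal endpoint of a $1$-universal factor starting at $i$, and that this is a valid starting configuration for index $i-1$: adding $w[i-1]$ can only make the window $1$-universal ``sooner,'' consistent with $\minArch[i-1]\le\minArch[i]$. The linear running time follows by amortization: the pointer $g$ only ever decreases across the whole execution (it is never reset), so it moves at most $n$ positions total, and each of the $n$ outer steps does $O(1)$ work plus the amortized shrink cost. The main obstacle I anticipate is the bookkeeping around the boundary between ``$\minArch[i]=\infty$'' entries (the suffix positions from which no $1$-universal factor exists, namely those in $\rest(w)$ read from the right) and the first position at which the window first becomes $1$-universal; I would handle this cleanly by treating $h=0$ as the switch that turns on recording of finite values, and leaving $\minArch[i]=\infty$ for all $i$ processed while $h>0$ still holds with $g=n$ fixed.
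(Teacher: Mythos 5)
Your proposal is correct and is essentially the paper's own argument: both maintain a sliding window with a letter-count array $C[\cdot]$ and a missing-letter counter, shrink the window greedily to the minimal $1$-universal factor, and charge the total pointer movement to obtain $O(n)$ amortized time. The only difference is cosmetic --- you scan right-to-left with the right endpoint $g$ shrinking, while the paper runs the two pointers $p_1,p_2$ left-to-right --- and your monotonicity observation $\minArch[i]\le\minArch[i+1]$ and the termination test $C[w[g]]=1$ correctly justify minimality of the recorded endpoint, exactly mirroring the paper's correctness argument.
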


\begin{proof}
The algorithm starts with a preprocessing phase in which we define an array count$[\cdot]$ of size $|\Sigma|$ and initialize its components with $0$. We also initialize all elements of the the array $\minArch[\cdot]$ with $\infty$.

In the main phase of our algorithm,
we use a $2$ pointers strategy.
We start with both $p_1$ and $p_2$ on the first position of $w$.
Then, the algorithm consists in an outer loop which has two inner loops.
The outer loop is executed while $p_2 \leq n$.
We describe in the following one iteration of the outer loop.
We start by executing the first inner loop where,
in each iteration,
we increment repeatedly $p_2$ by $1$
(i.e., advance it one position to the right in the word $w$),
until the factor $w[p_1:p_2]$ contains all letters of the alphabet
(i.e., it is $1$-universal).
If $p_2$ becomes $n+1$,
we simply stop the execution of this loop.
Then, the second inner loop is executed only if $p_2 \leq n$.
In each iteration of this loop,
while the factor $w[p_1:p_2]$ contains all the letters of the alphabet,
we set $\minArch[p_1] \gets p_2$,
and we advance $p_1$ one position towards the right.
This second loop stops as soon as $w[p_1:p_2]$ does not contain all the letters of the alphabet anymore.
Then, we continue to the next iteration of the outer loop
with the current values of $p_1$ and $p_2$ as computed in this iteration,
but only if $p_2\leq n$.

\medskip
The pseudo-code of this algorithm is described in \cref{alg:minArch}.

\SetKw{KwAnd}{and}
\SetKwFunction{KwMinArch}{minArch}
\begin{algorithm}[!ht]
	\SetAlgoLined
	\KwIn{Word $w $ with $|w| = n$,
		alphabet $\Sigma$ with $\len \Sigma = \sigma$}
	\KwOut{Array $\minArch$ for word $w$}
	\BlankLine
	
	define array count of length $\sigma$ with elements initially set to $0$\;
	define array $\minArch$ of length $n$ with elements initially set to $\infty$\;
	alph $\gets 0$, $p_1 \gets 1$, $p_2 \gets 1$\;
	\BlankLine
	
	\While{$p_2 \leq n$}{
		\While{alph $\neq \sigma$ \KwAnd $p_2 \neq n+1$}{
			count$[w[p_2]] \gets$ count$[w[p_2]] + 1$\;
			\If{count$[w[p_2]]$ == $1$}{
				alph $\gets$ alph $+ 1$\;
			}
			\If{alph $< \sigma$}{
				$p_2 \gets p_2 + 1$\;
			}
		}
		\BlankLine
		
		\If{$p_2$ == $n + 1$}{
			\Return $\minArch$\;
		}
		\BlankLine
		
%		$\minArch[p_1] \gets p_2$\;
		\While{ alph == $\sigma$}{
			$\minArch[p_1] \gets p_2$\;
			count$[w[p_1]] \gets$ count$[w[p_1]] - 1$\;
			\If{count$[w[p_1]]$ == $0$}{
				alph $\gets$ alph $- 1$\;
			}
			$p_1 \gets p_1 + 1$\;
		}
		$p_2 \gets p_2 + 1$\;
	}
	\BlankLine
	
	\Return $\minArch$\;
	\caption{minArch($w$)}
	\label{alg:minArch}
\end{algorithm}

\medskip
The correctness can be shown as follows.

%\medskip
Firstly, it is clear that at the beginning of the execution of an iteration of the outer loop $w[p_1:p_2-1]$ is not $1$-universal. Thus, the same property holds for all of the prefixes and suffixes of $w[p_1:p_2-1]$.  Now, let $a$ be the value of $p_1$ at this point (at the beginning of the execution of an iteration of the outer loop). Then, in the first inner loop we identify the smallest value of $b=p_2$ such that $w[a:b]=w[p_1:p_2]$ is $1$-universal, and it is correct to set $\minArch[a]\gets b$. In the second inner loop we increment $p_1$ to identify all the suffixes $w[p_1:b]$ of $u$ which are $1$-universal. None of these strings has a proper prefix which is also $1$-universal, as this would imply that $\minArch[a]< b$, a contradiction. So, for all the values $p_1$ visited in this loop it is correct to set  $\minArch[p_1]\gets b$. Finally, once we have found the largest $p_1$ such that $w[p_1:b]$ is $1$-universal, we increment $p_1$ and $p_2$ (i.e., $p_2$ becomes $b+1$) and restart from line $5$. This is correct as: $\minArch[j]$ is correctly set for $j<p_1$ and $w[p_1:p_2-1]$ is not $1$-universal, so the same reasoning as above applies.

Now assume that there exists a position $i$ with $j=\min\{g\mid \al(w[i:g])=\Sigma\}$ and we did not set $\minArch[i]=j$. This means that when $p_2$ was set to be $j$ during the execution of our algorithm, we had $p_1>i$. In order for this to hold, our algorithm should have already set $\minArch[i]=j'<j$. As shown above, this means that $w[i:j']$ is $1$-universal, a contradiction with the definition of $j$.

Further, we note that if the algorithm returns $\minArch$ in line $12$, then it can only do so because $p_2$ was incremented in line $10$ (so alph $<\sigma$). This means, that no new arch was found starting on $p_1$ and ending on a position $\leq n$. Therefore, $\minArch$ is correctly filled in. There is another possibility for the algorithm to end: we set $p_2=n+1$ in line $19$, and the outer while-loop is not executed anymore. As explained above, we have that $\minArch[j]$ is correctly set for $j<p_1$, and there is no $1$-universal word contained in $w[p_1:n]$ (so $\minArch[j]$ is also correctly set for $j\geq p_1$).

Now, let us discuss the complexity of the algorithm. The algorithm includes a preprocessing phase, which takes $O(n)$ time.

Each of the pointers $p_1$ and $p_2$ visits each position of $w$ exactly once, and in each step of the algorithm we increment one of these pointers. This incrementation is based on the result returned by a check whether $w[p_1:p_2]$ contains all the letters of the alphabet or not. This check can be done in constant time by maintaining an array which counts how many copies of each letter of the alphabet exist in the factor $w[p_1:p_2]$ and, based on it, a counter keeping track of $|\al(w[p_1:p_2)]|$.

\medskip
So, overall, the complexity of this algorithm is linear.
\end{proof}
\eject

A very important consequence of \cref{lem:minimalarches} is that we can define a tree-structure of the arches (i.e., the $1$-universal factors occurring in a word). Recall that we have defined $\llo(w)= \min\{\lastpos_w(a, \linebreak n)\mid a\in \Sigma\}$. That is, $\llo(w)$ is the position of the leftmost of the last occurrences of the letters of $\Sigma$ in $w$, and can be computed in linear time by \cref{lem:llo}. It is immediate to see that the letter $w[\llo(w)]$ does not occur in $\rest(w)$.
\begin{definition}\label{def:archTree}
Let $w$ be a word of length $n$ over $\Sigma$. The arch-tree of $w$, denoted by ${\mathcal A}_w$, is a rooted labelled tree defined as follows:
\begin{itemize}
\item The set of nodes of the tree is $\{i\mid 0\leq i\leq n+1\}$, where the node $n+1$ is the root of the tree ${\mathcal A}_w$. The root node $n+1$ is labelled with the letter $w[\llo(w)]$, the node $i$ is labelled with the letter $w[i]$, for all $i\in [1:n]$, and the node $0$ is labelled with $\uparrow$.
\item For $i\in [1:n-1]$, we have two cases:
\begin{itemize}
\item if $\minArch[i+1]=j$ then node $i$ is a child of node $j$.
\item if $\minArch[i+1]=\infty$ then node $i$ is a child of the root $n+1$.
\end{itemize}
\item Node $n$ is a child of node $n+1$.
\end{itemize}
\end{definition}
\begin{example}\label[example]{ex:tree}For the word $w=12213.113312.21$, the root $14$ of ${\mathcal A}$ has the children $9, 10,11, 12, 13$ and is labelled with $3$, the node $11$ has the children $3,4,5,6,7,8$, and the node $5$ has the children $0,1,2$. So, the root $14$ and the nodes $11$ and $5$ are internal nodes, while all other nodes are leaves.
\end{example}

\vspace{-8pt}
\begin{theorem}\label[theorem]{thm:archTree}
For a word $w$ of length $n$, we can construct ${\mathcal A}_w$ in $O(n)$ time.
\end{theorem}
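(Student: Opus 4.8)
The plan is to construct $\mathcal{A}_w$ directly from \Cref{def:archTree}, using the array $\minArch[\cdot]$ as the central ingredient. By \Cref{lem:minimalarches}, we can compute $\minArch[\cdot]$ for the whole word $w$ in $O(n)$ time, and by \Cref{lem:llo} we can compute $\llo(w)$ in $O(n)$ time. These two computations give us everything needed to fix the labels of the nodes: node $i$ is labelled $w[i]$ for $i\in[1:n]$, the root $n+1$ is labelled $w[\llo(w)]$, and node $0$ is labelled $\uparrow$. Assigning all $n+2$ labels is then a single linear-time pass.

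Next I would build the parent pointers. The tree has node set $\{0,1,\ldots,n+1\}$, so I allocate an array $\mathrm{parent}[\cdot]$ of size $n+2$. Following \Cref{def:archTree}, for each $i\in[1:n-1]$ I set $\mathrm{parent}[i]=\minArch[i+1]$ if $\minArch[i+1]\neq\infty$, and $\mathrm{parent}[i]=n+1$ otherwise; I set $\mathrm{parent}[n]=n+1$; and node $n+1$ is the designated root. Each assignment is a constant-time lookup in the precomputed $\minArch[\cdot]$ array, so filling in all parent pointers costs $O(n)$ time. From the parent array one obtains the standard children-adjacency representation of the tree by a second linear scan: for each non-root node $i$, append $i$ to the child-list of $\mathrm{parent}[i]$. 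This yields the rooted labelled tree $\mathcal{A}_w$ in the desired explicit form.

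I would then briefly argue correctness, which is essentially immediate since the construction transcribes \Cref{def:archTree} verbatim. The only point that merits a remark is that the parent assignments indeed produce a valid rooted tree: every node except the root $n+1$ is given exactly one parent, and because $\minArch[i+1]>i$ whenever it is finite (the endpoint of the shortest $1$-universal factor starting at $i+1$ must lie strictly to the right of $i$), every parent pointer strictly increases the node index, which rules out cycles and guarantees that following parent pointers from any node eventually reaches $n+1$. Hence the structure is genuinely a tree rooted at $n+1$.

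The total running time is the sum of the $O(n)$ cost of \Cref{lem:minimalarches}, the $O(n)$ cost of \Cref{lem:llo}, and the two $O(n)$ scans for labelling and for building the adjacency lists, giving $O(n)$ overall. I do not anticipate a genuine obstacle here: the substance of the result is entirely carried by the linear-time computation of $\minArch[\cdot]$ in \Cref{lem:minimalarches}, and this theorem is mostly a matter of assembling that array into the tree representation. The one step where I would be slightly careful is confirming the acyclicity/tree property via the monotonicity $\minArch[i+1]>i$, and handling the boundary nodes $0$, $n$, and $n+1$ exactly as the definition prescribes.
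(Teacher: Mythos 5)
Your proposal is correct and takes exactly the route the paper intends: the paper's proof is the single remark that the theorem is a straightforward consequence of \cref{lem:minimalarches}, and your write-up simply fleshes out that same construction (compute $\minArch[\cdot]$ and $\llo(w)$ in linear time, transcribe \cref{def:archTree} into parent pointers and child lists, and note acyclicity via $\minArch[i+1]>i$). Nothing further is needed.
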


\begin{proof}
This is a straightforward consequence of \cref{lem:minimalarches}.
\end{proof}

Constructed in a straightforward way based on \cref{lem:minimalarches}, the arch-tree ${\mathcal A}_w$ encodes all the arches occurring in $w$. Now, we can identify an $\sas$ in $w[i:n]$ by simply listing (in the order from $i-1$ upwards to $n+1$) the labels of the nodes met on the path from $i-1$ to $n+1$, without that of node $i-1$. % and choosing only one of the letters with label $n+1$.

For a word $w$ and each $i\in [0:n-1]$, we define $\depth(i)$ as the number of edges on the shortest path from $i$ to the root of $A_w$. In this case, for $i\in [0:n]$, we have that $w[i+1:n]$ has universality index $\depth(i)-1$.

\begin{example} For the word $w=12213.113312.21$ (same as in \cref{ex:tree}), an $\sas$ in $w[5:13]$ contains the letters $w[11]=2$ and the label $3$ of $14$, so $23$ (corresponding to the path $4\rightarrow 11\rightarrow 14$). An $\sas$ in $w[2:13]$ contains, in order, the letters $w[5]=3, w[11]=2$ and the label $3$ of $14$, so $323$ (corresponding to the path $1\rightarrow 5\rightarrow 11\rightarrow 14$). We also have $\depth(4)=2$, so $\iota(w[5:13])=1$, while $\depth(1)=3$ and $\iota(w[2:13])=2$.
\end{example}

Enhancing the construction of  ${\mathcal A}_w$ from \cref{thm:archTree} with level ancestor data structures \cite{BenderFarachTCS} for this tree, we can now efficiently process {\em internal $\sas$ queries} for a given word $w$, i.e., we can efficiently retrieve a compact representation of an $\sas$ for each factor of $w$.  \looseness=-1
\begin{theorem}\label[theorem]{thm:rangeQueries}
For a word $w$ of length $n$ we can construct in $O(n)$ time data structures allowing us to answer in $O(1)$ time  queries {\em $\sasRange(i,j)$: ``return a representation of an $\sas$ of $w[i:j]$". }
\end{theorem}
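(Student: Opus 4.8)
The plan is to exploit the arch-tree $\mathcal{A}_w$ from \cref{thm:archTree} together with the observation recorded after \cref{def:archTree}, namely that reading the node-labels along the upward path from node $i-1$ to the root yields an $\sas$ of the suffix $w[i:n]$. The first thing I would make precise is a reformulation of the arch structure: for a fixed start position $i$, the endpoints of the successive arches of $w[i:n]$ are exactly the proper ancestors $e_1<e_2<\cdots$ of node $i-1$ in $\mathcal{A}_w$ (with $e_1=\minArch[i]$, $e_{t+1}=\minArch[e_t+1]$, each label $w[e_t]$ being the last letter of the corresponding arch), and that node-values strictly increase towards the root. Hence the arches of the factor $w[i:j]$ are precisely those $e_t$ with $e_t\le j$, so that $\iota(w[i:j])=\depth(i-1)-\depth(a)$, where $a$ is the deepest ancestor of $i-1$ (possibly $i-1$ itself, when no arch fits) still satisfying $a\le j$; equivalently, $a$ is the unique ancestor with $a\le j<\mathrm{parent}(a)$.

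Next I would pin down the exact shape of an $\sas$ of $w[i:j]$ and show it is encoded by the $i$-to-$a$ portion of this path. By \cref{lem:sas} applied inside $w[i:j]$, an $\sas$ is obtained by taking, for each of the $\ell=\depth(i-1)-\depth(a)$ arches, its last letter, and appending one letter absent from the remaining suffix $w[a+1:j]$. The first $\ell$ letters are exactly the labels $w[e_1],\ldots,w[e_\ell]$ read off the path from $i-1$ up to $a$. For the final letter I would use the label of $\mathrm{parent}(a)$: if $\mathrm{parent}(a)=e_{\ell+1}$ is internal, then $w[e_{\ell+1}]$ is the last letter of the $(\ell+1)$-st arch and, occurring uniquely in that arch at position $e_{\ell+1}>j$, is absent from $w[a+1:j]$; if $\mathrm{parent}(a)$ is the root, then $\minArch[a+1]=\infty$, so $w[a+1:n]$ (hence $w[a+1:j]$) is not $1$-universal and, by the definition of $\llo$ and \cref{lem:llo}, the root-label $w[\llo(w)]$ is absent from it. In either case the query returns the compact triple (node $i-1$, node $a$, final letter); the $\sas$, of the correct length $\ell+1$, is recovered by walking the path, and any individual position of it can be addressed in $O(1)$ via level-ancestor queries $\LA(i-1,\cdot)$ on $\mathcal{A}_w$ \cite{BenderFarachTCS}.

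It then remains to make every ingredient run in $O(1)$ after $O(n)$ preprocessing. Building $\mathcal{A}_w$, the $\minArch$ array, the depths $\depth(\cdot)$, and the root-label $w[\llo(w)]$ is linear by \cref{lem:minimalarches,thm:archTree,lem:llo}, and the level-ancestor structure of \cite{BenderFarachTCS} is built in $O(n)$ and queried in $O(1)$. The one genuinely non-routine primitive — and the step I expect to be the main obstacle — is locating $a$, i.e.\ answering in constant time the \emph{threshold-ancestor} query ``the deepest ancestor of $i-1$ whose value does not exceed $j$''. One must resist the tempting shortcut of reading $a$ off $\mathrm{LCA}(i-1,j)$: because an arch's last letter may recur in later arches, the upward path of node $j$ can overshoot $a$ arbitrarily, so this fails. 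This is a weighted-ancestor query, for which arbitrary trees face predecessor-type lower bounds; the structure to exploit is that $\mathcal{A}_w$ is very special — its node set is exactly $\{0,\ldots,n+1\}$, each node coincides with its value, and parents always carry strictly larger values than their children, so the set of nodes with value $\le j$ is downward closed and $a$ is simply the topmost node reachable upward from $i-1$ before the value first exceeds $j$. I would design an $O(n)$-preprocessing, $O(1)$-query structure for this monotone, position-labelled tree, exploiting the value-monotonicity along root paths together with the level-ancestor machinery; once $a$ and $\depth(a)$ are available, the remaining arithmetic and the emission of the representation are immediate.
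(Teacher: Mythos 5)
Your overall architecture is the same as the paper's: build ${\mathcal A}_w$ with $\minArch[\cdot]$ and depths in $O(n)$ time (\cref{lem:minimalarches,thm:archTree}), add the level-ancestor structure of \cite{BenderFarachTCS}, represent the answer by the endpoints of a path from $i-1$ upward, and append as final letter the label of the parent of the last path node. Your justification of that final letter is also exactly the paper's, including the root case: your observation that the letter with leftmost last occurrence, $w[\llo(w)]$, is absent from every non-$1$-universal suffix is precisely why the root carries that label in \cref{def:archTree}. (One terminological slip: the node $a$ with $a\le j<\mathrm{parent}(a)$ is the \emph{shallowest} ancestor of $i-1$ with value at most $j$, not the deepest; your equivalent characterization is the correct one, so this is harmless.)

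However, there is a genuine gap, and you name it yourself: locating $a$ in $O(1)$ time is "the step I expect to be the main obstacle," and your proposal ends by promising to "design" an $O(n)$-preprocessing, $O(1)$-query structure for it rather than giving one. That promise cannot be taken on faith: value-monotonicity along root-to-leaf paths is present in \emph{every} weighted-ancestor instance (the query is predecessor search on a root path), which, as you note, is subject to predecessor-type lower bounds with linear space; nothing about "node set $=\{0,\dots,n+1\}$ with value equal to node id" obviously circumvents this. The paper's key idea, which your proposal misses, is that no search is needed at all, because the query parameter $j$ enters through the depth of the node $j-1$. Setting $x=\depth(i-1)$ and $y=\depth(j-1)$, so that $\iota(w[i:n])=x-1$ and $\iota(w[j:n])=y-1$, one shows the sandwich $x-y-1\le\iota(w[i:j])\le x-y$, and hence $\depth(a)\in\{y,y+1\}$. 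Consequently a single query $t=\LA_{{\mathcal A}_w}(i-1,y)$ plus one comparison with $j$ determines $a$: if $t\le j$ then $a=t$, and otherwise $a$ is the child of $t$ towards $i-1$, i.e. $\LA_{{\mathcal A}_w}(i-1,y+1)$. With this observation inserted in place of your hypothetical threshold-ancestor structure, your argument closes and coincides with the paper's proof of \cref{thm:rangeQueries}; without it, the claimed $O(1)$ query time is unsubstantiated.
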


\begin{proof}
We start with the preprocessing phase, in which we construct the data structures allowing us to answer the queries efficiently.

We first construct the tree ${\mathcal A}_w$ for the word $w$. For each node $i$ of $A_w$, we compute $\depth(i)$. This can be done in linear time in a standard way.

\medskip
Then we construct a solution for the {\em Level Ancestor Problem} for the rooted tree ${\mathcal A}_w$. This is defined as follows (see \cite{BenderFarachTCS}). For a rooted tree $T$, $\LA_T(u,d)=v$ where $v$ is an ancestor of $u$ and $\depth(v)=d$, if such a node exists, or $\uparrow$ otherwise. The {\em Level Ancestor Problem} can now be formulated.
\begin{itemize}
\item Preprocessing: A rooted tree $T$ with $N$ vertices. ($T=A_w$ in our case and $N=n+1$)
\item Querying: For a node $u$ in the rooted tree $T$, query $\levelAncestor_{T} (u,d)$ returns $\LA_{T} (u,d)$, if it exists, and false otherwise.
\end{itemize}
A simple and elegant solution for this problem which has $O(N)$ preprocessing time and $O(1)$ time for query-answering can be found in, e.g., \cite{BenderFarachTCS} (see also \cite{BenAmram} for a more involved discussion). So, for the tree $A_w$ we can compute in $O(n)$ time data structures allowing us to answer $\levelAncestor_{A_w}$ queries in $O(1)$ time.

This is the entire preprocessing we need in order to be able to answer $\sasRange$-queries in constant time.

\medskip
We will now explain how an $\sasRange$-query is answered.

Let us assume we have to answer $\sasRange(i,j)$, i.e., to return a representation of an $\sas$ of $w[i:j]$. The compact representation of an $\sas$ of $w[i:j]$ will consist in two nodes of the tree $A_w$ and in the following we explain both how to compute these two nodes, and what is their semantics (i.e., how an $\sas$ can be retrieved from them).

Assume that $\depth(i-1)=x$ and $\depth(j)=y$.
We retrieve the ancestor $t$ of the node $i-1$ which is at distance $x-y$ from this node. So $t$ is the ancestor of depth $y$ of node $i-1$, and can be retrieved as $\levelAncestor_{A_w}(i-1,y)$. We check whether $t> j$ (i.e., $w[i:j] $ is a prefix of $w[i:t]$). If $t> j$, then we set $t'$ to be the successor of $t$ on the path to $i-1$. If $t\leq j$, we set $t'=t$.

The answer to $\sasRange(i,j)$ is the pair of nodes $(i-1,t')$. This answer can be clearly computed in constant time after the preprocessing we have performed.

We claim that this pair of nodes is a compact representation of an $\sas$ of $w[i:j]$, and such an $\sas$ can be obtained as follows: we go in the tree $A_w$ from the node $i-1$ on the path towards node $t'$ and output, in order, the labels of the nodes we meet (except the label of node $i-1$). Then we output the label of the parent node of $t'$. This is an $\sas$ of $w[i:j]$.

\medskip
Let us now explain why the above claim holds.

We recall the following basic fact: if $u$ and $v$ are words, then $\iota(u)+\iota(v)\leq \iota(uv)\leq \iota(u)+\iota(v)+1$. That is, $\iota(u)\in \{\iota(uv)-\iota(v), \iota(uv)-\iota(v)-1\}.$

Now, from the fact that $\depth(i-1)=x$ and $\depth(j)=y$ we get that $\iota(w[i:n])=x-1$ and $\iota(w[j+1:n])=y-1$. Therefore, from the fact recalled above (with $u=w[i:j]$ and $v=w[j+1:n]$), we have that $(x-1)-(y-1)-1=x-y-1\leq \iota(w[i:j])\leq (x-1)-(y-1)=x-y$. Thus, we compute $w[i:t]$, the shortest factor of $w$ starting on position $i$ which is $x-y$ universal; clearly, $t=\levelAncestor_{A_w}(i-1,y)$. Now, there are two possibilities. In the first case, $\iota(w[i:j])=x-y-1$ and $j< t$. Then, for $t'$ the successor of $t$ on the path to $i-1$, we have that $w[i:t']$ is the shortest $(x-y-1)$-universal prefix of $w[i:j]$ and $w[t]$ is a letter which does not appear in $w[t'+1:j]$. In the second case, $\iota(w[i:j])=x-y$ and $j\geq t$. Note that we also have that $\iota(w[i:t])=x-y$, so $j$ cannot be the parent of $t$ (otherwise, we would have $\iota(w[i:j])=x-y+1$). Then, in this case, we can take $t'=t$ and we have that $w[i:t']=w[i:t]$ is the shortest $(x-y)$-universal prefix of $w[i:j]$ and, if we take $t''$ to be the parent of $t'$ (that is, the parent of $t$), then $w[t'']$ is a letter which does not appear in $w[t+1:j]$ (which is a strict prefix of the shortest $1$-universal factor $w[t+1:t'']$ of $w$ starting on position $t+1$). Note again that, in this case, since $\iota(w[i:j])= x-y$ and $\iota(w[i:t''])=x-y+1$, we have that $j< t''$.

In both cases, the label of the nodes found on the path from node $i-1$ towards node $t'$, without the label of $i-1$, form the subsequence $m(w[i:j])$, from \cref{archfact}. To this subsequence we add either $w[t]$ (if $t\neq t'$) or $w[t'']$ (otherwise), and obtain an absent subsequence of length $\iota(w[i:j])+1$ of $w[i:j]$. This concludes our proof.
\end{proof}

In order to obtain compact representations of all the $\sas$ and $\mas$ of a word $w$, we need to define a series of additional arrays.

For a word $w$ of length $n$, we define the array $\dist[\cdot]$ with $n$-elements, where, for $i\in [1:n]$,
$\dist[i]=\min\{|u|\mid u$ is an absent subsequence of minimal length of $w[i:n],$ starting with $w[i]\}$.
The intuition behind the array $\dist[\cdot]$, and the way we will use it, is the following. Assume that $w$ has $k$ arches $\ar_w(1),\ldots,\ar_w(k)$, and $i\in \ar_w(\ell)$. Then there exists an $\sas$ of $w$, denoted by $w[i_1]\ldots w[i_{k+1}]$, which contains position $i$ only if $\dist[i]=k+1 - (\ell-1)=k-\ell+2$. Indeed, an $\sas$ contains one position of every arch, so if $i$ is in that $\sas$, then it should be its $\ell^{th}$ position, and there should be a word $u$ starting on position $i$, with $|u|=k-\ell+2$, such that $u$ is not a subsequence of $w[i:n]$. Nevertheless, for all positions $j$ of $\ar_w(\ell)$ it holds that $\dist[j]\geq k-\ell+2$.

\begin{example}
For the word $w=12213.113312.21$, we have $\dist[i]=2$ for $i\in [9:13]$ (as exemplified by the word $w[i]3$), $\dist[i]=3$ for $i\in [3:8]$ (as exemplified by the word $w[i]23$), and $\dist[i]=4$ for $i\in [1:2]$ (as exemplified by the word $w[i]323$). Note that the single shortest absent sequence in this word is $323$.
\end{example}

\vspace{-7pt}
\begin{lemma}\label[lemma]{lem:distance}
For a word $w$ of length $n$, we can construct $\dist[\cdot]$ in $O(n)$ time.
\end{lemma}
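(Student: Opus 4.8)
The plan is to first give a clean combinatorial description of the entries of $\dist[\cdot]$ and then turn it into a linear-time right-to-left computation. I claim that for every $i\in[1:n]$ we have $\dist[i]=\iota(w[i+1:n])+2$, where we set $\iota(\varepsilon)=0$ for the empty suffix. To see this, fix $i$ and let $a=w[i]$. Any absent subsequence of $w[i:n]$ starting with $a$ has the form $au'$, and by the standard greedy matching used in \cref{alg:isSubseq}, $au'$ is a subsequence of $w[i:n]$ if and only if $u'$ is a subsequence of $w[i+1:n]$, since the leftmost occurrence of $a$ in $w[i:n]$ is position $i$ itself. Hence $au'$ is absent from $w[i:n]$ exactly when $u'$ is absent from $w[i+1:n]$, and minimising $|au'|=1+|u'|$ amounts to choosing $u'$ as a shortest absent subsequence of $w[i+1:n]$. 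Since every $\sas$ of a word $x$ has length $\iota(x)+1$, this yields $\dist[i]=1+(\iota(w[i+1:n])+1)=\iota(w[i+1:n])+2$.

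The remaining task is therefore to compute $\iota(w[i:n])$ for every suffix of $w$ in linear total time. First I would invoke \cref{lem:minimalarches} to obtain the array $\minArch[\cdot]$ in $O(n)$ time. Writing $U[i]:=\iota(w[i:n])$ for $i\in[1:n+1]$ with $U[n+1]=0$, the arch factorisation gives the recurrence $U[i]=0$ whenever $\minArch[i]=\infty$ (no $1$-universal factor starts at $i$, so the whole suffix is the rest), and $U[i]=1+U[\minArch[i]+1]$ otherwise, because $w[i:\minArch[i]]$ is exactly the first arch of the suffix $w[i:n]$ and $w[\minArch[i]+1:n]$ is the remaining suffix. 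Filling $U[\cdot]$ from $i=n+1$ down to $i=1$ costs $O(1)$ per entry, hence $O(n)$ overall. (Equivalently, one could read these values off as $U[i]=\depth(i-1)-1$ from the arch-tree of \cref{def:archTree}, but the direct recurrence avoids constructing the tree.)

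Finally, I would set $\dist[i]=U[i+1]+2$ for every $i\in[1:n]$, a single $O(n)$ pass over the already computed array $U[\cdot]$; correctness is immediate from the characterisation above, and all three stages run in linear time, giving the claimed $O(n)$ bound. I expect the only genuinely delicate point to be the justification of the identity $\dist[i]=\iota(w[i+1:n])+2$: one must argue the greedy matching of the first letter carefully and handle the boundary case $i=n$, where $w[i+1:n]=\varepsilon$ and $\dist[n]=2$, realised for instance by $w[n]w[n]$. By contrast, the recurrence for $U[\cdot]$ and the final pass are routine once $\minArch[\cdot]$ is available.
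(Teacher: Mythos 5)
Your proof is correct and takes essentially the same approach as the paper: both reduce $\dist[i]$ to the length of an $\sas$ of $w[i+1:n]$ (yielding $\dist[i]=\iota(w[i+1:n])+2=\depth(i)+1$) and obtain these suffix universality indices in linear time from $\minArch[\cdot]$ via \cref{lem:minimalarches}. Your right-to-left recurrence for $U[\cdot]$ is exactly the depth computation in the arch-tree ${\mathcal A}_w$ without materialising the tree --- as you note yourself --- and if anything you justify the key identity (the greedy matching of the first letter, plus the boundary case $\dist[n]=2$) more carefully than the paper's ``it is not hard to see.''
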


\begin{proof}
It is clear that $\dist[n]=2$. So, from now on, we show how to compute $\dist[i]$ for $i<n$.
It is not hard to see that an absent subsequence of minimal length of $w[i:n],$ starting with $w[i]$, can be obtained by concatenating $w[i]$ to an $\sas$ of $w[i+1:n]$. The latter can be computed, e.g., using $\sasRange[i+1:n]$, and its length is $\depth(i)$. We can thus conclude that, $\dist[i]=\depth(i)+1$. So, as $A_w$ can be computed in linear time, and the depth of the nodes of this tree can also be computed in linear time, the conclusion follows.
\end{proof}

Next, we introduce two additional arrays $\sortedLast[\cdot][\cdot]$ and $\Leq[\cdot][\cdot]$ which are crucial in our representation of the shortest absent subsequences of a word.
Let $w$ be a word of length $n$, with arch factorization $w=\ar_w(1)\cdots \ar_w(k)\rest(w)$. For each $\ell\in[1:k-1]$, we define $L_\ell$ to be the set $\{\lastPosArch [\ell][a]\mid a\in \Sigma\} $ of the rightmost positions of any letter $a\in\Sigma$ in the $\ell^{th}$ arch of $w$. Next we filter out all those positions from $L_\ell$ which cannot be the first letter of an $\sas$ of $\ar_w(l+1)\cdots \ar_w(k)\rest(w)$ and store the remaining positions in the set $L'_\ell=L_\ell \setminus \{\lastPosArch [\ell][a]\mid \dist(\firstPosArch[\ell+1][a])>k-\ell+1\}$. Finally, we define the array $\sortedLast[\ell][\cdot]$, with $|L'_\ell|$ elements, which contains the elements of $L'_\ell$ sorted in ascending order. For $\ell=k$, we proceed in the same way, except that in the second step we filter out all positions which correspond to letters occurring in $r(w)$. We define $L_k$ to be the set $\{\lastPosArch [k][a]\mid a\in \Sigma\} $ and $L'_k=L_k \setminus \{\lastPosArch [k][a]\mid a\in \al(\rest(w))\}$. Then, once more, we define the array $\sortedLast[k][\cdot]$, with $|L'_k|$ elements, which contains the elements of $L'_k$ sorted in ascending order.

\medskip
Moreover, we define for each $\ell$ an array with $|\ar_w(\ell)|$ elements $\Leq[\ell][\cdot]$ where $\Leq[\ell][i] = \max(\{t\mid \sortedLast[\ell][t]\leq i\}\cup\{-\infty\})$. For simplicity, we assume that $\Leq[\ell][\cdot]$ is indexed by the positions of $\ar_w(\ell)$ (i.e., if $\ar_w(\ell)=w[x:y]$ then $\Leq[\ell][\cdot]$ is indexed by the numbers from $x$\linebreak to $y$).

\medskip
The intuition behind these arrays is the following. Assume that $i$ is a position in $\ar_w(\ell)$. Then, by the greedy strategy described in the proofs of \cref{thm:mas,lem:sas}, $i$ can be contained in an $\sas$ only if $i=\firstPosArch[\ell][w[i]]$. Moreover, we need to have that $\dist[i]=k-\ell+2$, as explained when the intuition behind the array $\dist[\cdot] $ was described. That is,
$\lastPosArch[l-1][w[i]]$ needs to be in $L'_{\ell-1}$. Finally, an $\sas$ going through $i$ can only continue with a letter $a$ of the arch $\ell+1$ such that $\dist[\firstPosArch[\ell+1][a]]=k-\ell+1$, which, moreover, occurs last time in the arch $\ell$ on a position $\leq i$ (otherwise, the $\sas$ would have two letters in arch $\ell$, a contradiction). So, $a$ is exactly one of the letters on the positions given by the elements of $\sortedLast[\ell][1:\Leq[\ell][i]]$.

\begin{example}For $w=12213.113312.21$ we have $\ar_w(1)=12213$, $\ar_w(2)=113312$, and $L_1=\{3,4,5\}$. From this set, we eliminate positions $\lastPosArch[1][1]=4$ and $\lastPosArch[1][3]=5$, since $\dist[\firstPosArch[2][1]]=\dist[6]=3>2-1+1$ and $\dist[\firstPosArch[2][3]]=\dist[8]=3>2-1+1$. So $L'_1=\{3\}$. Thus, $\sortedLast[1][\cdot]$ has only one element, namely $3$. Accordingly, $\Leq[1][i]=-\infty$ for $i\in [1:2]$ and $\Leq[1][3]=1$ (corresponding to the element $\sortedLast[1][1]=3$ of the array $\sortedLast[1][\cdot]$); moreover, $\Leq[1][4]=\Leq[1][5]=1$. We have $L_2=\{9,10,11\}$. From this set, we once more eliminate two positions $\lastPosArch[2][1]=10$ and $\lastPosArch[2][2]=11$, because $1,2\in \al(\rest(w))$. So $L'_2=\{9\}$. Thus, $\sortedLast[2][\cdot]$ has only one element, namely $9$. Accordingly, $\Leq[2][i]=-\infty$ for $i\in [6:8]$ and $\Leq[2][9]=\Leq[2][10]=\Leq[2][11]=1$ (corresponding to the element $\sortedLast[2][1]=9$ of the array $\sortedLast[2][\cdot]$).
\end{example}

Based on the arrays $\sortedLast[\ell][\cdot]$, for $\ell\leq k$, we define the corresponding arrays $\Lex[\ell][\cdot]$ with, respectively, $|L'_\ell|$ elements, where $\Lex[\ell][i]=t$ if and only if $w[t]=\min\{w[j]\mid j\in \sortedLast[\ell][1:i]\}$. In other words, $w[\,\Lex[\ell][i]\,]$ is the lexicographically smallest letter occurring on the positions of $w$ corresponding to the first $i$ elements of $\sortedLast[\ell][\cdot]$.
Finally, we collect in a list $\startSAS$ the elements $\firstPosArch[1][a]$, for $a\in \Sigma$, such that $\dist[\firstPosArch[1][a]]=k+1$. We assume $\startSAS$ is ordered such that $\firstPosArch[1][a]$ comes before $\firstPosArch[1][b]$ in this list if $a$ is lexicographically smaller than $b$. Moreover, let $\mathtt{init}$ be the element of $\startSAS$ such that $w[\mathtt{init}]$ is lexicographically smaller than $w[j]$, for all $j\in \startSAS \setminus \{\mathtt{init}\}$; that is $\mathtt{init}$ is the first element in $\startSAS$.

\begin{lemma}
For a word $w$ of length $n$, we can construct $\sortedLast[\cdot][\cdot]$, $\Leq[\cdot][\cdot]$, $\Lex[\cdot][\cdot]$, and the list $\startSAS$ in $O(n)$ time.
\end{lemma}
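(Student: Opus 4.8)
The plan is to compute every one of these objects in a constant number of left-to-right sweeps of $w$, reusing the arrays produced by the earlier linear-time constructions. By \cref{lem:firstInArch} and \cref{lem:distance} I already have $\firstPosArch[\cdot][\cdot]$, $\lastPosArch[\cdot][\cdot]$ and $\dist[\cdot]$ available, together with the arch boundaries from \cref{thm:archfact-algo}; in one extra scan I would precompute an array giving, for each position $i$, the index of the arch containing $i$, as well as a Boolean table marking the letters of $\al(\rest(w))$. All of this takes $O(n)$ time and space. Before anything else I would record the size bound that makes linear time possible: since every arch is $1$-universal we have $|\ar_w(\ell)|\ge \sigma$, hence $k\sigma\le\sum_{\ell}|\ar_w(\ell)|\le n$. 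Consequently $\sum_\ell \len{L'_\ell}\le k\sigma\le n$, so the combined size of $\sortedLast[\cdot][\cdot]$ and $\Lex[\cdot][\cdot]$ is $O(n)$; likewise $\sum_\ell|\ar_w(\ell)|\le n$ bounds the size of $\Leq[\cdot][\cdot]$, and $\startSAS$ has at most $\sigma\le n$ entries.

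The key observation, which lets us avoid an explicit sort, is that $\sortedLast[\ell][\cdot]$ can be produced already in ascending order by sweeping $w$ once from left to right. At a position $i$ lying in arch $\ell$, I would test whether $i=\lastPosArch[\ell][w[i]]$: this holds exactly when $i$ is the rightmost occurrence of $w[i]$ in $\ar_w(\ell)$, i.e. when $i\in L_\ell$. If so, I apply the filter in $O(1)$: for $\ell<k$ the position $i$ is kept iff $\dist[\firstPosArch[\ell+1][w[i]]]\le k-\ell+1$, and for $\ell=k$ it is kept iff $w[i]\notin\al(\rest(w))$, using the precomputed lookups (here $\firstPosArch[\ell+1][w[i]]$ is well defined because every arch is $1$-universal). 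Each surviving position is appended to $\sortedLast[\ell][\cdot]$; since the sweep visits the positions of each arch in increasing order, the resulting lists come out sorted with no additional work, and the whole step costs $O(n)$.

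During the same sweep I would maintain the running prefix-minimum needed for $\Lex[\ell][\cdot]$: whenever a position $i$ becomes the $t$-th entry of $\sortedLast[\ell][\cdot]$, I set $\Lex[\ell][t]\gets i$ if $w[i]$ is strictly smaller than the letter recorded at $\Lex[\ell][t-1]$, and $\Lex[\ell][t]\gets\Lex[\ell][t-1]$ otherwise; this is $O(1)$ per appended element. The arrays $\Leq[\ell][\cdot]$ are then filled by a two-pointer merge per arch: walking the positions $i$ of $\ar_w(\ell)$ in increasing order while advancing a pointer $p$ as long as $\sortedLast[\ell][p+1]\le i$, and setting $\Leq[\ell][i]\gets p$ (with $p=0$ interpreted as $-\infty$); this is linear in $|\ar_w(\ell)|$, hence $O(n)$ overall. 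Finally, $\startSAS$ is assembled by iterating over $a\in\Sigma$ and keeping $\firstPosArch[1][a]$ whenever $\dist[\firstPosArch[1][a]]=k+1$, while $\mathtt{init}$ is the retained position carrying the smallest letter, which is simply the smallest such $a$; this costs $O(\sigma)\le O(n)$.

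The only genuine obstacle is the sorting implicit in $\sortedLast[\cdot][\cdot]$ and the prefix structure of $\Lex[\cdot][\cdot]$: a naive per-arch sort would cost $O(k\sigma\log\sigma)$, which is not linear. The sweep argument above removes this obstacle, because reading $w$ left to right enumerates the candidate positions of each arch in exactly the order required, turning both the sort and the prefix-minimum into incremental $O(1)$ updates. What remains are routine correctness checks: that the membership test $i=\lastPosArch[\ell][w[i]]$ together with the two filter conditions reproduces the set-theoretic definition of $L'_\ell$, and that the two-pointer scan indeed computes $\max(\{t\mid \sortedLast[\ell][t]\le i\}\cup\{-\infty\})$. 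Summing the cost of the preprocessing and the $O(n)$ sweeps yields the claimed $O(n)$ bound.
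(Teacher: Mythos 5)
Your proposal is correct and takes essentially the same route as the paper: identical preprocessing (arch factorisation, $\firstPosArch[\cdot][\cdot]$, $\lastPosArch[\cdot][\cdot]$, $\dist[\cdot]$, and $\al(\rest(w))$), the same two-pointer merge for $\Leq[\ell][\cdot]$, the same prefix-minimum scan for $\Lex[\ell][\cdot]$, and the same collection of $\startSAS$ and $\mathtt{init}$. The only cosmetic difference is that you produce each $\sortedLast[\ell][\cdot]$ already sorted by testing $i=\lastPosArch[\ell][w[i]]$ during a single left-to-right sweep of $w$, whereas the paper first collects the sets $L'_\ell$ unsorted and then applies counting sort over positions followed by a pointer split into arches -- two equivalent implementations of the same $O(n)$ idea.
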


\begin{proof}
By \cref{thm:archfact-algo}, we produce the arch factorisation of $w$ in linear time: $w=\ar_w(1)\cdots\ar_w(k)r(w)$.
By \cref{lem:distance,lem:firstInArch}, we compute in linear time the arrays $\lastPosArch[\cdot][\cdot]$ and $\firstPosArch[\cdot][\cdot]$ as well as the array $\dist[\cdot]$ and the set of letters $\al (r(w))$ (represented in a way which allows to test membership to this set in $O(1)$ time, such as, for instance, the array $f$ from the proof of \cref{thm:archfact-algo}).

Then, we can obtain also in linear time the lists $L_\ell$ and $L'_\ell$, for all $\ell$, in a direct way: we simply iterate through all the arches, and for each arch $\ar_w(\ell)$ we iterate through the letters $a$ of $\Sigma$, and collect in $L_\ell'$ those positions $\lastPosArch[\ell][a]$ which fulfil the requirements in the definition of this set (which can be clearly tested now in $O(1)$ time).

Note that, when computed in this way, $L_\ell'$ is not sorted. We can, however, sort in linear time $O(n)$ the set $\cup_{\ell'\leq k}L_\ell'$ in increasing order by counting sort (as all values in these sets are between $1$ and $n$), and obtain an array $S$ of positions of $w$. We then produce the arrays $\sortedLast[\ell][\cdot]$ in order, for $\ell$ from $1$ to $k$, by selecting from $S$ the subarray containing positions which are contained in the arch $\ar_w(\ell)$. This is done as follows: we traverse $S$ left to right with a pointer $p$. After we have produced all the arrays $\sortedLast[i][\cdot]$ for $i<\ell$, $p$ points to the leftmost element $t$ of $S$ such that $t> |\ar_w(1)\cdots\ar_w(\ell-1)|$. We then copy in $\sortedLast[i][\cdot]$ all the elements $t'$ of $S$ which occur to the right of $p$ and fulfil $t'\leq |\ar_w(1)\cdots\ar_w(\ell)|$. We then move $p$ to the element of $S$ found immediately to the right of the last copied element. Clearly, this process takes linear time.

Computing each array $\Leq[\ell][\cdot]$ is done in a relatively similar way. Intuitively, it is also similar to the merging of the sequence of positions of $\ar_w(\ell)$ (considered in increasing order) to the array $\sortedLast[\ell][\cdot]$. We traverse the arch $\ar_w(\ell)$ left to right, keeping track of the largest element of the sorted array $\sortedLast[\ell][\cdot]$ which we have met so far in this traversal.  When we are at position $i$ in the arch and $\sortedLast[\ell][j]$ is the largest element of the sorted array $\sortedLast[\ell][\cdot]$ which we have met when considering the positions up to $i-1$, we check if $i=\sortedLast[\ell][j+1]$. If yes, we set $\Leq[\ell][i]=j+1$ and  $\sortedLast[\ell][j+1]$ becomes the largest element of $\sortedLast[\ell][\cdot]$ seen so far; otherwise we set  $\Leq[\ell][i]=j$. Clearly, the computation of a single array $\Leq[\ell][\cdot]$ can be implemented in $O(|\ar_w(\ell)|)$ time, so computing all these arrays takes linear time.

Finally, $\Lex[\ell][\cdot]$ can be computed in $O(|L_\ell'|)$ time, for each $\ell\leq k$, by a simple left to right traversal of the array $\sortedLast[\ell][\cdot]$. So, the overall time needed to compute these arrays is linear. The same holds for the computation of $\startSAS$. Clearly, this list can be constructed in $O(\sigma)$ time by simply inserting in it the elements $\firstPosArch[1][a]$, for $a\in \Sigma$, such that $\dist[\firstPosArch[1][a]]=k+1$, while going through the letters $a\in \Sigma$ in lexicographical order.

\medskip
This concludes the proof of our statement.
\end{proof}

We now formalize the intuition that
$\firstPosArch[1][\cdot]$, $\sortedLast[\cdot][\cdot]$, and $\Leq[\cdot][\cdot]$
are a compact representation of all the $\sas$ of a given word $w$.
\begin{theorem}\label[theorem]{thm:sasRep}
Given a word $w$ of length $n$ with $k$ arches $\ar_w(1),\ldots,\ar_w(k)$, as well as the arrays $\firstPosArch[\cdot][\cdot]$, $\sortedLast[\cdot][\cdot]$, $\Leq[\cdot][\cdot]$, $\Lex[\cdot][\cdot]$, the set $\startSAS$, and the position $\mathtt{init}$ of $w$, which can all be computed in linear time, we can perform the following tasks:
\begin{enumerate}
\item We can check in $O(k)$ time if a word $u$ of length $k+1$ is an $\sas$ of $w$.
\item We can compute in $O(k)$ time the lexicographically smallest $\sas$ of $w$.
\item We can efficiently enumerate (i.e., with polynomial delay) all the $\sas$ of $w$.
\end{enumerate}
\end{theorem}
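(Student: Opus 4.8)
The plan is to derive all three tasks from a single structural description that turns every $\sas$ of $w$ into a path through the precomputed arrays, and then to read off each task from this description. The first step is to show that the greedy embedding of an $\sas$ is completely determined by its letters: if $u$ is an $\sas$ of $w$, then embedding $u[1:k]$ greedily (leftmost occurrences, as in \cref{thm:mas}) places exactly one letter in each arch, with $i_\ell=\firstPosArch[\ell][u[\ell]]$ for all $\ell\in[1:k]$. This follows from \cref{lem:sas} (one position per arch) together with the non-occurrence condition of \cref{thm:mas}: since $u[\ell]$ cannot occur in $w[i_{\ell-1}+1:i_\ell-1]$, it occurs neither in the suffix of $\ar_w(\ell-1)$ after $i_{\ell-1}$ nor in the prefix of $\ar_w(\ell)$ before $i_\ell$, which forces $\lastPosArch[\ell-1][u[\ell]]\le i_{\ell-1}$ and $i_\ell=\firstPosArch[\ell][u[\ell]]$. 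Conversely, I would verify that if a length-$(k+1)$ word $u$ satisfies $\lastPosArch[\ell-1][u[\ell]]\le\firstPosArch[\ell-1][u[\ell-1]]$ for every $\ell\in[2:k]$, together with $\lastPosArch[k][u[k+1]]\le\firstPosArch[k][u[k]]$ and $u[k+1]\notin\al(\rest(w))$, then the positions $i_\ell=\firstPosArch[\ell][u[\ell]]$ meet all conditions of \cref{thm:mas} (the two remaining conditions being automatic from the $1$-universality of the arches), so $u$ is an $\sas$ by \cref{lem:sas}. This equivalence already yields Task~1: one checks these $k$ inequalities by $O(1)$ array look-ups each, for a total of $O(k)$ time.

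The heart of the argument, and the step I expect to be the main obstacle, is proving that the $\dist$-filtering built into $L'_\ell$ (hence into $\sortedLast[\ell][\cdot]$ and $\Leq[\ell][\cdot]$) records exactly those continuations that extend to a \emph{complete} $\sas$, so that no legal partial prefix is ever a dead end. Concretely, I would prove by induction on $k-\ell$ that whenever $i_\ell=\firstPosArch[\ell][u[\ell]]$ arises from an $\sas$-prefix $u[1:\ell]$ (equivalently, $\dist[i_\ell]=k-\ell+2$), the letters that can legally follow are exactly $\{\,w[\sortedLast[\ell][j]]\mid 1\le j\le\Leq[\ell][i_\ell]\,\}$, namely the letters whose last occurrence in $\ar_w(\ell)$ is at or before $i_\ell$ and whose first occurrence in the next arch satisfies $\dist[\firstPosArch[\ell+1][a]]=k-\ell+1$. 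Each such choice lands on a position with the correct distance value $k-(\ell+1)+2$, so the induction continues and, crucially, every prefix admits at least one completion. The base case $\ell=k$ is where $\sortedLast[k][\cdot]$ additionally discards the letters of $\rest(w)$: a legal last letter is precisely one outside $\al(\rest(w))$ whose last occurrence in $\ar_w(k)$ is $\le i_k$, matching the absence requirement on $u[k+1]$.

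Given this invariant, Task~2 is a left-to-right greedy. I would start at $i_1=\mathtt{init}$, the lexicographically smallest element of $\startSAS$, which is a legal start since $\dist[\mathtt{init}]=k+1$; at each arch $\ell$ I take the lexicographically smallest admissible continuation, which is exactly $w[\,\Lex[\ell][\Leq[\ell][i_\ell]]\,]$, and advance to its first occurrence $\firstPosArch[\ell+1][\cdot]$ in the next arch. The extendability invariant guarantees the greedy never gets stuck, and since lexicographic order is decided left to right while any admissible prefix completes to a full $\sas$, the greedy choice is globally optimal; the whole computation runs in $O(k)$ time.

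Finally, for Task~3 I would enumerate by a depth-first traversal of the implicit tree whose roots are the elements of $\startSAS$ and in which the children of a node $i_\ell$ are the positions $\firstPosArch[\ell+1][\,w[\sortedLast[\ell][j]]\,]$ for $1\le j\le\Leq[\ell][i_\ell]$, the leaves corresponding to the admissible choices at arch $k$. By the structural claim of the first paragraph, each root-to-leaf path spells exactly one $\sas$ and this correspondence is a bijection, so every $\sas$ is produced once. Because the extendability invariant ensures that every internal node has at least one descendant leaf, the traversal explores no dead branch; hence the work performed between two consecutively output $\sas$ is bounded by $O(k)$ plus the branching examined along the way, which is polynomial in $n$. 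This establishes the claimed polynomial delay and completes the three tasks.
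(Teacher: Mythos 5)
Your proposal is correct and takes essentially the same route as the paper's proof: you reconstruct the implicit tree whose nodes are the positions $\firstPosArch[\ell][\cdot]$ reached from $\startSAS$ with children given by $\sortedLast[\ell][1:\Leq[\ell][i]]$, establish the same bijection between root-to-leaf paths and $\sas(w)$ via the same $\dist$-based extendability invariant ($\dist[i_\ell]=k-\ell+2$), and then obtain the $O(k)$ membership test, the $\Lex$-driven greedy for the lexicographically smallest $\sas$ starting at $\mathtt{init}$, and the depth-first enumeration exactly as the paper does. The only cosmetic difference is that your membership test checks the inequalities $\lastPosArch[\ell-1][u[\ell]]\le\firstPosArch[\ell-1][u[\ell-1]]$ (plus $u[k+1]\notin\al(\rest(w))$) directly, which, as you verify via \cref{thm:mas} and \cref{lem:sas}, is equivalent to the paper's path check.
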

\begin{proof}
Following \cref{archfact}, let $w=\ar_w(1) \cdots \ar_w(k) r(w)$.

\medskip
We observe that the arrays $\firstPosArch[\cdot][\cdot]$, $\sortedLast[\cdot][\cdot]$, $\Leq[\cdot][\cdot]$, and the set $\startSAS$ induce a tree structure ${\mathcal T}_w$ for $w$, called the $\sas$-tree of $w$, and defined inductively as follows:
\begin{itemize}
\item The root of ${\mathcal T}_w$ is $\bullet $, while all the nodes of ${\mathcal T}_w$ of depth $1$ to $k$ are positions of $w$. The nodes of depth $k+1$ of ${\mathcal T}_w$ are letters of $\Sigma$ and the children of each node are totally ordered.
\item The children of $\bullet$ are the positions of the list $\startSAS$. These are the only nodes of depth $1$. Moreover, the children of $\bullet$ are ordered in the same order as the corresponding positions of $\startSAS$.
\item For $\ell\in [1:k-1]$, we define the nodes of depth $\ell+1$ as follows. The children of a node $i$ of depth $\ell$ are the nodes $\firstPosArch[\ell+1][w[j]]$ where $j\in \sortedLast[\ell][1:\Leq[\ell][i]]$. The children of node $i$ are ordered such that the node corresponding to $\firstPosArch[\ell+1][w[j]]$ comes before $\firstPosArch[\ell+1][w[j']]$ if and only if $j$ comes before $j'$ in $\sortedLast[\ell][1:\Leq[\ell][i]]$.
\item For $\ell=k$, we define the nodes of depth $k+1$ (the leaves) as follows. The children of a node $i$ from level $k$ are the letters of $\{w[j]\mid j \in \sortedLast[\ell][1:\Leq[\ell][i]]\}$. The children of $i$ are ordered such that the node corresponding to letter $w[j]$ comes before the node corresponding to node $w[j']$  if and only if $j$ comes before $j'$ in $\sortedLast[\ell][1:\Leq[\ell][i]$.
\item ${\mathcal T}_w$ contains no other nodes than the ones defined above.
\end{itemize}

We claim that the paths in the tree from $\bullet$ to the leaves correspond exactly to the set of $\sas$ of $w$.

\medskip
Let us show first that each $\sas$ $u$ corresponds to a path in the tree ${\mathcal T}_w$ from $\bullet$ to a leaf. From the definition of $\sas(w)$, we have that $|u|=k+1$.
Firstly, it is clear from the definition of $\dist[\cdot]$ and $\startSAS$, that each $\sas$ starts with a position in $\startSAS$. Then, for $i\in [2:k]$, assume that the letter $u[i-1]$ occurs on position $j\in \ar_w(i-1)$ of $w$, and $j$ is a node of the tree.  Then, $\lastPosArch[i-1][u[i]]\leq j$ (or $\nextpos_w(u[i],j)$ would also be in $\ar_w(i-1)$), so $\lastPosArch[i-1][u[i]]\in \sortedLast[i-1][1:\Leq[i-1][j]]$. Thus $\firstPosArch[i][u[i]]$ is a child of $j$. Applying this inductive argument, we get that $u[1:k]$ is a path in ${\mathcal T}_w$ from $\bullet$ to a node of depth $k$. Assume now that the letter $u[k]$ occurs on position $j\in \ar_w(k)$ of $w$, so, as we have seen, $j$ is a node of the tree.  Then, $\lastPosArch[k][u[k+1]]\leq j$ and $u[k+1]\in \Sigma\setminus \al(r(w))$. Thus, $u[k+1]\in \{w[t]\mid t \in \sortedLast[k][1:\Leq[k][j]]\}$. This means that $u[k+1]$ is a child of $j$.

To show that each path of the tree from $\bullet$ to the leaves encodes an $\sas$ of $w$ is based on the following inductive argument. Firstly, as an induction base, we consider paths containing only two nodes, namely  $\bullet, i_1$; if $i_1\in \startSAS$, then $u[1]=w[i_1]$ is a word of length $1$, whose first occurrence in $w$ is on position $p=i_1$ of $w$ such that $\dist[p]=k+1$. Now, we define the induction hypothesis. Consider $\ell>1$ and let us assume that if we have a path $\bullet, i_1,i_2,\ldots,i_\ell$ in ${\mathcal T}_w$, then $i_\ell\in \ar_w(\ell)$, $\dist[i_\ell]=k-\ell+2$, and $u[1:\ell]=w[i_1]\cdots w[i_\ell]$ is a subsequence of $w$ such that the shortest factor of $w$ which contains $u[1:\ell]$ is $w[1:i_\ell]$. Now, in the induction step, we need to show that the same property holds for a path starting with the root $\bullet$ and containing $\ell+1$ other nodes. So, basically, we consider a path $\bullet, i_1,i_2,\ldots,i_\ell, i_{\ell+1}$ in ${\mathcal T}_w$. At this point we note that, in order to extend a path $\bullet, i_1,i_2,\ldots,i_\ell$ with a node $i_{\ell+1}$, we need to pick $i_{\ell+1}$ to be a child of $i_\ell$, and the path $\bullet,i_1,\ldots,i_\ell$ fulfils the induction hypothesis. Therefore, $i_\ell\in \ar_w(\ell)$, $\dist[i_\ell]=k-\ell+2$, and $u[1:\ell]=w[i_1]\cdots w[i_\ell]$ is a subsequence of $w$ such that the shortest factor of $w$ which contains $u[1:\ell]$ is $w[1:i_\ell]$. Moreover, as said, $i_{\ell+1}$ is a child of the node $i_\ell$, which means that there exists some $j$ such that $i_{\ell+1}=\firstPosArch[\ell+1][w[j]]$ where $j\in \sortedLast[\ell][1:\Leq[\ell][i_{\ell}]]$. Therefore, we get that $i_{\ell+1}\in \ar_w(\ell+1)$ and $\dist[i_{\ell+1}]=k-\ell+1$, Moreover, we also obtain that $u[1:\ell+1]=w[i_1]\cdots w[i_\ell]w[i_{\ell+1}]$ is a subsequence of $w$ such that the shortest factor of $w$ which contains $u[1:\ell+1]$ is $w[1:i_{\ell+1}]$ (as $i_{\ell+1}$ is the first position after $i_\ell$ where $w[i_{\ell+1}]$ occurs. This proves the induction step.

\medskip
In conclusion, we can apply this inductive argument for $\ell=k$, so if we have a path $\bullet, i_1,i_2,\ldots,i_k$ in ${\mathcal T}_w$, then $i_k\in \ar_w(k)$, $\dist[i_k]=2$, and $u[1:k]=w[i_1]\cdots w[i_k]$ is a subsequence of $w$ such that the shortest factor of $w$ which contains $u[1:k]$ is $w[1:i_k]$. Now, this path can only be extended by a child of $i_k$. The children of $i_k$ are the letters $a$ of $\{w[j]\mid j \in \sortedLast[k][1:\Leq[k][i_k]]\}$. It is thus clear that $u[1:k]a$ is an absent factor of $w$ of length $k+1$, thus an $\sas$.

\medskip
Based on this observation, we can prove the statement of the theorem. Note that we do not need to construct this (potentially very large) tree: it is compactly represented by the data structures enumerated in the statement.

\medskip
For (1) we simply have to check whether $u$ is encoded by a path in ${\mathcal T}_w$. As we do not actually construct this tree, we use the following approach. First we compare $m=|u|$ to $k+1$. If $m\neq k+1$, then $u$ is not an $\sas$ of $w$. If $m=k+1$, we continue as follows. For $i$ from $1$ to $m-1$, we check if $\dist[\firstPosArch[i][u[i]]]=k-i+2$ and $\firstPosArch[i][u[i]]\geq \lastPosArch[i][u[i+1]]$ (i.e., if $u[1:k]$ is a path in the tree). Then, if all these checks returned true, we check if $u[k+1]$ is in $ \{w[j]\mid j \in \sortedLast[k][1:\Leq[k][t]]\}$, where $t=\firstPosArch[k][u[k]]$. If this final check returns true, then $u$ is an $\sas$ of $w$. Otherwise, $u$ is not an $\sas$.

\medskip
The correctness follows from the properties of ${\mathcal T}_w$.

\medskip
For (2), we compute an $\sas$ $u$ of length $k+1$. We define $u[1]=w[\mathtt{init}]$. For $i$ from $2$ to $k+1$, we define the $i^{th}$ letter of $u$ as follows:
$$u[i]=w[\,\,\Lex[\,i-1\,][\,|L'_{i-1}|\, ]\,\,], \mbox{ where }|L'_{i-1}|\mbox{ is the size of }\Lex[i-1][\cdot] .$$
%Finally, we define $u[k+1]$ as the lexicographically smallest letter of the set $\{w[t]\mid t \in \Leq[k][t]\}$ where $t=\firstPosArch[k][u[k]]$.

Clearly, this algorithm is derived from the argument showing how the paths of ${\mathcal T}_w$ correspond to the $\sas$ of $w$. When constructing an $\sas$ we can nondeterministically choose a path in ${\mathcal T}_w$. To construct the lexicographically smallest $\sas$, we make the choices in the construction of the $\sas$ in a deterministic way. Our algorithm clearly produces an $\sas$ of $w$. Moreover, at each step, when we have multiple choices to extend the word $u$ by a letter, we greedily choose this letter to be the lexicographically smallest from the possible choices, which clearly leads to the construction of the lexicographically smallest $\sas$ of $w$.

\medskip
For (3), we actually need to traverse the tree ${\mathcal T}_w$ and output the absent subsequences encoded by the paths that lead from $\bullet$ to a leaf, with polynomial delay. If ${\mathcal T}_w$ would be explicitly constructed, this would be done using a standard depth-first traversal of the tree ${\mathcal T}_w$, implemented in a recursive manner: each time a leaf is reached, we need to output the word whose letters correspond to the nodes stored in the recursive-calls stack, ending with the letter corresponding to the leaf. In our case, ${\mathcal T}_w$ is not stored explicitly. However, this is not a problem. We still do a recursive depth-first search in the tree ${\mathcal T}_w$ starting in its root $\bullet$. Each time we call the depth-first search for a node, we first produce the ordered list of its children (this can be clearly done in polynomial time, using the definition of the tree), and then we recursively call the depth-first search for each node in this list. When the list is fully processed, we end the respective call. When we call the depth-first search procedure for a leaf (in the following, we will call this a leaf-call, for simplicity), we output the word of length $k+1$ which corresponds to the contents of the recursive-calls stack, or, in other words, to the path leading from the root to that leaf (and this encodes, indeed, to a $\sas$).

\medskip
To compute this algorithm's time complexity, we need to make some observations. Firstly, we note that in each recursive call of the depth-first search, we do at most $\sigma$ steps, in order to produce the ordered list of children of that node. Then, we do a recursive call for each of them. That is, we use $O(\sigma)$ time to produce the children list, and then do $O(\sigma)$ new calls. Secondly, let us analyse what happens between two consecutive leaf-calls of the search. In the first leaf-call, we first output the word corresponding to the content of the stack (this takes $O(k)$ time) and then return. Once this leaf-call is ended, a series of other previous calls from the recursive-call stack might also end (the ones corresponding to final nodes in list of childrens); clearly, there are at most $k$ such recursive calls that could end. Then, a call is reached, which was made for a node whose list of children was not yet fully processed. A new call is now initiated, for the first unexplored node of that list, and this leads to a new sequence of at most $k$ recursive-calls ending with a leaf-call. So, to do the processing between two consecutive leaf-calls, we need at most $O(k\sigma)$ time. So: our algorithm simulates a depth-first search of the tree ${\mathcal T}_w$ and outputs all its paths with constant delay. By our argument that there is a one-to-one correspondence between the paths of ${\mathcal T}_w$ and the words from $\sas(w)$, the correctness of the algorithm is clear. Moreover, as the time between two words are output is upper bounded by $O(k\sigma)$, our claim follows.
\end{proof}

The main point of the previous theorem is to define a compact representation of the words of the set $\sas(w)$. The fact that this set can be, in fact, represented as a tree of depth $O(k)$ and branching factor $O(\sigma)$ allows us to immediately show that a series of algorithmic tasks can be performed efficiently. However, we do not claim that the solution proposed above for the enumeration of the words in $\sas(w)$ is optimal. It remains an open problem to define faster enumeration algorithms for the elements of this set.

\subsection{A compact representation of the MAS of a word}\label{masRepresentation}

Similar to the previous section we construct a data structure to store, for given $w\in\Sigma^\ast$, all the elements of $\mas(w)$. Since \ref{sasRepresentation} does not hold for $\mas$ this data structure is larger than the one for $\sas$.
We start by defining an array $\nextArray[\cdot]$ of size $n$ holds,
for each given position $i$,
the first position of $w[i]$ in the word $w[i+1:n]$.
We formally define this by
$\nextArray[i] = \nextpos(w[i],i+1) = \min \lbrace j \mid w[j] = w[i], j > i \rbrace$,
while we assume $\nextArray[i] = \infty$
if $w[i] \notin \al(w[i+1:n])$.

\medskip
The array $\prevArray[\cdot]$ of size $n$ holds,
for a given position $i$,
the last position of $w[i]$ in the word $w[1:i-1]$.
We formally define this with
$\prevArray[i] = \lastpos(w[i],i-1) = \max \lbrace j \mid w[j] = w[i], j < i \rbrace$,
while we assume $\prevArray[i] = -\infty$
if $w[i] \notin \al(w[1:i-1])$. Using \cref{alg:next} (\cref{alg:last}) to compute $\nextArray[\cdot]$ ($\prevArray[\cdot]$) is not efficient. Nevertheless we can compute both arrays in linear time.

\begin{lemma}\label[lemma]{lem:arrayX}
	Given a word $w$ with length $\len w = n$, we can compute in $O(n)$ time the arrays
	$\nextArray[\cdot]$ and $\prevArray[\cdot]$.
\end{lemma}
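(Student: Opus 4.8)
The plan is to avoid the naive per-position scans of \cref{alg:next,alg:last} (which cost $O(n)$ per entry and hence $O(n^2)$ overall) and instead compute each array in a single linear pass, exploiting the integer-alphabet assumption that lets us index an auxiliary array by letters in $O(1)$ time.

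First I would compute $\prevArray[\cdot]$ by a left-to-right sweep. I maintain an auxiliary array $\mathrm{last}[\cdot]$ of size $\sigma$, with $\mathrm{last}[a]$ initialised to $-\infty$ for every $a \in \Sigma$; the intended invariant is that, just before processing position $i$, the entry $\mathrm{last}[a]$ holds the largest index $j < i$ with $w[j] = a$ (or $-\infty$ if no such $j$ exists). At position $i$ I then set $\prevArray[i] \gets \mathrm{last}[w[i]]$ and immediately update $\mathrm{last}[w[i]] \gets i$, which restores the invariant for the next step. By the definition of $\prevArray$, this assigns exactly $\max\{j \mid w[j] = w[i], j < i\}$, and correctly yields $-\infty$ when $w[i] \notin \al(w[1:i-1])$.

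Symmetrically, I would compute $\nextArray[\cdot]$ by a right-to-left sweep using an auxiliary array $\mathrm{next}[\cdot]$ of size $\sigma$ initialised to $\infty$, maintaining the invariant that before processing $i$ the entry $\mathrm{next}[a]$ is the smallest index $j > i$ with $w[j] = a$. Processing $i$ from $n$ down to $1$, I set $\nextArray[i] \gets \mathrm{next}[w[i]]$ and then $\mathrm{next}[w[i]] \gets i$, which gives $\min\{j \mid w[j] = w[i], j > i\}$, and $\infty$ when $w[i] \notin \al(w[i+1:n])$. Both invariants are established by a straightforward induction on the number of positions already processed.

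For the running time, initialising each auxiliary array costs $O(\sigma)$, and each of the $n$ positions is handled with $O(1)$ reads, writes, and comparisons. Since $\sigma \le n$, each pass runs in $O(n)$ time, so both arrays are produced in $O(n)$ total. I do not expect any genuine obstacle here; the only point worth stating explicitly is the integer-alphabet assumption, which is precisely what makes the constant-time indexing of $\mathrm{last}[\cdot]$ and $\mathrm{next}[\cdot]$ by the letter $w[i]$ legitimate.
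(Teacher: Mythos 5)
Your proof is correct and follows exactly the approach of the paper: a single left-to-right sweep for $\prevArray[\cdot]$ and a single right-to-left sweep for $\nextArray[\cdot]$, each maintaining a size-$\sigma$ auxiliary array of last-seen occurrences, giving $O(n)$ total since $\sigma \le n$. Your write-up merely makes the invariants explicit where the paper leaves them implicit.
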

\begin{proof}
	The word $w$ needs to be traversed only once from left to right and once from right to left while maintaining an array of size $\len \Sigma$ with the last seen occurrence of each character.
	Since $\len \Sigma \leq n$, the results follow immediately.
\end{proof}

The following theorem is based on well-known search algorithms on (directed acyclic) graphs, see e.g. \cite{Sedgewick}.

\begin{theorem}\label[theorem]{thm:masRep2}
For a word $w$, we can construct in $O(n^2\sigma)$ time data structures allowing us to efficiently perform the following tasks:
\begin{enumerate}
\itemsep=0.8pt
\item We can check in $O(m)$ time if a word $u$ of length $m$ is an $\mas$ of $w$.
\item We can compute in polynomial time the longest $\mas$ of $w$.
\item We can check in polynomial time for a given length $\ell$ if there exists an $\mas$ of length $\ell$ of $w$.
\item We can efficiently enumerate (with polynomial delay) all the $\mas$ of $w$.
\end{enumerate}
\end{theorem}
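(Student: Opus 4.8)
The plan is to encode the whole set $\mas(w)$ as the set of source-to-leaf paths of a directed acyclic graph $G_w$, built directly from the characterization in \cref{thm:mas}, and then reduce each of the four tasks to a standard graph-search or dynamic-programming computation on $G_w$. The crucial observation is that \cref{thm:mas} describes an $\mas$ $v=w[i_1]\cdots w[i_m]v[m+1]$ through the greedy position sequence $0=i_0<i_1<\cdots<i_m$, where each $i_k$ is the leftmost occurrence of $v[k]$ after $i_{k-1}$, and where the only genuine constraint is \cref{thm:mas:in}: $v[k]\in\al(w[i_{k-2}+1:i_{k-1}])$. Since this constraint refers to the \emph{two} preceding positions $i_{k-2}$ and $i_{k-1}$, a single position does not carry enough information to decide whether an extension is legal; hence the nodes of $G_w$ are pairs $(p,q)$ of positions with $0\le p<q\le n$ and $\prevArray[q]\le p$ (the latter guaranteeing that $q$ is the leftmost occurrence of $w[q]$ after $p$, i.e. a valid greedy choice following $p$). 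This is exactly the point where the simpler $\sas$-tree of \cref{thm:sasRep} breaks down and a larger structure is required.

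First I would fix the edges. I add a source $\bullet$ with an edge to every pair $(0,q)$ for which $q$ is a first occurrence ($\prevArray[q]=-\infty$); such an edge emits the letter $w[q]=v[1]$, and \cref{thm:mas:v1} then holds automatically. From a node $(p,q)$ and a letter $a$ I test, in $O(1)$ using a precomputed $\nextpos$-table, whether $a\in\al(w[p+1:q])$ (i.e. $\nextpos_w(a,p+1)\le q$), which is precisely \cref{thm:mas:in}. If additionally $a$ occurs after $q$, I add an edge to $(q,\nextpos_w(a,q+1))$ emitting $a$; if instead $a\notin\al(w[q+1:n])$, then $a$ may serve as the final absent letter $v[m+1]$, so I add a terminating edge to a leaf labelled $a$ (capturing \cref{thm:mas:notin} for $k=m+1$). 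Because the second coordinate strictly increases along every edge, $G_w$ is acyclic; it has $O(n^2)$ nodes and at most $\sigma$ out-edges per node, hence $O(n^2\sigma)$ edges, and it can be built in $O(n^2\sigma)$ time after computing $\nextArray[\cdot]$, $\prevArray[\cdot]$ and the $\nextpos$-table (\cref{lem:arrayX}). A short induction, mirroring the proof of \cref{thm:sasRep}, then shows that reading off the emitted letters along a source-to-leaf path yields exactly the elements of $\mas(w)$, and that the correspondence is a bijection: the word determines the path because the greedy positions are a deterministic function of the word, and (since every greedy $i_k$ satisfies $\prevArray[i_k]\le i_{k-1}$) every $\mas$ indeed induces a legal path.

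With $G_w$ in hand the four tasks are routine. For (1), testing a candidate $u$ of length $m$, I do not even need the graph: I recompute the greedy positions $i_1,i_2,\dots$ left to right in $O(1)$ per letter via the $\nextpos$-table, checking \cref{thm:mas:in,thm:mas:notin} at each step and the terminal condition at the end, for a total of $O(m)$ (note $m\le n+1$, as the $i_k$ are distinct positions). For (2), the longest $\mas$ is a longest source-to-leaf path, obtained by the standard DAG longest-path dynamic program in time linear in $\len{G_w}$. For (3), I run a layered reachability DP computing a boolean table $R[(p,q)][t]$ = ``a leaf is reachable from $(p,q)$ emitting exactly $t$ further letters'', filled in topological order ($R[(p,q)][1]$ from terminal edges, $R[(p,q)][t]$ from $R[(q,r)][t-1]$ over children); an $\mas$ of length $\ell$ exists iff some start pair $(0,i_1)$ has $R[(0,i_1)][\ell-1]$ true, which runs in $O(n^3\sigma)$, i.e. polynomial, time. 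For (4), I first mark the \emph{productive} nodes (those from which a leaf is reachable) in one backward pass, then enumerate all $\mas$ by a depth-first traversal of $G_w$ restricted to productive nodes and edges, maintaining the current path-word; since every explored partial path extends to a leaf, the delay between two consecutive outputs is polynomial.

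The main obstacle, and the part deserving the most care, is the correctness of $G_w$ and in particular the justification that pairs of positions are both \emph{necessary} and \emph{sufficient} as node labels: necessary because \cref{thm:mas:in} couples each emitted letter to the window determined by the two previous greedy positions, and sufficient because, once $(i_{k-2},i_{k-1})$ is known, the greedy rule determines $i_k$ from the emitted letter, so no further history is needed. Establishing a genuine bijection between paths and $\mas(w)$ (rather than a mere surjection) requires verifying that the positions generated while walking a path coincide with the greedy positions that \cref{thm:mas} assigns to the corresponding word; this matching is the technical heart of the argument, after which the four algorithmic consequences are comparatively standard.
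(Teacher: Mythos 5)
Your proposal is correct and follows essentially the same route as the paper's proof: both encode $\mas(w)$ as source-to-sink paths in a DAG whose nodes are pairs of greedy positions (the paper's ${\mathcal D}_w$ with nodes $(i,j)$, source $(0,0)$ and sink $f$, with edges determined by $\nextpos_w$ and the window condition from \cref{thm:mas}), and both reduce the four tasks to a labelled path-walk, longest-path DP, length-indexed reachability, and pruned depth-first enumeration. Your refinements --- restricting nodes to valid greedy pairs via $\prevArray$, spelling out the $O(n^3\sigma)$ table $R[(p,q)][t]$ for task (3) where the paper is terser, and marking productive nodes for task (4) (harmless but unnecessary here, since every node $(i,j)$ has an out-edge via $a=w[i]$) --- are implementation-level differences, not a different argument.
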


\begin{proof}
For a word $w$, the compact representation of $\mas(w)$ is based on \cref{thm:mas}.
We define a directed acyclic graph ${\mathcal D}_w$ with the nodes $\{(i,j)\mid 0\leq j< i\leq n\}\cup \{(0,0),f\}$. The edges (represented as arrows $A \rightarrow B$ between nodes $A$ and $B$) are defined as follows:
\begin{itemize}
\itemsep=0.85pt
\item We have an edge $(0,0)\rightarrow (i,0)$, if there exists $a\in \Sigma$ such that $i=\firstPosArch[1][a]$. This edge is labelled with $a$.
\item For $1\leq j< i< k\leq n$ we have an edge $(i,j)\rightarrow (k,i)$,  if there exists $a\in \Sigma$ such that $k=\nextpos_w(a,i+1)$ and $a\in \al(w[j+1:i])$. This edge is labelled with $a$.
\item We have an edge $(i,j)\rightarrow f$, if there exists $b\in \al(w[j+1:i])$ and $b\notin  \al (w[i+1:n])$. This edge is labelled with $b$.
\end{itemize}

We claim that the words in $\mas(w)$ correspond exactly to the paths in the graph ${\mathcal D}_w$ from $(0,0)$ to $f$.

\medskip
Let $v$ be an $\mas$ of $w$, with $|v|=m$. By \cref{thm:mas}, there exist positions $0=i_0<i_1<\ldots <i_m <i_{m+1}= n+1$ such that all of the following conditions are satisfied:
\begin{enumerate}[label=(\roman*)]
\itemsep=0.95pt
	\item $v=w[i_1]\cdots w[i_m]v[m+1]$
	\item $v[1]\notin \al(w[1:i_1-1])$
	\item $v[k]\notin\al(w[i_{k-1}+1:i_k-1])$ for all $k\in[2: m+1]$
	\item $v[k]\in \al(w[i_{k-2}+1:i_{k-1}])$ for all $k\in[2: m+1]$
\end{enumerate}
It immediately follows that $i_1=\firstPosArch[1][v[1]]$. So, there is an edge $(0,0)\rightarrow (i_1,0)$. Then, clearly, for all $m\geq \ell\geq 2$, $i_\ell=\nextpos_w(v[\ell],i_{\ell-1}+1)$. Moreover, $v[\ell]\in \al(w[i_{\ell-2}+1:i_{\ell-1}])$. Thus, there is an edge $(i_{\ell-1},i_{\ell-2})\rightarrow (i_\ell,i_{\ell-1})$. Similarly, there is an edge from $(i_m,i_{m-1})$ to $f$.

\medskip
For the reverse implication, we consider a path  in the graph ${\mathcal D}_w$ from $(0,0)$ to $f$, namely $(0,0)$, $(i_1,0)$, $(i_2,i_1), \ldots, (i_m,i_{m-1}), f$.  We claim that the word $v=w[i_1]\cdots w[i_m] b$ is an $\mas $ of $w$, for some $b$ such that $b\notin \al w[i_m+1:n]$ and $b\in \al(w[i_{m-1}+1:i_m])$. It is immediate that $v$ fulfils the four conditions in the statement of \cref{thm:mas}.

\medskip
Before solving tasks $(1-4)$, we note that ${\mathcal D}_w$ can be constructed in $O(n^2\sigma)$ time, in a straightforward manner. We store this graph by having for each of its nodes a list of incoming edges (note, at this point, that all these edges have the same label). Using \cref{alg:next} trivially, we can produce in $O(n^2\sigma)$ an oracle data structure which contains the answers to any query $\nextpos_w(a,i)$, for all $a\in \Sigma$ and $i\leq n$ (note that a more efficient computation of such a data structure is described in \cref{cor:masExtension}). Firstly, we define an edge $(0,0)\rightarrow (i,0)$ for all $i$ such that $i=\firstPosArch[1][a]$ for some $a\in \Sigma$. Then, for all $i\leq n$ and $a\in \Sigma$, we retrieve $k=\nextpos_w(a,i+1)$ and, then, for all $j<i$, we check if $a\in \al(w[j+1:i])$ (by simply using our $\nextpos_w$-data structure to see if $\nextpos_w(a,j+1)\leq i$); if $a\in \al(w[j+1:i])$, we have an edge from $(i,j)$ to $(k,i)$ labelled with $a$. Finally, for all $j<i$ and $b\in \Sigma$, we have an edge $(i,j)\rightarrow f$ labelled with $b$, if $b\in \al(w[j+1:i])$ (i.e., $\nextpos_w(b,j+1)\leq i$) and $b\notin  \al (w[i+1:n])$ (i.e., $\nextpos_w(b,i+1)\leq n$ does not hold).

Once ${\mathcal D}_w$ constructed, the correspondence between the words of $\mas(w)$ and the paths from $(0,0)$ to $f$ in this graph gives us a way to solve all the problems mentioned in the statement of the theorem efficiently. For (2), we simply need to identify the longest such path in ${\mathcal D}_w$, which can be solved in polynomial time in the size of the graph, by a folklore algorithm (see, e.g., \cite{Sedgewick}). Then, for (3) we need to check if there exists such a path of length $\ell$; this can also be done trivially in polynomial time in the size of ${\mathcal D}_w$, by maintaining, for $h=1$ to $\ell$, the set of nodes at distance $h$ from $(0.0)$. Finally, for (4), we need to enumerate efficiently all paths in the graph ${\mathcal D}_w$ from $(0,0)$ to $f$. This can be implemented in polynomial time, see \cite{Wasa16} and the references therein.

\medskip
We only explain how (1) is performed. Firstly, we can store ${\mathcal D}_w$ as a  $(n+1)\times (n+1)\times \sigma$ three dimensional array $D[\cdot][\cdot][\cdot]$, where $D[i][j][a]=k$, for $i,j\in [0:n]$ and $a\in \Sigma$, if and only if there exists an edge from $(i,j)$ to $(k,i)$, and $w[k]=a$. In a sense, in this way we can see the graph ${\mathcal D}_w$ as a DAG whose edges are labelled with letters of $\Sigma$. Now, to solve (1), we need to check whether there exists a path in ${\mathcal D}_w$ from $(0,0)$ to $f$, labelled with the word $u$. This can be clearly done in $O(m)$ time.
\end{proof}

Note that the main point of the theorem above is to introduce the graph ${\mathcal D}_w$ as a compact representation of $\mas(w)$, and to show the correspondence between the strings in $\mas(w)$ and the paths from $(0,0)$ to $f$ in this graph. Once this correspondence established, we were only interested in establishing that the tasks $(2-4)$ can be solved in polynomial time, and this follows directly from general results regarding directed acyclic graphs. We expect that obtaining more efficient algorithms for the respective problems for the tasks $(2-4)$ from the theorem above would require a deeper understanding of the structure of ${\mathcal D}_w$, and leave this as an open problem.

The last result of this paper shows how to preprocess a word $w$ in order to be able to answer efficiently queries of the following form: for a given $u$, that is a subsequence of $w$, which is the shortest string we can append to $u$ in order to obtain an $\mas$ of $w$?

\begin{corollary}\label[corollary]{cor:masExtension}
For a word $w$ of length $n$, we can construct in $O(n\sigma)$ time data structures allowing us to answer $\masExtend(u)$ queries: for a subsequence $u$ of $w$, decide whether there exists an $\mas$ $uv$ of $w$, and, if yes, construct such an $\mas$ $uv$ of minimal length. The time needed to answer a query is $O(|v|+|u|)$.
\end{corollary}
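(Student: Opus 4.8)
The plan is to reuse the characterisation behind \cref{thm:masRep2}, which identifies the elements of $\mas(w)$ with the paths from $(0,0)$ to $f$ in the DAG ${\mathcal D}_w$, but to avoid ever materialising this quadratic-size graph. The crucial structural observation is that the target of an edge $(i,j)\rightarrow(k,i)$ labelled $a$, namely $k=\nextpos_w(a,i+1)$, depends only on the current position $i$ and the letter $a$, and not on the predecessor $j$; only the admissibility test $a\in\al(w[j+1:i])$ depends on $j$. Hence, for the preprocessing, I would compute in $O(n\sigma)$ time the table $\nextpos_w(\cdot,\cdot)$ (giving each edge target in $O(1)$), and, in $O(n)$ time, the arch tree ${\mathcal A}_w$ with its $\depth(\cdot)$ and the array $\dist[\cdot]$ (\cref{thm:archTree,lem:distance}), the arrays $\firstPosArch[\cdot][\cdot]$ and $\lastPosArch[\cdot][\cdot]$, together with, for every letter, its last occurrence in $w$ and an $O(1)$ range predicate deciding whether an interval $[i_{p-1}+1:i_p]$ contains the last occurrence of some letter. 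These last structures let me test in constant time whether a state can jump directly to $f$, i.e. whether $\al(w[i_{p-1}+1:i_p])\setminus\al(w[i_p+1:n])\neq\emptyset$, and to retrieve a witnessing letter.

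Given a query $\masExtend(u)$ with $|u|=p$, I would first embed $u$ greedily into $w$, obtaining positions $i_1<\cdots<i_p$ using the precomputed $\nextpos_w$ table, and simultaneously verify that these positions satisfy condition \ref{thm:mas:in} of \cref{thm:mas} for every $k\in[2:p]$ (each test is an $O(1)$ comparison of $\firstPosArch$/$\nextpos_w$ against $i_{k-1}$); this costs $O(|u|)$. If the check fails, then, since the greedy embedding of any $\mas$ with prefix $u$ would restrict to exactly this embedding and would itself have to satisfy \ref{thm:mas:in}, no such extension exists and I answer negatively. Otherwise I am at the state $(i_p,i_{p-1})$ of ${\mathcal D}_w$, and a shortest extension $v$ corresponds to a shortest path from $(i_p,i_{p-1})$ to $f$. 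Because the greedy embedding is optimal, every admissible $v$ is an absent subsequence of $w[i_p+1:n]$, so $|v|\ge\iota(w[i_p+1:n])+1=\depth(i_p)$; the task therefore reduces to computing a shortest absent subsequence of the suffix $w[i_p+1:n]$ whose first letter lies in $\al(w[i_{p-1}+1:i_p])$ (the instance of \ref{thm:mas:in} at position $p+1$), with the degenerate length-one extension handled by the direct-jump test above.

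To build this extension in $O(|v|)$ time I would proceed in two stages: select the optimal first letter $a\in\al(w[i_{p-1}+1:i_p])$ in constant time — either a witness of a direct jump to $f$, or one minimising $\depth(\nextpos_w(a,i_p+1))$ — and then read off the remainder as the canonical shortest absent subsequence of the shorter suffix $w[\nextpos_w(a,i_p+1)+1:n]$, whose letters are exactly the labels along the path from the corresponding node up to the root of ${\mathcal A}_w$ (equivalently obtained through \cref{thm:rangeQueries}), one letter per node and thus $O(1)$ amortised per output symbol. Concatenating $u$, $a$, and this tail yields the required minimal $\mas$ $uv$, and since the greedy embedding and the tree walk each emit $O(1)$ work per letter produced, the total query time is $O(|u|+|v|)$.

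I expect the main obstacle to be precisely the correctness of the reduction in the second paragraph and the constant-time first-letter selection. I must prove that a shortest path from $(i_p,i_{p-1})$ to $f$ in ${\mathcal D}_w$ indeed has length $\depth(i_p)$ whenever the interval $\al(w[i_{p-1}+1:i_p])$ admits a compatible first letter (and otherwise characterise its length exactly), and that after the first step the predecessor-interval constraints \ref{thm:mas:in} align with the arches of the suffix, so that the tail may legitimately be taken to be an \emph{unconstrained} shortest absent subsequence read from ${\mathcal A}_w$. Establishing that the locally greedy choice of $a$ (minimising the depth of the landing position) is globally optimal, and packaging the interval-membership and depth-minimisation lookups into the $O(1)$-time queries supported by the $O(n\sigma)$ preprocessing, is where the real work lies; the accompanying bookkeeping is routine given \cref{thm:masRep2,thm:rangeQueries}.
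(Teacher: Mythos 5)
Your prefix test (greedy embedding of $u$ plus the per-step check of \cref{thm:mas:in}, with a negative answer when it fails) is sound and is essentially the paper's \cref{alg:isMASubseq}, and your lower bound $|v|\ge\depth(i_p)$ is correct. The gap is exactly the step you flagged as the main obstacle, and it is fatal as stated: it is \emph{not} true that after the first letter the constraints of \cref{thm:mas:in} ``align with the arches of the suffix''. That condition constrains \emph{every} letter of the continuation, not just the first: $v[2]$ must occur in $\al(w[i_p+1:q])$ where $q=\nextpos_w(v[1],i_p+1)$, and so on recursively; the valid continuations are precisely the $\mas$ of $w[i_p+1:n]$ whose first letter lies in $\al(w[i_{p-1}+1:i_p])$, and a shortest \emph{absent} subsequence with only a constrained first letter need not be such an $\mas$, because deleting its first letter can leave an absent word. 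Concretely, take $w=333231$ and $u=3$, so $i_1=1$ and the window forces $v[1]=3$, landing at $q=2$. Your construction appends the canonical $\sas$ of $w[3:6]=3231$ read off ${\mathcal A}_w$ (namely $12$) and returns $v=312$ with $|v|=1+\depth(2)=3$; but $uv=3312$ is not an $\mas$ of $w$, since deleting $v[1]$ leaves $312$, which is absent from $w$. In fact $\sas(3231)=\{11,12,13,22\}$, none of which begins with a letter of $\al(w[2:2])=\{3\}$, so the constraints cascade ($v[2]=3$ and then $v[3]=3$ are forced) and the true minimum is $|v|=4$ (e.g.\ $v=3332$, giving the $\mas$ $33332$). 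So your method can both output invalid words and strictly undershoot the minimal length; relatedly, choosing the first letter to minimise $\depth(\nextpos_w(a,i_p+1))$ is not globally optimal, since optimality depends on which letters the successive landing windows contain, not only on arch depths, so the $\depth(i_p)$ bound need not be attained even when the first-letter constraint is satisfiable.

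The paper avoids this by never decoupling the tail from the windows: starting from the state $(j_0,j_1)$ returned by the prefix test, it iterates a window-greedy step, using $\RMQ_{\nextArray}(i_0+1,i_1)$ to pick, \emph{inside the current window}, the position whose next occurrence of the same letter lies farthest to the right, appending that letter, sliding the window so that it ends at the landing position, and terminating as soon as the selected letter has no later occurrence (\cref{alg:masExtend}); minimality is then established by an exchange argument comparing this greedy embedding with that of a hypothetical shorter (or equally short but earlier-diverging) extension. To salvage your plan you would have to prove a statement of this window-greedy type rather than reduce to unconstrained $\sasRange$ or arch-tree queries; the machinery of \cref{thm:rangeQueries} only yields the lower bound $\depth(i_p)$, which, as the example shows, the constrained problem does not always achieve.
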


\begin{proof}
Before we begin to show this theorem, we recall an important data structure allowing us to answer range maximum queries.
Let $A$ be an array with $n$ elements from a well-ordered set. We define {\em range maximum queries} $\RMQ_A$ for the array of $A$:
$\RMQ_A(i,j)=\argmax \{A[t]\mid t\in [i:j]\}$, for $i,j\in [1:n]$. That is, $\RMQ_A(i,j)$ is the position of the largest element in the subarray $A[i:j]$; if there are multiple positions containing this largest element, $\RMQ_A(i,j)$ is the leftmost of them. (When it is clear from the context, we drop the subscript $A$).

We will use the following result from \cite{rmq}.
\begin{lemma}\label[lemma]{RMQ}
Let $A$ be an array with $n$ integer elements. One can preprocess $A$ in $O(n)$ time and produce data structures allowing to answer in constant time {\em range maximum queries} $\RMQ_A(i,j)$, for any $i,j\in [1:n]$.
\end{lemma}

We now start the actual proof of our theorem and first describe the preprocessing phase.

\medskip
We first use \cref{lem:arrayX} to compute the arrays $\nextArray$ and $\prevArray$.

We define an $n\times \sigma$ matrix $X[\cdot][\cdot]$, where $X[i][a]=\nextpos_w(a,i+1)$, for all $i\in [0:n-1]$ and $a\in \Sigma$. This matrix can be computed in $O(n\sigma)$ time as follows. We go with $i$ from $1$ to $n$, and we set $X[j][w[i]]=i$ for all $j \in [\prevArray[i]:i-1]\cap[1:n]$. Clearly we can compute an array of size $n$ which contains, for all $i\in [1:n]$, the interval $[\prevArray[i]:i-1]\cap[1:n]$ in linear time.

We define an $n\times \sigma$ matrix $Y[\cdot][\cdot]$, where $Y[i][a]=\lastpos_w(a,i-1)$, for all $i\in [2:n+1]$ and $a\in \Sigma$. This matrix can be computed in $O(n\sigma)$ time trivially, as it equals to the matrix $X$ computed for the mirror image of the word $w$.

We also preprocess the array $\nextArray[\cdot]$ as in \cref{RMQ}, so that we can answer in constant time $\RMQ_{\nextArray}$-queries.

To answer a query $\masExtend(u)$, we proceed as follows.
If $|u|=0$, we choose a letter $a$ not occurring in $r(w)$ and simply return $m(w)a$ (where $m(w)$ denotes the concatenation of the last letters of the arches of $w$) (see the proof of \cref{lem:oneSAS}), which is clearly an $\sas$ of $w$.
If $|u|>0$, we run \cref{alg:isMASubseq,alg:masExtend}, and return the word $v$ computed by the latter.

The correctness of the approach above is explained in the following. For the rest of this proof, assume thus that $|u|>0$ and that $u$ is a subsequence of $w$.

\medskip
First, we run \cref{alg:isMASubseq}, which decides whether $u$ can be the prefix of an $\mas$ of $w$ or not. This is an extended and modified version of \cref{alg:isSubseq}, in which we also check the conditions from \cref{thm:mas}. Note again that we apply this algorithm under the assumption that $u$ is a subsequence of $w$.

\SetKw{KwAnd}{and}
\begin{algorithm}[!htb]
	\SetAlgoLined
	\KwIn{Word $w$ with $|w| = n$, word $u$ with $|u|=m$ to be tested}
	
	$j_0\gets 0; j_1 \gets X[0][u[1]]; i\gets 2;$ flag$\gets $ true\;
	\While{$i\leq m$ \KwAnd flag == true }{
	 	$j_2 \gets X[j_1][u[i]]$\;
	 	$j_3 \gets Y[j_2][u[i]]$\;
		\If{$j_3\notin [j_0+1:j_1]$}{
			flag $\gets $ false\;
		} \Else {
			$j_0\gets j_1; j_1\gets j_2; i\gets i+1 $\;
		}
	}
	
	\Return flag == true ? $(j_0,j_1)$ : $\uparrow$\;
	
	\caption{isMASPrefix($w$,$u$)}
	\label{alg:isMASubseq}
\end{algorithm}

If \cref{alg:isMASubseq} returns $\uparrow$, then $u$ cannot be the prefix of an $\mas$ of $w$, as it does not fulfil the conditions of \cref{thm:mas}.

\medskip
If \cref{alg:isMASubseq} returns a pair $(j_0,j_1)$, with $j_0<j_1$, then $u$ can be extended to an $\mas$ of $w$, $w[1:j_0]$ is the shortest prefix of $w$ which contains $u[1:m-1]$ and $w[1:j_1]$ is the shortest prefix of $w$ which contains $u[1:m]$. In particular, $u[m-1]=w[j_0]$ and $u[m]=w[j_1]$. We continue with the identification of the shortest string $v$ which we can append to $u$, based on \cref{thm:mas}, so that $uv$ becomes an $\mas$ of $w$.  We achieve this by \cref{alg:masExtend}.

\begin{algorithm}[!htb]
	\SetAlgoLined
	\KwIn{Word $w$ with $|w| = n$, two positions $j_0<j_1$ of $w$}
	
	$i_0\gets j_0; i_1 \gets j_1; i\gets 1;$ flag$\gets $ true\;
	\While{ flag == true }{
	 	$i_2 \gets \RMQ_{\nextArray}(i_0+1,i_1)$\;
		\If{$\nextArray[i_2]>n $} {
			$v[i]\gets w[i_2];$ flag $\gets $ false\;
		} \Else {
			$v[i]\gets w[i_2]; i_0\gets i_1; i_1\gets \nextArray[i_2]; i\gets i+1 $\;
		}
	}
	
	\Return $v$\;
	
	\caption{masExtend($w,j_0,j_1$)}
	\label{alg:masExtend}
\end{algorithm}

The idea of the algorithm is the following. In order to extend $u$ to a $\mas$ of $w$, by \cref{thm:mas}, we need to find a position $j$ between $i_0+1$ and $i_1$, such that $\nextArray[j]> i_1$ (note that such a $j$ always exists, as $\nextArray[i_1]>i_1$). Once $j$ is found, we can safely extend $u$ with $w[j]$, and repeat the procedure, with $i_0$ and $i_1$ updated so that they point to the positions of $w$ where the last two letters of the extended $u$ occur. To obtain the shortest word $v$ which extends $u$ to a $\mas$ our algorithm proceeds in a greedy manner, by choosing $j$ such that $\nextArray[j]$ is the largest in the subarray $\nextArray[i_0+1:i_1]$. We will show that this is correct.

\medskip
So, let us prove that our algorithm works correctly. Firstly, by the remarks we have already made, it is not hard to note that the string $v$ returned by \cref{alg:masExtend} has the property that $uv$ fulfils the requirements from \cref{thm:mas}, so $uv$ is an $\mas$. Now, for the correctness of the greedy part of the algorithm, we need to show that there is no other strictly shorter string $x$ such that $ux$ is an $\mas$.

\medskip
Assume, for the sake of the contradiction, that there exists a string $x$ such that $|x|<|v|$ and $ux$ is an $\mas$. Let us choose $x$ such that $ux$ is an $\mas$, whose length is minimal among all $\mas$ which start with $u$, and, among all such $\mas$ of minimal length, it shares the longest prefix with $uv$.

\medskip
As $|u|>0$ and $u$ itself is not an $\mas$ we have that, $|uv|=h>|ux|=\ell \geq 2$. Let $u_1=uv$ and $u_2=ux$. Let $0=i_0<i_1<\ldots <i_{h-1}$ be positions of $w$ such that $w[i_j]=u_1[j]$ for $j\in [1:h-1]$, as identified greedily by \cref{alg:isSubseq}. Let $0=g_0<g_1<\ldots <g_{\ell-1}$ be positions of $w$ such that $w[g_\ell]=u_2[j]$ for $j\in [1:\ell-1]$, as identified by \cref{alg:isSubseq}. First, let us note that there exists $f\leq \ell-1$ such that $g_f\neq i_f$.  Otherwise, $u_2[1:\ell-1]$ would be a prefix of $u_1$, and we would have a position $j$  in $[g_{\ell-1}:g_\ell]$ such that $\nextArray[j]=\infty$. So our algorithm would have also returned a word of length $\ell$ instead of $v$, a contradiction. Hence, we can choose the smallest $f$ such that $g_f\neq i_f$, and we know that $f\leq \ell-1$. By the way we choose $i_f$ in our algorithm, we have that $g_f<i_f$. Let $d$ be minimum such that $g_{f+d}>i_f$; clearly, $d\geq 1$. We can then replace the sequence of positions where the letters of $x$ appear in $w$ by the sequence $g_1<\ldots <g_{f-1}<i_f<g_{f+d}<\ldots <g_{\ell-1}$ and we note that these positions also fulfil the requirements of \cref{thm:mas}. The word $uw[g_1]\cdots w[g_{f-1}]w[i_f]w[g_{f+d}]...w[g_{\ell-1}] x[\ell]$ is therefore an $\mas$, which is either shorter than $ux$, or has the same length as $ux$ but shares a longer prefix with $uv$. This is a contradiction with the choice of $x$.

\medskip
So, our assumption was false, and it follows that the greedy approach used by \cref{alg:masExtend} produces a word $v$ of minimal length such that $uv$ is an $\mas$. Moreover, the complexity of answering a query is $O(|u|+|v|)$.
\end{proof}

It is worth noting that the lexicographically smallest $\mas$ of a word $w$ is $a^{|w|_a+1}$ where $a$ is the lexicographically smallest letter of $\Sigma$ which occurs in $w$.

\subsection*{Acknowledgements:} This is an extended version of the conference paper \cite{KoscheKMS21}. We thank the anonymous referees, both of the conference and the journal version of this work, for their valuable comments and suggestions. Last but not least, we thank Tina Ringleb and Maximilian Winkler for carefully proof-reading this paper.

               %%%\nocite{*}\bibliographystyle{fundam}\bibliography{references}

\end{document}